\title{$\widetilde{O}(n^{1/3})$-Space Algorithm for the Grid Graph Reachability Problem}
\author{
Ryo Ashida \thanks{Department of Mathematical and Computing Science, Tokyo Institute of Technology. E-mail: ashida1@is.titech.ac.jp}
\and
Kotaro Nakagawa \thanks{JMA SYSTEMS Corporation. E-mail: kootaroonakagawa@gmail.com}
}
\newcommand{\eps}{\varepsilon}
\newcommand{\tldO}{\widetilde{O}}
\newcommand{\newwd}[1]{{\em#1}}
\newtheorem{definition}{Definition}
\newtheorem{theorem}{Theorem}
\newtheorem{lemma}{Lemma}
\newcommand{\rimV}{V^{\rm rim}_{0}}
\newcommand{\blcG}{G_{0}}
\newcommand{\blcV}{V_{0}}
\newcommand{\blcE}{E_{0}}
\newcommand{\blcC}{C_{0}}
\newcommand{\cirblcG}{\blcG^{\rm cir}}
\newcommand{\cirblcV}{\blcV^{\rm cir}}
\newcommand{\cirblcE}{\blcE^{\rm cir}}
\newcommand{\cC}{C^{\rm cl}}
\newcommand{\aC}{C^{\rm acl}}
\newcommand{\gadG}{\widetilde{G}}
\newcommand{\gadV}{\widetilde{V}}
\newcommand{\gadE}{\widetilde{E}}
\newcommand{\gadK}{\widetilde{K}}
\newcommand{\gadL}{\widetilde{L}}
\newcommand{\gadVout}{\widetilde{V}^{\rm out}}
\newcommand{\gadplG}{\widetilde{G}_{\sf p}}
\newcommand{\gadplV}{\widetilde{V}_{\sf p}}
\newcommand{\gadplVin}{\widetilde{V}_{\sf p}^{\rm in}}
\newcommand{\gadplVout}{\widetilde{V}_{\sf p}^{\rm out}}
\newcommand{\gadplE}{\widetilde{E}_{\sf p}}
\newcommand{\gadplK}{\widetilde{K}_{\sf p}}
\newcommand{\gadplL}{\widetilde{L}_{\sf p}}
\newcommand{\lin}{\ell_{in}}
\newcommand{\lout}{\ell_{out}}
\date{}
\begin{document}

\maketitle

\begin{abstract}
The directed graph reachability problem takes
as input an $n$-vertex directed graph $G=(V,E)$,
and two distinguished vertices $s$ and $t$.
The problem is to determine whether there exists a path
from $s$ to $t$ in $G$.
This is a canonical complete problem for class NL.
Asano et al. proposed
 an $\widetilde{O}(\sqrt{n})$ space\footnote{In this paper ``$\tldO(s(n))$ space'' means $O(s(n))$ words intuitively and precisely $O(s(n)\log n)$ space.} and polynomial time algorithm
 for the directed grid and planar graph reachability problem.
The main result of this paper is to show  that
the directed graph reachability problem restricted to grid graphs can be solved in polynomial time using only $\widetilde{O}(n^{1/3})$ space.
 \end{abstract}

\section{Introduction}
The graph reachability problem,
for a graph $G=(V,E)$ and two distinct vertices $s,t\in V$,
is to determine whether there exists a path from $s$ to $t$.
This problem characterizes many important complexity classes.
The {\it directed graph reachability problem} is a canonical complete problem
for the nondeterministic log-space class, NL.
Reingold showed that the {\it undirected graph reachability problem} characterizes the deterministic log-space class, L\cite{reingold2008undirected}.
As with P vs. NP problem,
whether L=NL or not is a major open problem.
This problem is equivalent to whether the directed graph reachability problem is solvable in deterministic log-space.
There exist two fundamental solutions for the directed graph reachability problem,
breadth first search, denoted as BFS, and Savitch's algorithm.
BFS runs in $O(n)$ space and $O(m)$ time,
where $n$ and $m$ are the number of vertices and edges, respectively.
For Savitch's algorithm,
we use only $O(\log^2 n)$ space
but require $\Theta(n^{\log n})$ time.
BFS needs short time but large space.
Savitch's algorithm uses small space but super polynomial time.
A natural question is whether we can make an efficient deterministic algorithm in both space and time for the directed graph reachability problem.
In particular,
Wigderson proposed a problem that does there exist an algorithm for the directed graph reachability problem that uses polynomial time and $O(n^{\eps})$ space, for some $\eps < 1$? \cite{wigderson1992complexity},
and this question is still open.
The best known polynomial time algorithm, shown by Barns, Buss, Ruzzo and Schieber, uses $O(n/2^{\sqrt{\log n}})$ space  \cite{barnes1998sublinear}.

For some restricted graph classes,
better results are known.
Stolee and Vinodchandran showed that
for any $0 < \eps < 1$,
the reachability problem for directed acyclic graph with $O(n^{\eps})$ sources and 
embedded on a surface with $O(n^{\eps})$ genus can be solved in polynomial time and $O(n^{\eps})$ space \cite{stolee2012space}.
A natural and important restricted graph class is the class of planar graphs.
The {\it planar graph reachability problem} is
in the unambiguous log-space class, UL \cite{bourke2009directed}, which is a subclass of NL.
Imai et al. gave an algorithm using $O(n^{1/2+\eps})$ space and polynomial time for the planar graph reachability problem \cite{asano2011memory,imai2013n12+}.
Moreover
Asano et al. devised a efficient way to control the recursion, and
proposed a polynomial time and $\tldO(\sqrt{n})$ space algorithm  for the planar graph reachability problem \cite{asano2014widetilde}.
In this paper,
we focus on the {\it grid graph reachability problem}, where grid graphs are special cases of planar graphs.
Allender et al. showed  the planar graph reachability problem is log-space reducible to the grid graph reachability problem \cite{allender2009planar}.
By using the algorithm of Asano et al.,
we can solve the grid graph reachability problem
in $\tldO(\sqrt{n})$ space and polynomial time.
The main result of this paper is to show an $\tldO(n^{1/3})$ space and polynomial time algorithm for the directed grid graph reachability problem.

\begin{theorem}[\cite{asano2014widetilde}]\label{thm:planar-reach}
	There exists an algorithm that decides directed planar graph reachability in polynomial time and $\tldO(\sqrt{n})$ space.
	$($We refer to this algorithm by {\sf PlanarReach} in this paper.$)$
\end{theorem}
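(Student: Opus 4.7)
The plan is to use a divide-and-conquer approach based on the planar separator theorem: every $n$-vertex planar graph admits a (cycle) separator of size $O(\sqrt{n})$ whose removal yields two subgraphs of size at most $2n/3$ each. Given such a separator $S$, any $s$-to-$t$ path either lies entirely inside one of the subgraphs $G \setminus S$, or visits $S$ at some point. Hence reachability in $G$ reduces to reachability in a \emph{summary graph} $H$ on vertex set $\{s,t\} \cup S$ whose edges record, for each ordered pair $(u,v)$, whether $u$ can reach $v$ using only interior vertices of one of the two sides of the separator.

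To carry this out in small space I would (i) compute the separator $S$ on the fly, storing its $O(\sqrt{n})$ vertices explicitly, (ii) for each ordered pair in $(\{s,t\}\cup S)^{2}$ recursively invoke the algorithm on the corresponding side of $G \setminus S$ to decide whether the summary-graph edge is present, and (iii) perform BFS on $H$, which has $O(\sqrt{n})$ vertices and can be traversed in $\tldO(\sqrt{n})$ space. Crucially, edges of $H$ are never stored simultaneously: whenever BFS asks whether $(u,v)$ is an edge, the recursive reachability call is re-executed from scratch, so only one branch of the recursion tree is alive at a time.

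For the space analysis, the level-$k$ subproblem has at most $(2/3)^{k}n$ vertices and its separator contributes $O(\sqrt{(2/3)^{k}n})$ words to the active stack. Summing this geometric series over the $O(\log n)$ levels of recursion gives $O(\sqrt{n})$ words for all separators together, plus an $O(\log n)$ factor for bookkeeping of recursion frames, yielding $\tldO(\sqrt{n})$ overall. The running time remains polynomial because each level contributes a polynomial blow-up and the recursion depth is logarithmic.

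The main obstacle will be the separator computation itself: the classical Lipton--Tarjan routine uses linear space, so it cannot be invoked as a black box. I would need a space-efficient procedure that outputs the vertices of a balanced cycle separator within $\tldO(\sqrt{n})$ space, likely by exploiting a BFS-layer decomposition in the planar embedding (as in Miller's simple-cycle separator) and enumerating candidate separating cycles implicitly rather than materialising the whole BFS tree. A secondary obstacle is ensuring that the recursive re-execution used to query edges of $H$ does not accumulate hidden state across calls; this requires designing each invocation so that it is a pure function of a short description of the current subproblem (the ancestor separators that bound it), which is itself only $\tldO(\sqrt{n})$ bits.
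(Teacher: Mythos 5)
Your high-level strategy is the same one the paper attributes to {\sf PlanarReach} and sketches as Algorithm~\ref{alg:planarreach} in Section~4: compute a balanced separator $S$ of size $O(\sqrt{n})$, reduce $s$-$t$ reachability in $G$ to reachability among $\{s,t\}\cup S$ via per-side recursive calls, and answer each reachability query by re-executing the recursion rather than storing answers. Your space accounting is also essentially right: the geometric sum $\sum_k O\bigl(\sqrt{(2/3)^k n}\bigr)=O(\sqrt{n})$ over the live recursion stack, plus $O(\log n)$ overhead per frame, gives $\tldO(\sqrt{n})$.

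The gap is the running time, and it is not a minor one. Your justification --- ``each level contributes a polynomial blow-up and the recursion depth is logarithmic'' --- does not yield polynomial time. A BFS on the summary graph $H$ issues $\Theta(|S|^2)=\Theta(n)$ edge queries, each of which re-executes the algorithm on a side of size at most $2n/3$; the recurrence is therefore of the form $T(n)\ge c\,n\cdot T(2n/3)$, which with recursion depth $\Theta(\log n)$ unrolls to $T(n)=n^{\Theta(\log n)}$, i.e.\ quasi-polynomial. (The paper's own pseudocode has the same flavor and explicitly flags that ``in the actual algorithm, we have to control the recursion more carefully.'') This is exactly the distinction between the predecessor result of Imai et al.\ --- $O(n^{1/2+\eps})$ space, obtained by truncating the recursion after $O(1/\eps)$ levels so that the time stays polynomial --- and the $\tldO(\sqrt{n})$-space theorem you are asked to prove. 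The substantive contribution of Asano et al., and the missing idea in your sketch, is a mechanism for controlling the full $\Theta(\log n)$-deep recursion so that the total number of re-executions remains polynomially bounded. Without supplying that mechanism, the proposal establishes the space bound but not the polynomial-time bound. The difficulty of computing a separator in $\tldO(\sqrt{n})$ space, which you do flag, is real but secondary by comparison.
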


\section{Preliminaries and an Outline of the Algorithm}
We will use the standard notions and notations for algorithms, complexity measures, and graphs without defining them.
We consider mainly directed graphs, and a graph is assumed to be a directed graph unless it is specified as a undirected graph. 
Throughout this paper, for any set $X$, $|X|$ denotes the number of elements in $X$.
We refer to the maximum and minimum elements of $X$
as $\max X$ and $\min X$, respectively.
Consider any directed graph $G = (V, E)$.
For any $u, v \in V$, a directed edge $e$ from $u$ to $v$ is denoted as $e = (u, v)$;
on the other hand,
the tail $u$ and the head $v$ of $e$ are denoted as $t(e)$ and $h(e)$, respectively.
For any $U\subseteq V$, let $G[U]$ denote the subgraph of $G$ induced by $U$.

Recall that a grid graph is a graph whose vertices are located on grid points, and whose vertices are adjacent only to their immediate horizontal or vertical neighbors.
We refer to a vertex on the boundary of a grid graph as a \newwd{rim vertex}.
For any grid graph $G$,
we denote the set of the rim vertices of $G$ as $R_{G}$.

\medskip\noindent
\underline{Computational Model}

\medskip\noindent
For discussing sublinear-space algorithms formally,
we use the standard multi-tape Turing machine model.
A multi-tape Turing machine consists of
a read-only input tape,
a write-only output tape,
and a constant number of work tapes.
The space complexity of this Turing machine is measured
by the total number of cells that can be used as its work tapes.

For the sake of explanation,
we will follow a standard convention and give a sublinear-space algorithm
by a sequence of constant number of sublinear-space subroutines
$A_1,\ldots, A_k$
such that each $A_i$ computes, from its given input,
some output that is passed to $A_{i+1}$ as an input.
Note that
some of these outputs cannot be stored in a sublinear-size work tape;
nevertheless,
there is a standard way
to design a sublinear-space algorithm based on these subroutines.
The key idea is
to compute intermediate inputs every time when they are necessary.
For example,
while computing $A_i$,
when it is necessary to see the $j$th bit of the input to $A_i$,
simply execute $A_{i-1}$ (from the beginning)
until it yields the desired $j$th bit on its work tape,
and then resume the computation of $A_i$ using this obtained bit.
It is easy to see that
this computation can be executed in sublinear-space.
Furthermore,
while a large amount of extra computation time is needed,
we can show that
the total running time can be polynomially bounded
if all subroutines run in polynomial-time.

\medskip\noindent
\underline{Outline of the Algorithm}

\medskip\noindent
We show the outline of our algorithm. 
Our algorithm uses the algorithm {\sf PlanarReach} for the planar graph reachability.
We assume both $\sqrt{n}$ and $n^{1/3}$ are integers for simplicity.
Let $G$ be an input $\sqrt{n}\times\sqrt{n}$ grid graph with $n$ vertices.
\begin{enumerate}
	\item Separate $G$ into $n^{1/3}\times n^{1/3}$ small grid graphs, or ``blocks''.
	There are $n^{1/3}$ blocks, and each block contains $n^{2/3}$ vertices.
	\item Transform each block $B$ into a special planar graph, ``gadget graph'', with $O(n^{1/3})$ vertices. The reachability among the vertices in $R_{B}$ should be unchanged. The total number of vertices in all blocks becomes $O(n^{2/3})$.
	\item We apply the algorithm {\sf PlanarReach} to the transformed graph of size $O(n^{2/3})$, then the reachability is computable in $\tldO\left(\sqrt{n^{2/3}}\right) = \tldO(n^{1/3})$ space.
\end{enumerate}

In step 1 and 2, we reduce the number of vertices in the graph $G$ while keeping the reachability between the rim vertices of each block so that we can solve the reachability problem of the original graph.
Then to this transformed graph we apply {\sf PlanarReach} in step 3, which runs in $\tldO(n^{1/3})$ space.

\begin{theorem}\label{thm:main}
There exists an algorithm that computes the grid graph reachability in polynomial-time and $\tldO(n^{1/3})$ space.
\end{theorem}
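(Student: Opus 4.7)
The proof follows the three-step outline already given. The plan is to reduce the input $\sqrt{n}\times\sqrt{n}$ grid graph $G$ to a planar graph on $O(n^{2/3})$ vertices whose reachability queries coincide with those of $G$, and then invoke {\sf PlanarReach} from Theorem~\ref{thm:planar-reach} on this reduced graph. The two invocations of {\sf PlanarReach} will be on graphs of size $n^{2/3}$, each of which fits in $\tldO(\sqrt{n^{2/3}})=\tldO(n^{1/3})$ space.

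First I would fix the block decomposition: partition the grid into $n^{1/3}$ blocks of size $n^{1/3}\times n^{1/3}$. This decomposition is implicit, costing $O(\log n)$ bits per block index, so iterating over blocks and identifying their rim vertex sets $R_B$ uses negligible space. Adjacent blocks share their boundary vertices, so each rim vertex belongs to a constant number of blocks. Next I would specify what a gadget graph $\gadG$ associated with a block $B$ must satisfy: (i) $\gadG$ is planar with $O(n^{1/3})$ vertices; (ii) for each $v\in R_B$ it contains distinguished copies $v^{\rm in}$ and $v^{\rm out}$; and (iii) for all $u,v\in R_B$, a path from $u^{\rm in}$ to $v^{\rm out}$ exists in $\gadG$ iff a path from $u$ to $v$ exists in $B$. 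Gluing the gadgets of neighboring blocks along their shared rim vertices then yields a planar graph on $O(n^{1/3})\cdot n^{1/3}=O(n^{2/3})$ vertices in which reachability between any two original vertices coincides with reachability in $G$.

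To construct $\gadG$ for a single block $B$, I would exploit that $B$ itself is a planar graph on $n^{2/3}$ vertices, so its rim-to-rim reachability relation is computable by a single call to {\sf PlanarReach}, using $\tldO(n^{1/3})$ space. Since the rim vertices of $B$ lie on the outer face in a fixed cyclic order, I would then realize this reachability relation by a planar ``circular'' graph with $O(n^{1/3})$ vertices, introducing internal junction vertices in a controlled way to route paths without crossings. This step is the main obstacle and the combinatorial heart of the argument: one must show that the rim-to-rim reachability of a planar grid block can always be encoded by a planar gadget whose total size is linear in the rim. I expect the argument to rely on the non-crossing structure of paths between boundary vertices of a planar graph, merging pairs of paths that share a common source or sink into a single internal vertex so that each added internal vertex kills at least one potential crossing.

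Finally I would invoke {\sf PlanarReach} on the assembled graph, which runs in $\tldO(n^{1/3})$ space and polynomial time because the graph has $O(n^{2/3})$ vertices. The three subroutines are composed using the sublinear-space pipeline described in the preliminaries: the assembled graph is never stored explicitly, and whenever {\sf PlanarReach} queries a vertex or edge of it, the gadget construction for the relevant block is re-executed from scratch, which in turn invokes {\sf PlanarReach} on that single block of size $n^{2/3}$. Both levels fit in $\tldO(n^{1/3})$ space; since the nesting depth is constant and every subroutine runs in polynomial time, the overall running time remains polynomial, establishing Theorem~\ref{thm:main}.
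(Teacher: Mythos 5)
Your outline matches the paper's three-step plan (block decomposition, per-block compression, final call to {\sf PlanarReach}), and your observation that the rim-to-rim reachability of a block can be computed by a single call to {\sf PlanarReach} in $\tldO(n^{1/3})$ space is exactly Lemma~\ref{lem:trans_circle}. But the proposal has a genuine gap at the step you yourself flag as ``the main obstacle'': you assert that the circle graph on the rim vertices can be replaced by an ordinary planar graph on $O(n^{1/3})$ vertices, with internal junction vertices, such that \emph{plain graph reachability} in the gadget coincides with reachability in the block. That is not achievable in general, and the proposal gives no mechanism that would make it so. The circle graph $\cirblcG$ can have $\Theta(n^{2/3})$ chords on $\Theta(n^{1/3})$ rim vertices; to get the vertex count down to $O(n^{1/3})$ one must bundle many chords through shared internal vertices. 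Whenever two chords $(u_1,v_1)$ and $(u_2,v_2)$ are routed through a common junction $w$ without crossing each other (for instance, two parallel chords both crossing the same separator chord), the resulting plain graph contains the spurious path $u_1\to w\to v_2$ even though $(u_1,v_2)$ may not be in $\cirblcE$. Your suggestion of ``merging pairs of paths that share a common source or sink'' does not cover this case, and no purely combinatorial merging can, because the information about \emph{which} pairs of endpoints are actually connected would then be lost at the junction.

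This is precisely the difficulty the paper's gadget-graph machinery is built to solve: edges carry a level function $\gadL$ (label sets of in-level/out-level pairs) and a path function $\gadK$, and ``reachability'' is redefined as the existence of a token tour whose levels are non-increasing across junctions (Definitions~\ref{def:gadget} and~\ref{def:reachable}). The bulk of the paper — Algorithms~\ref{alg:trans_gadget}--\ref{alg:calc_level}, and Lemmas~\ref{lem:main_lem1} and~\ref{lem:main_lem2} with their supporting Lemmas~\ref{lem:in_out}--\ref{lem:adjacent} — is devoted to choosing the levels so that token-tour reachability in the compressed gadget exactly matches chord reachability in $\cirblcG$, in both directions. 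A corollary you also miss is that the final invocation of {\sf PlanarReach} is not applied to an ordinary graph at all: it must be modified (Section~4, Algorithm~\ref{alg:reach_gadget}) to propagate maximal attainable levels, Bellman-Ford-style, rather than Boolean reachability bits, and to respect the $\gadK$ constraint at {\sf MakePlanar} vertices. Without the label mechanism and the corresponding modification of {\sf PlanarReach}, the reduction would report false positives, so the proposal as written does not establish Theorem~\ref{thm:main}.
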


The start vertex $s$ (resp., the end vertex $t$) may not be on the rim of any block.
In such a situation,
we make an additional block so that $s$ (resp., $t$) would be on the rim of the block.
This operation would not increase the time and space complexity.
In this paper, we assume that $s$ (resp., $t$) is on the rim of some block.

\section{Graph Transformation}
In this section,
we explain an algorithm that modifies each block
and analyze time and space complexity of the algorithm.
Throughout this section,
we let a directed graph $\blcG=(\blcV,\blcE)$
denote a block of the input grid graph,
and let $\rimV$ denote the set of its rim vertices.
We use $N$ to denote the number of vertices of the input grid graph and
$n$  to denote $|\rimV|$, which is $O(N^{1/3})$;
note, on the other hand,
that we have $|\blcV|=O(n^2)=O(N^{2/3})$.
Our task is to transform this $\blcG$ to a plane ``gadget graph'',
an augmented plane graph, $\gadplG$ with $O(n) = O(N^{1/3})$ vertices including $\rimV$
so that
the reachability among vertices in $\rimV$ on $\blcG$
remains the same on $\gadplG$.

There are two steps for this transformation.
We first transform $\blcG$ to a circle graph $\cirblcG$,
and then obtain $\gadplG$ from the circle graph.

\subsection{Circle Graph}

\begin{figure}
\begin{center}
\includegraphics[width= 14 cm]{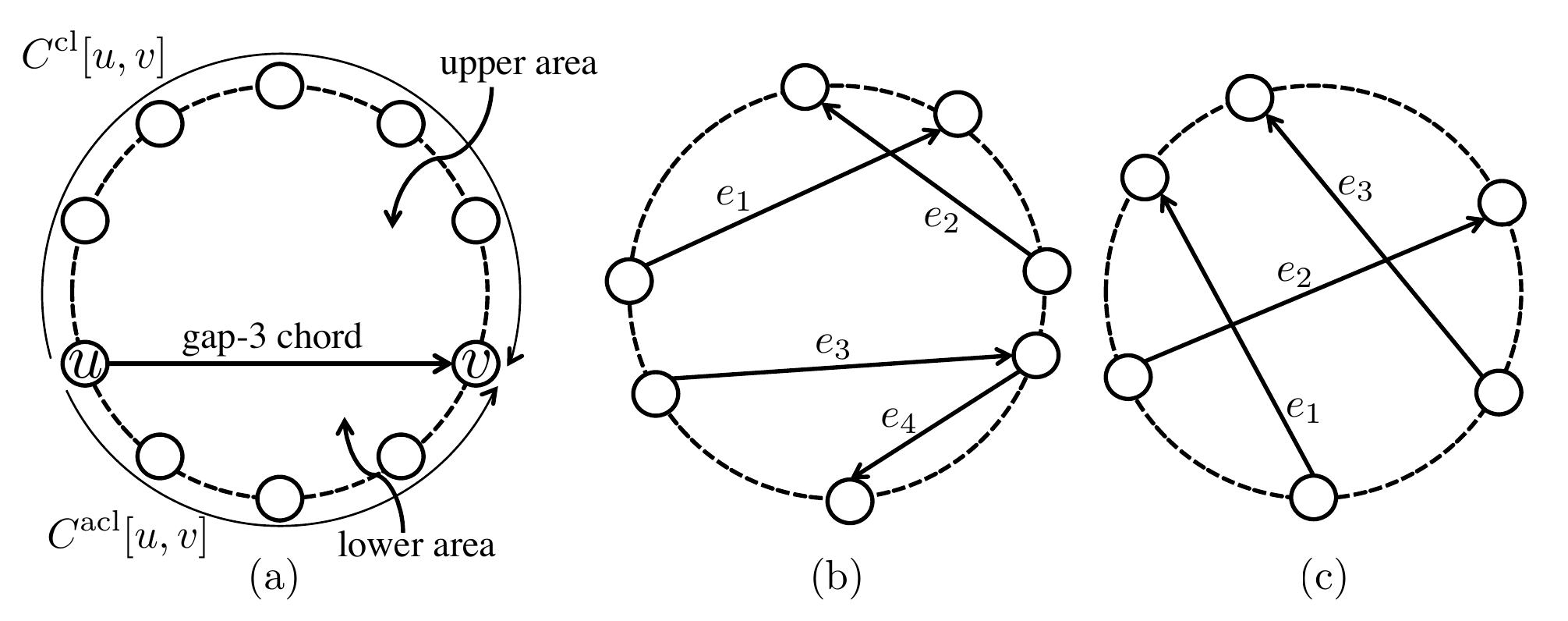}
\caption{An example of the notions on chords. (a) a figure showing a chord, arcs, a lower area, an upper area,
(b) a figure showing crossing chords ($e_1$ and $e_2$) and semi-crossing chords ($e_3$ and $e_4$) and (c) separating chords ($e_3$ separates $e_1$ and $e_2$).}
\label{fig:chord_arc}
\end{center}
\end{figure}

We introduce the notion of ``circle graph''.
A \newwd{circle graph} is
a graph embedded on the plane
so that all its vertices are placed on a cycle
and all its edges are drawn inside of the cycle.
Note that
a circle graph
may not have an edge
between a pair of adjacent vertices on the cycle.
We introduce some basic notions on circle graphs.
Consider any circle graph $G=(V,E)$,
and let $C$ be a cycle on which all vertices of $V$ are placed.
For any $u,v\in V$,
a \newwd{clockwise tour} (resp., \newwd{anti-clockwise tour})
is a part of the cycle $C$ from $u$ to $v$ in a clockwise direction
(resp., in an anti-clockwise direction).
We use $\cC[u,v]$ (resp., $\aC[u,v]$) to denote this tour (Figure~\ref{fig:chord_arc}(a)).
When we would like to specify the graph $G$,
we use $\cC_{G}[u,v]$ (resp., $\aC_{G}[u,v]$).
The tour $\cC[u,v]$,
for example,
can be expressed canonically
as a sequence of vertices $(v_1,\ldots,v_k)$
such that $v_1=u$, $v_k=v$, and $v_2,\ldots,v_{k-1}$ are all vertices
visited along the cycle $C$ clockwise.
We use $\cC(u, v)$ and $\cC[u, v)$ (resp., $\aC(u, v)$ and $\aC[u, v)$) to denote the sub-sequences
$(v_2, \ldots, v_{k-1})$ and $(v_1,\ldots, v_{k-1})$ respectively.
Note here that
it is not necessary that
$G$ has an edge between adjacent vertices in such a tour.
The length of the tour is simply the number of vertices on the tour.
An edge $(u,v)$ of $G$ is called a \newwd{chord}
if $u$ and $v$ are not adjacent on the cycle $C$.
For any chord $(u,v)$,
we may consider two arcs,
namely,
$\cC[u,v]$ and $\aC[u,v]$;
but in the following,
we will simply use $C[u,v]$
to denote one of them
that is regarded as the arc of the chord $(u,v)$ in the context.
When necessary,
we will state, e.g., ``the arc $\cC[u,v]$''
for specifying which one is currently regarded as the arc.
A \newwd{gap-$d$} (resp., \newwd{gap-$d^+$}) \newwd{chord}
is a chord $(u,v)$ whose arc $C[u,v]$ is of length $d+2$
(resp., length $\ge d+2$).
For any chord $(u,v)$,
the subplane inside of the cycle $C$
surrounded by the chord $(u,v)$ and the arc $C[u,v]$
is called the \newwd{lower area} of the chord;
on the other hand,
the other side of the chord within the cycle $C$
is called the \newwd{upper area} (see Figure~\ref{fig:chord_arc}(a)).
A \newwd{lowest gap-$d^+$} chord
is a gap-$d^+$ chord that has no other gap-$d^+$ chord in its lower area.
We say that
two chords $(u_1,v_1)$ and $(u_2,v_2)$ \newwd{cross}
if they cross in the circle $C$ in a natural way
(see Figure~\ref{fig:chord_arc}(b)).
Formally,
we say that
$(u_1,v_1)$ {\em crosses} $(u_2,v_2)$
if either
(i) $u_2$ is on the tour $\cC(u_1,v_1)$ and
$v_2$ is on the tour $\aC(u_1,v_1)$,
or
(ii) $v_2$ is on the tour $\cC(u_1,v_1)$ and
$u_2$ is on the tour $\aC(u_1,v_1)$.
Also,
we say that $(u_1,v_1)$ \newwd{semi-crosses} $(u_2,v_2)$
if either
(i) $u_2$ is on the tour $\cC[u_1,v_1]$ and
$v_2$ is on the tour $\aC[u_1,v_1]$,
or
(ii) $v_2$ is on the tour $\cC[u_1,v_1]$ and
$u_2$ is on the tour $\aC[u_1,v_1]$
(see Figure~\ref{fig:chord_arc}(b)).
Note that clearly crossing implies semi-crossing.
In addition,
we say that
a chord $(u_1, v_1)$ \newwd{separates} two chords $(u_2, v_2)$ and $(u_3, v_3)$
if the endpoints of two chords $v_2$ and $v_3$ are separated by the chord $(u_1, v_1)$ (see Figure~\ref{fig:chord_arc}(c)).
Formally,
$(u_1, v_1)$ {\em separates} $(u_2, v_2)$ and $(u_3, v_3)$
if either
(i) $v_2$ is on the tour $\cC[u_1,v_1]$ and
$v_3$ is on the tour $\aC[u_1,v_1]$,
or
(ii) $v_3$ is on the tour $\cC[u_1,v_1]$ and
$v_2$ is on the tour $\aC[u_1,v_1]$.
We say that $k$ chords $(u_1, v_1)$, $(u_2, v_2),\ldots,(u_k, v_k)$ are \newwd{traversable}
if the following two conditions are satisfied:
\begin{enumerate}
	\item $(u_1, v_1)$ semi-crosses $(u_2, v_2)$,
	\item $\forall i \in [3, k]$, $\exists p, q < i$, $(u_i, v_i)$ separates $(u_p, v_p)$ and $(u_q, v_q)$.
\end{enumerate}

Now for the graph $\blcG=(\blcV,\blcE)$,
we define the circle graph $\cirblcG=(\cirblcV,\cirblcE)$ by

\[
\begin{array}{lcl}
\cirblcV&=&\rimV,{\rm~and}\\
\cirblcE
&=&
\bigr\{\,(u,v)\,|\,
 \mbox{$\exists$path from $u$ to $v$ in $\blcG$}\,\bigl\},
\end{array}
\]

where we assume that
the rim vertices of $\cirblcV$ ($=\rimV$) are placed on a cycle $\blcC$
as they are on the rim of the block in the grid graph.
Then it is clear that
$\cirblcG$ keeps
the same reachability relation among vertices in $\cirblcV=\rimV$.
Recall that
$\blcG$ has $O(n^2)$ vertices.
Thus,
by using {\sf PlanarReach},
we can show the following lemma.

\begin{lemma}\label{lem:trans_circle}
$\cirblcG$ keeps
the same reachability relation among vertices in $\cirblcV=\rimV$.
That is,
for any pair $u,v$ of vertices of $\cirblcV$,
$v$ is reachable from $u$ in $\cirblcG$
if and only if it is reachable from $u$ in $\blcG$.
There exists an algorithm
that transforms $\blcG$ to $\cirblcG$
in $O(n)$-space and polynomial-time in $n$.
\end{lemma}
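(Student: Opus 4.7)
The plan is to handle the two claims of the lemma separately: the correctness of the reachability preservation, and the space/time complexity of the transformation algorithm.

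For correctness, I would simply unpack the definition of $\cirblcE$. By construction, $(u,v)\in\cirblcE$ iff $v$ is reachable from $u$ in $\blcG$. Thus, if $v$ is reachable from $u$ in $\blcG$ (with $u,v\in\rimV$), then $(u,v)$ is already a single edge in $\cirblcG$, giving trivial reachability. Conversely, if there is a path $u=w_0,w_1,\ldots,w_k=v$ in $\cirblcG$, then each $(w_i,w_{i+1})$ is, by definition, witnessed by some path in $\blcG$; concatenating these paths yields a path from $u$ to $v$ in $\blcG$. This direction does not even require $w_1,\ldots,w_{k-1}$ to be among the rim vertices (they are, by construction of $\cirblcV$).

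For the algorithm, the idea is to enumerate candidate edges: for every ordered pair $(u,v)\in\rimV\times\rimV$, decide whether $v$ is reachable from $u$ in $\blcG$, and output $(u,v)$ iff the answer is yes. Since $\blcG$ is a subgraph of a grid graph, it is planar, so Theorem~\ref{thm:planar-reach} applies. Using the bound $|\blcV|=O(n^{2})$, invoking {\sf PlanarReach} on $\blcG$ costs $\tldO(\sqrt{|\blcV|})=\tldO(n)$ space and polynomial time in $n$. The outer enumeration loop costs only $O(\log n)$ additional space to store the current pair $(u,v)$, and the output is produced edge-by-edge in the streaming fashion described in the paper's computational model. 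Composing these, the overall space is $\tldO(n)$ (i.e.\ $O(n)$ words) and the time is polynomial in $n$.

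The main obstacles are mild. One is simply checking that {\sf PlanarReach} can be legitimately invoked with $\blcG$ as input: this requires ensuring that when the outer routine needs the $j$th bit of $\blcG$'s description, it can be produced from the original block under the standard compose-subroutines convention, which is immediate because $\blcG$ is a known $O(n)\times O(n)$ subgrid of the input. A second minor point is making precise what we mean by ``transforms $\blcG$ into $\cirblcG$'': we should interpret the output as the adjacency list of $\cirblcG$, produced online, so that subsequent subroutines can access any bit of it by rerunning the enumeration with fresh calls to {\sf PlanarReach}; no single copy of $\cirblcE$ is ever stored. Once this is set up, the space bound follows directly from the $\tldO(\sqrt{|\blcV|})$ guarantee of Theorem~\ref{thm:planar-reach}.
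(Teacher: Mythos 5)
Your proposal is correct and matches the paper's approach, which the paper itself only sketches in the two sentences preceding the lemma (the first direction of reachability preservation is immediate from the definition of $\cirblcE$, the converse follows by concatenating witnessing paths, and the algorithm simply enumerates pairs in $\rimV\times\rimV$ and calls {\sf PlanarReach} on $\blcG$, giving $\tldO(\sqrt{|\blcV|})=\tldO(n)$ space since $|\blcV|=O(n^2)$). Your fleshing out of the streaming/recomputation details for composing subroutines is consistent with the computational-model convention stated earlier in the paper.
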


The notion of traversable is a key for discussing the reachability on $\cirblcG$.
Based on the following lemma,
we use a traversable sequence of edges for characterizing the reachability on the circle graph $\cirblcG$.

\begin{figure}
\begin{center}
\includegraphics[width= 10 cm]{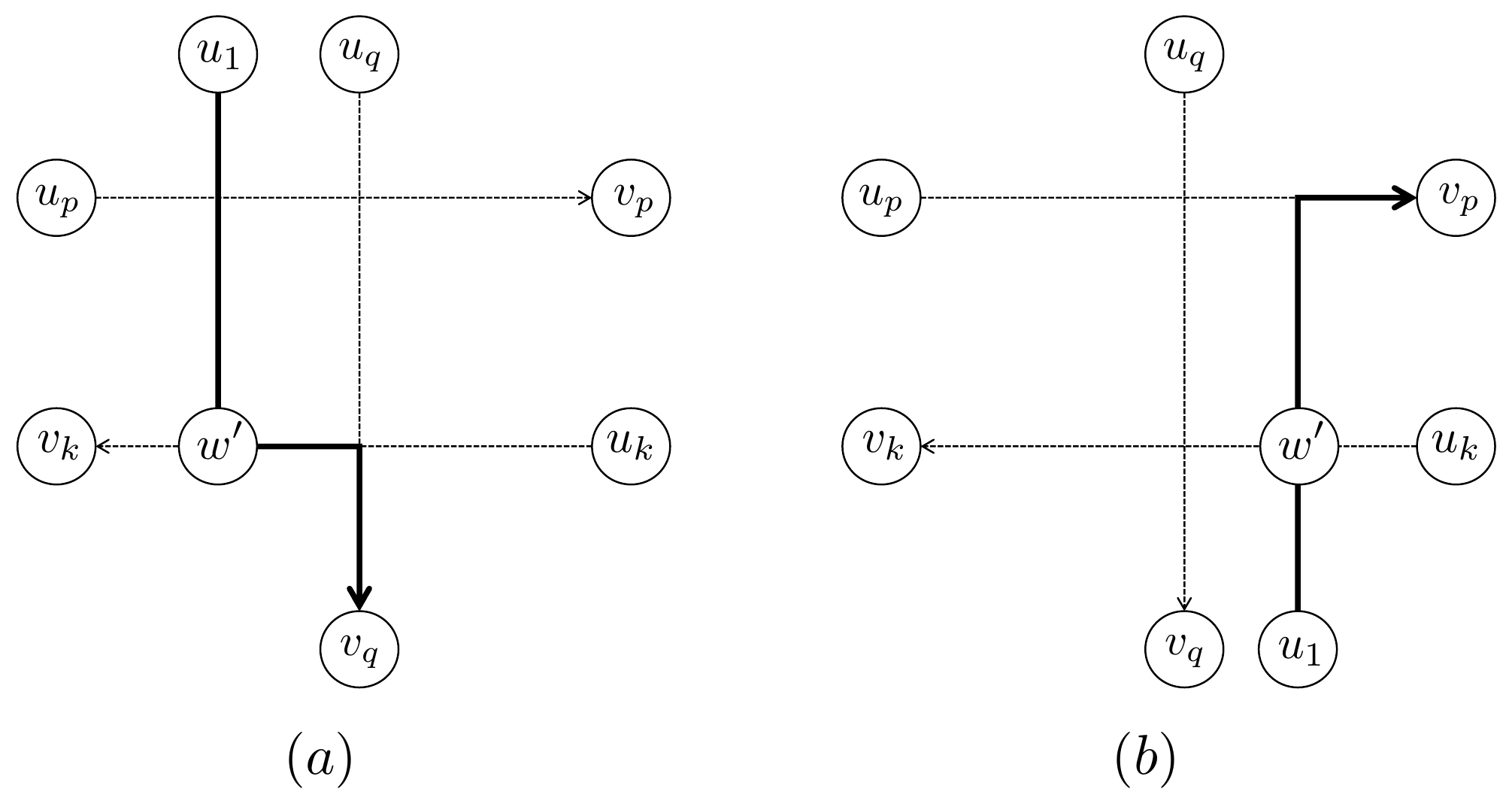}
\caption{A common vertex $w'$ of a path from $u_k$ to $v_k$ and a path from $u_1$ to $v_{q}$ or $v_{p}$ for some $p, q < k$.}
\label{fig:plane_cross}
\end{center}
\end{figure}

\begin{lemma}\label{lem:cross_reach}
	For a circle graph $\cirblcG = (\cirblcV, \cirblcE)$ obtained from a block grid graph $\blcG$, if there are traversable edges $(u_1, v_1)$, $(u_2, v_2),\ldots,(u_k, v_k) \in \cirblcE$, then $(u_1, v_k) \in \cirblcE$.
\end{lemma}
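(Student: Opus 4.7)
The plan is to prove the lemma by induction on $k$, with the engine being a discrete Jordan-curve fact tailored to the block $\blcG$. That fact, which I would isolate first, reads: if $P$ is any simple $a$-to-$b$ path in $\blcG$ between two rim vertices, then $P$ together with one of the rim arcs joining $a$ to $b$ is a simple closed curve in the plane, and any path $Q$ in $\blcG$ whose endpoints lie strictly on opposite arcs of the rim with respect to $\{a,b\}$ must meet $P$ at a common vertex. This is immediate from the fact that $\blcG$ is a grid graph drawn in the plane with $\rimV$ sitting on the outer face, so each side of the closed curve is a well-defined plane region that traps one endpoint of $Q$.

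For the base case $k=2$, condition~1 gives that $(u_1,v_1)$ semi-crosses $(u_2,v_2)$; applying the Jordan-curve fact to realizing paths $P_1$ and $P_2$ in $\blcG$ produces a common vertex $w$, and concatenating $u_1\to w$ along $P_1$ with $w\to v_2$ along $P_2$ yields $(u_1,v_2)\in\cirblcE$. For the inductive step, I would first observe that every prefix $(u_1,v_1),\ldots,(u_j,v_j)$ with $j<k$ is itself traversable, since the indices $p,q<i$ witnessing condition~2 at any $i\le j$ are automatically strictly less than $i$. The induction hypothesis therefore gives $(u_1,v_j)\in\cirblcE$ for every $j<k$. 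Now let $p,q<k$ be the indices witnessing condition~2 at position $k$, so that, after a possible swap, $v_p\in\cC[u_k,v_k]$ and $v_q\in\aC[u_k,v_k]$. Depending on which arc contains $u_1$, one of the already-known edges $(u_1,v_p),(u_1,v_q)\in\cirblcE$ has its endpoints on opposite arcs determined by $(u_k,v_k)$; realizing that edge by a path in $\blcG$ and applying the Jordan-curve fact with a realizing path from $u_k$ to $v_k$ produces a common vertex $w$, so the walk $u_1\to w\to v_k$ certifies $(u_1,v_k)\in\cirblcE$.

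The main obstacle will be the careful case analysis of endpoint coincidences: the definitions of semi-crossing and separating use closed arcs, so pairs of chords can share endpoints, and the Jordan-curve argument I rely on strictly wants the endpoints of $Q$ to lie on opposite \emph{open} sides of the closed curve rather than on its boundary. Most such degeneracies collapse to trivial subcases (for example, $u_1=u_k$ gives $(u_1,v_k)=(u_k,v_k)\in\cirblcE$ immediately, and a shared $v$-endpoint supplies the common vertex for free), but enumerating them cleanly, and checking that in every non-trivial placement of $u_1$ at least one of $(u_1,v_p),(u_1,v_q)$ actually straddles $(u_k,v_k)$ in the strict sense required by the Jordan-curve fact, is where the notational bulk of the proof will sit.
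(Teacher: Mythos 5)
Your proof follows essentially the same route as the paper's: induction on $k$, with the base case handled by a planarity/Jordan-curve argument giving a common vertex of the two realizing paths, and the inductive step applying the induction hypothesis to get paths from $u_1$ to $v_p$ and to $v_q$, then using the separating chord $(u_k,v_k)$ to force one of those paths to meet a realizing path of $(u_k,v_k)$. Your added care about the closed-arc degeneracies (shared endpoints collapsing to trivial subcases) is a legitimate refinement that the paper's proof glosses over, but it does not change the structure of the argument.
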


\begin{proof}
We show that $v_k$ is reachable from $u_1$ in $\blcG$ by induction on $k$. 
First, we consider the case $k=2$,
namely $(u_1, v_1)$ semi-crosses $(u_2,v_2)$.
$\blcG$ contains a path $p_{u_1,v_1}$ which goes from $u_1$ to $v_1$.
Also, $\blcG$ contains a path $p_{u_2,v_2}$ which goes from $u_2$ to $v_2$.
Since $\blcG$ is planar and $u_1$, $v_1$, $u_2$, and $v_2$ are the rim vertices and the edges are semi-crossing,
there exists a vertex $w$ which is common in $p_{u_1,v_1}$ and $p_{u_2, v_2}$ in $\blcG$.
Since $w$ is reachable from $u_1$ and $v_2$ is reachable from $w$,
there exists a path from $u_1$ to $v_2$.

Next, we assume that the lemma is true for all sequences of traversable edges of length less than $k$.
By the definition,
there exist two edges $(u_p, v_p)$ and $(u_q, v_q)$ that the edge $(u_k, v_k)$ separates ($p, q < k$).
We have two paths $p_{u_1, v_{p}}$ from $u_1$ to $v_{p}$ and $p_{u_1, v_{q}}$ from $u_1$ to $v_{q}$ in $\blcG$ by the induction hypothesis.
Also we have a path $p_{u_k, v_k}$ from $u_k$ to $v_k$.
Since $(u_k, v_k)$ separates $(u_p, v_p)$ and $(u_q, v_q)$,
$v_{p}$ and $v_{q}$ are on the different sides of arcs of the edge $(u_k, v_k)$.
If $u_1$ and $v_{p}$ are on the same arc of $(u_k, v_k)$,
the paths $p_{u_1, v_{q}}$ and $p_{u_k, v_k}$ have a common vertex $w'$ (see Figure~\ref{fig:plane_cross}(a)).
On the other hand,
if $u_1$ and $v_{q}$ are on the same arc of $(u_k, v_k)$,
the paths $p_{u_1, v_{p}}$ and $p_{u_k, v_k}$ have a common vertex $w'$ (see Figure~\ref{fig:plane_cross}(b)).
Thus there exists a path from $u_1$ to $v_k$ via $w'$ in $\blcG$.
\end{proof}

\subsection{Gadget Graph}

We introduce the notion of ``gadget graph''.
A gadget graph is a graph that is given a ``label set'' to each edge.

\begin{figure}
\begin{center}
\includegraphics[width= 10 cm]{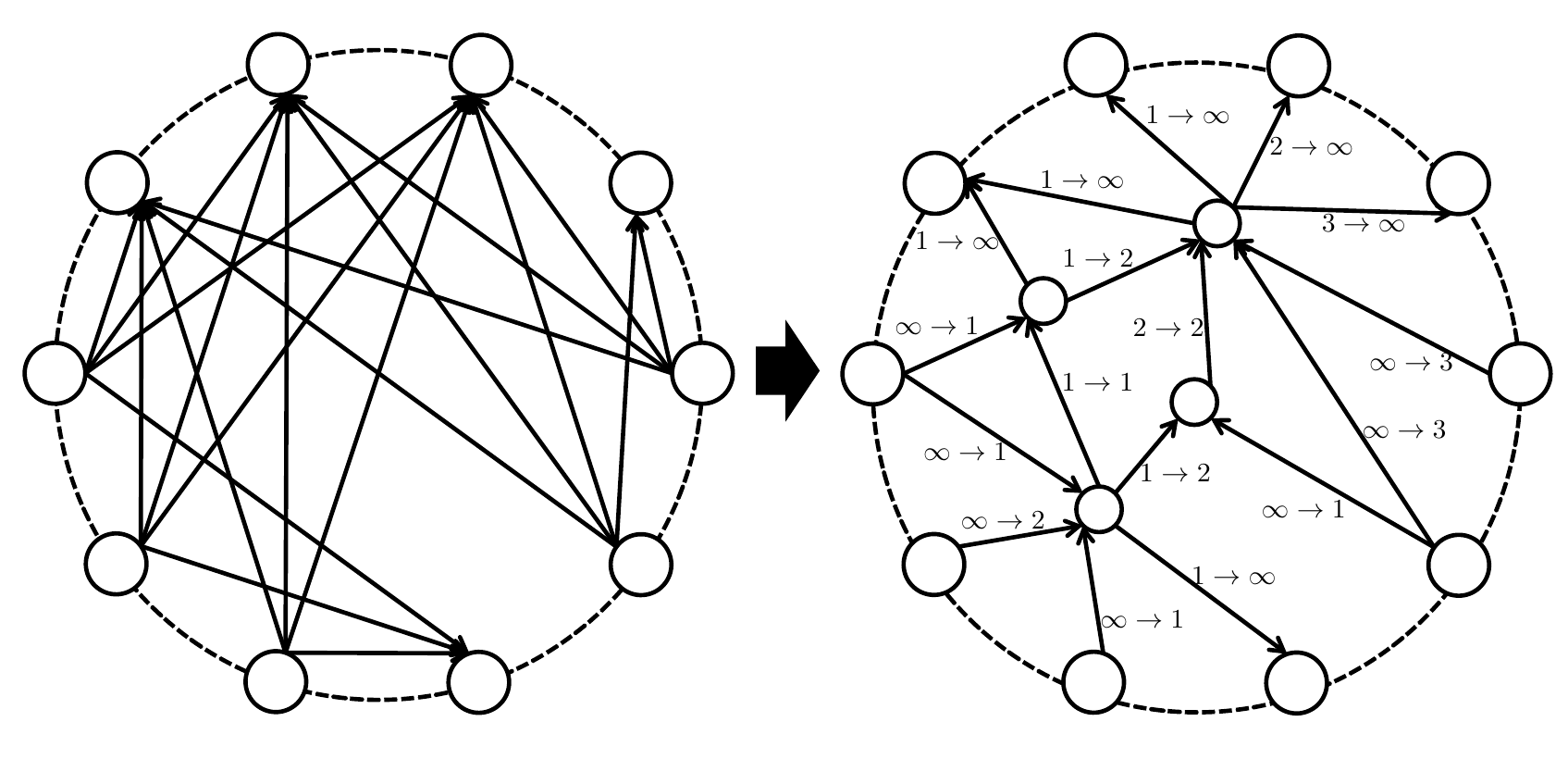}
\caption{An example of the transformation
from a circle graph to a gadget graph.}
\label{fig:trans_gadget}
\end{center}
\end{figure}

\begin{definition}\label{def:gadget}
A \newwd{gadget graph} $\gadG$ is a graph
defined by a tuple $(\gadV,\gadE,\gadK,\gadL)$,
where $\gadV$ is a set of vertices,
$\gadE$ is a set of edges,
$\gadK$ is a \newwd{path function} that assigns an edge or $\bot$ to each edge,
and $\gadL$ is
a \newwd{level function} that assigns a \newwd{label set} to each edge.
A label set is a set $\{i_1\rightarrow o_1, i_2\rightarrow o_2,\ldots,i_k \rightarrow o_k\}$ of labels
where each label $i_j\rightarrow o_j$,
$i_j,o_j\in\mathbb{R}\cup\{\infty\}$,
is a pair of \newwd{in-level} and \newwd{out-level}.

\end{definition}

\noindent{\bf Remark.} For an edge $(u, v)\in \gadE$,
we may use expressions $\gadK(u, v)$ and $\gadL(u, v)$ instead of $\gadK((u, v))$ and $\gadL((u, v))$ for simplicity.

\ 

Our goal is to transform
a given circle graph
(obtained from a block grid graph) $\cirblcG=(\cirblcV,\cirblcE)$ in which all vertices in $\cirblcV$ are placed on a cycle $C$
to a {\em plane} gadget graph
$\gadplG=(\gadplVout\cup\gadplVin,\gadplE,\gadplK,\gadplL)$
where $\gadplVout$ is the set of \newwd{outer vertices}
that are exactly the vertices of $\cirblcV$ placed in the same way as $\cirblcG$ on the cycle $C$,
and $\gadplVin$ is the set of \newwd{inner vertices} placed inside of $C$.
All edges of $\gadplE$ are also placed inside of $C$ under our embedding.
The inner vertices of $\gadplVin$ are used
to replace crossing points of edges of $\cirblcE$
to transform to a planar graph
(see Figure~\ref{fig:trans_gadget}).
We would like to keep the ``reachability'' among vertices
in $\gadplVout$ in $\gadplG$
while bounding $|\gadplVin|=O(n)$.

We explain how to characterize the reachability on a gadget graph.
Consider any gadget graph $\gadG=(\gadV,\gadE,\gadK,\gadL)$,
and let $x$ and $y$ be any two vertices of $\gadV$.
Intuitively,
the reachability from $x$ to $y$ is characterized by a directed path
on which we can send a token from $x$ to $y$.
Suppose that
there is a directed path $p=(e_1,\ldots,e_m)$ from $x$ to $y$.
We send a token through this path.
The token has a level, which is initially $\infty$ when the token is at vertex $x$.
(For a general discussion,
we use a parameter $\ell_s$ for the initial level of the token.)
When the token reaches the tail vertex $t(e_j)$ of some edge $e_j$ of $p$ with level $\ell$,
it can ``go through'' $e_j$ to reach its head vertex $h(e_j)$
if $\gadL(e_j)$ has an \newwd{available label} $i_j\rightarrow o_j$ such that $i_j \le \ell$ holds for its in-level $i_j$.
If the token uses a label $i_j\rightarrow o_j$,
then its level becomes the out-level $o_j$ at the vertex $h(e_j)$.
If there are several available labels,
then we naturally use the one with the highest out-level.
If the token can reach $y$ in this way,
we consider that a ``token tour'' from $x$ to $y$ is ``realized'' by this path $p$.
Technically,
we introduce $\gadK$ so that some edge can specify the next edge.
We consider only a path $p = (e_1, \ldots, e_m)$ as ``valid'' such that $e_{i+1} = \gadK(e_i)$
for all $e_i$ such that $\gadK(e_i)\neq \bot$.
We characterize the reachability from $x$ to $y$ on gadget graph $\gadG$
by using a valid path realizing a token tour from $x$ to $y$.

\begin{definition}\label{def:reachable}
For any gadget graph $\gadG=(\gadV,\gadE,\gadK,\gadL)$,
and for any two vertices $x,y$ of $\gadV$,
there exists a \newwd{token tour} from $x$ to $y$ with initial level $\ell_s$
if there exists a sequence of edges $(e_1,\ldots,e_m)$ that satisfies

\begin{enumerate}
\item
$x=t(e_1)$ and $y=h(e_m)$,
\item
$h(e_i) = t(e_{i+1})$ $(1\le i < m)$,
\item
if $\gadK(e_i)$ is not $\bot$ $(1\le i < m)$,
then $e_{i+1} = \gadK(e_i)$,
\item
there exist labels $i_1\rightarrow o_1\in \gadL(e_1),\ldots,i_m\rightarrow o_m\in \gadL(e_m)$
such that $\ell_s \ge i_1$ and $o_t\ge i_{t+1}$ for all $1\le t<m$.
\end{enumerate}
\end{definition}

\begin{figure}[t]
\begin{center}
\includegraphics[width= 11 cm]{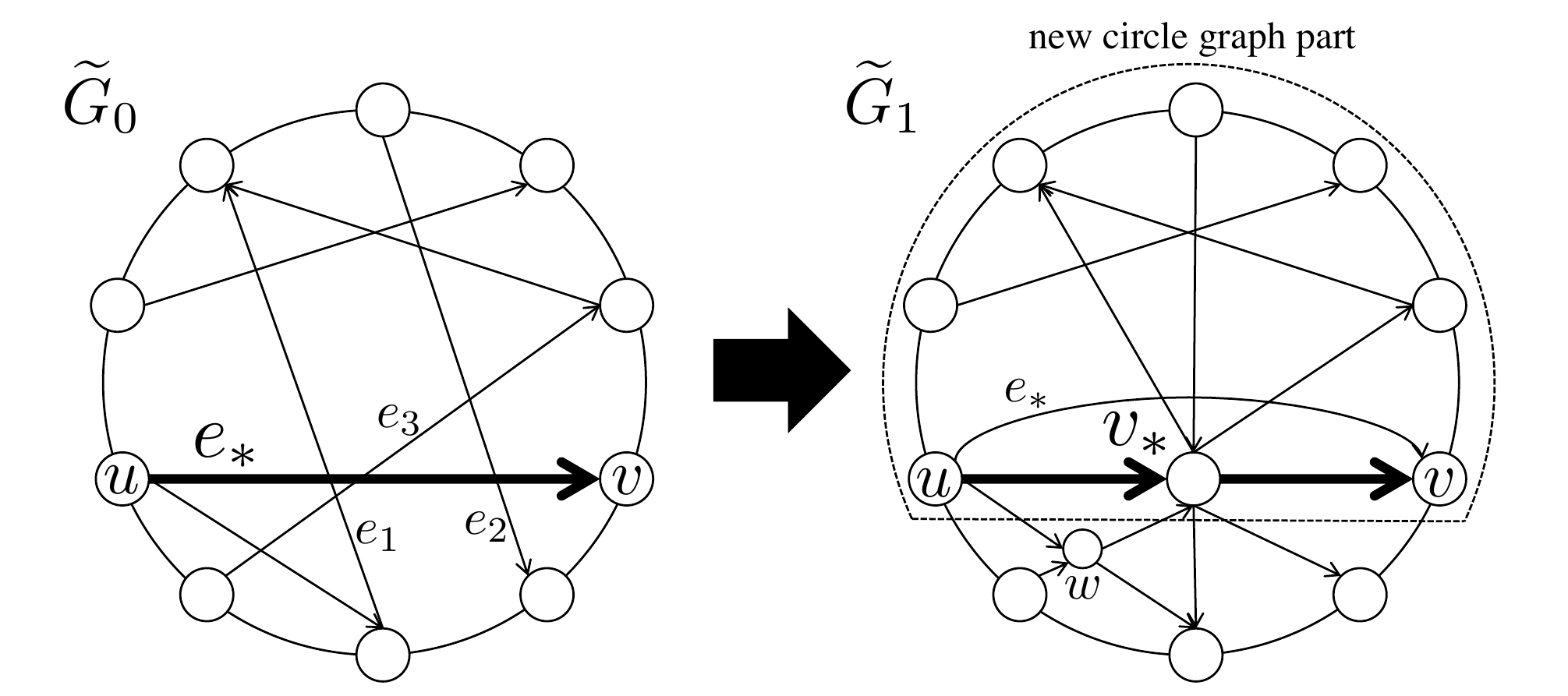}
\caption{An initial transformation step from $\gadG_0$ to $\gadG_1$.}
\label{fig:process0to1}
\end{center}
\end{figure}

At the beginning of our algorithm,
we obtain a gadget graph $\gadG_0 = (\gadV_0, \gadE_0, \gadK_0, \gadL_0)$ whose base graph is equal to $\cirblcG$,
and $\gadK_0(e) = \bot$, $\gadL_0(e) = \{0\rightarrow \infty\}$ for every $e\in \gadE_0$.
It is obvious that $\cirblcG$ and $\gadG_0$ have the same reachability.
Namely,
there exists a token tour from $x$ to $y$ for $x, y\in \gadV_0$ in $\gadG_0$ if and only if
there exists an edge $(x, y) \in \gadE_0$.

We explain first the outline of our transformation
from $\gadG_0$ to $\gadplG$.
We begin by finding a chord $e_*=(u,v)$ with gap $\ge 2$
having no other gap-$2^+$ chord in its lower area,
that is,
one of the lowest gap-$2^+$ chords.
(If there is no gap-$2^+$ chord, then the transformation is terminated.)
For this $e_*$ and its lower area,
we transform them into a planar part and reduce the number of crossing points as follows
(see Figure~\ref{fig:process0to1}):
(i) Consider all edges of $\gadG_0$ crossing this chord $e_*$ ($e_1$, $e_2$ and $e_3$ in Figure~\ref{fig:process0to1}).
Create a new inner vertex $v_*$ of $\gadplG$ on the chord
and bundle all crossing edges going through this vertex $v_*$; 
that is,
we replace all edges crossing $e_*$
by edges between their end points in the lower area of $e_*$ and $v_*$,
and edges between $v_*$ and their end points in the upper area of $e_*$.
(ii) Introduce new inner vertices
for edges crossing gap-$1$ chords in the lower area of $e_*$ ($w$ in Figure~\ref{fig:process0to1}).
(iii) Add appropriate label sets to those newly introduced edges
so that the reachability is not changed by this transformation.
At this point
we regard the lower area of $e_*$ as processed,
and remove this part from the circle graph part of $\gadG_0$
by replacing the arc $C[u,v]$ by a tour $(u,v_*,v)$
to create a new circle graph part of $\gadG_1$.
We then repeat this transformation step on the circle graph part of $\gadG_1$.
In the algorithm,
$U_t$ is the vertices of the circle graph part of $\gadG_t$,
thus $\gadG_t[U_t]$ indicates the circle graph part of $\gadG_t$.
Note that $e_*$ is not removed and becomes a gap-1 chord in the next step.

We explain step (ii) for $\gadG_0$ in more detail.
Since $e_*$ is a gap-$2^+$ chord,
there exist only gap-$1$ chords or edges whose one end point is $v_*$
in the lower area of $e_*$.
If there are two edges $e_0$ and $e_1$ that cross each other,
we replace the crossing point by a new inner vertex $u$ (see Figure~\ref{fig:make_planar}(a), (b)).
The edge $e_i$ becomes two edges $(t(e_i), u)$ and $(u, h(e_i))$ $(i = 0, 1)$,
and we set $\gadK_1(t(e_i), u) = (u, h(e_i))$.
The edges might be divided into more than two segments (see Figure~\ref{fig:make_planar}(c)).
We call the edge of $\gadG_0$ \newwd{original edge} of the divided edges.
By the path function,
we must move along the original edge.
An edge $e$ might have a reverse direction edge $\bar{e} = (h(e), t(e))$ (see Figure~\ref{fig:make_planar}(d)).
In this case,
$e$ and $\bar{e}$ share a new vertex for resolving crossing points.
For $\gadG_t[U_t]$ $(t > 0)$,
we process the lower area in the same way.
We refer to this algorithm as {\sf MakePlanar},
and the new inner vertices created by {\sf MakePlanar} in step $t$ as $V_{\sf MP}^t$.

\begin{figure}[bt]
\begin{center}
\includegraphics[width= 14 cm]{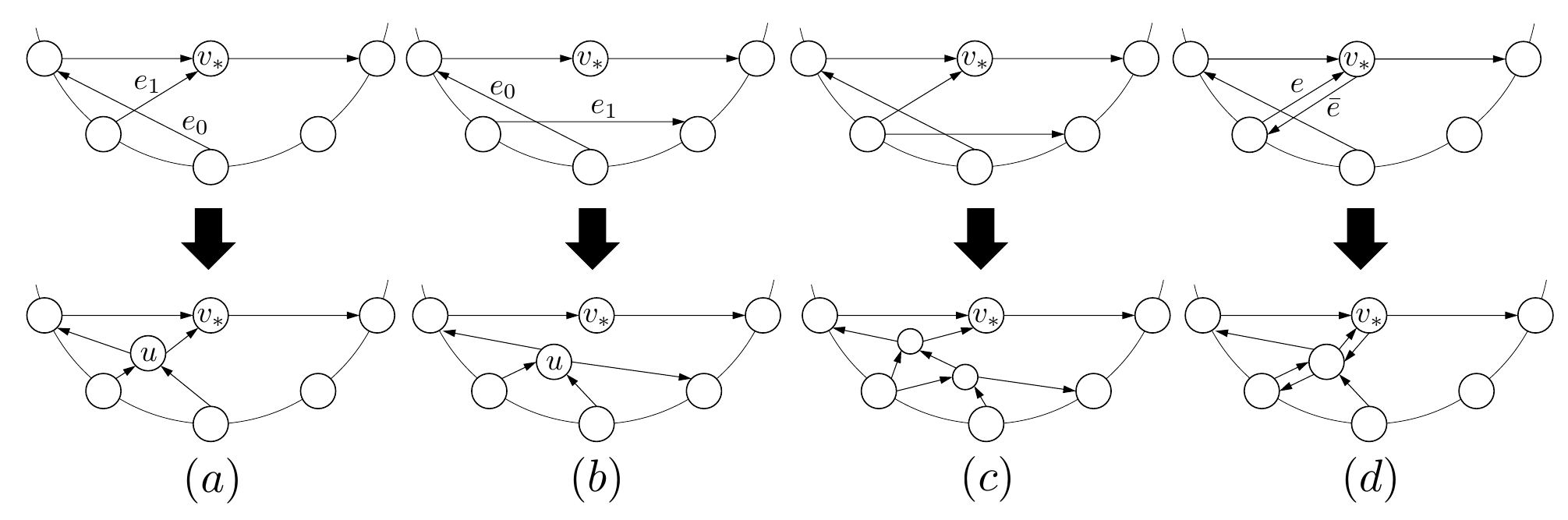}
\caption{Examples of vertices made by {\sf MakePlanar}.}
\label{fig:make_planar}
\end{center}
\end{figure}

\begin{algorithm}
\caption{}\label{alg:trans_gadget}
\begin{algorithmic}[1]
\REQUIRE A circle graph $\cirblcG=(\cirblcV, \cirblcE)$ obtained from a block graph.
\ENSURE Output a plane gadget graph $\gadplG = (\gadplVout \cup \gadplVin, \gadplE, \gadplK, \gadplL)$ which satisfies $\gadplVout = \cirblcV$ and the reachability among vertices in $\gadplVout$ in $\gadplG$ is the same as $\cirblcG$.
\STATE initialize $t = 0$ // loop counter
\STATE $\gadG_0 = (\gadVout \cup \gadV_0, \gadE_0, \gadK_0, \gadL_0)$ where $\gadVout \leftarrow \cirblcV, \gadV_0 \leftarrow \emptyset, \gadE_0 \leftarrow \cirblcE, \gadK_0(e) \leftarrow \bot, \gadL_0(e) \leftarrow \{0\rightarrow \infty\}$ for each $e \in \cirblcE$, and $U_0 \leftarrow \gadVout$
\STATE for every $v \in \gadVout$, $\lin^0(v)\leftarrow 0,\lout^0(v)\leftarrow \infty, p^0(v) \leftarrow v$.
\WHILE {$\gadG_{t}[U_t]$ has a lowest gap-$2^+$ chord}
\STATE pick a lowest gap-$2^+$ chord $e_*^t$
\STATE make a new vertex $v_*^t$
\STATE $\gadV_{t+1}\leftarrow \gadV_{t}\cup \{v_*^t\}$
\STATE $\gadE_{t+1}\leftarrow (\gadE_{t}\cup \{(t(e), v_*^t), (v_*^t, h(e))\ |\ e$ crosses $e_*$ or $e = e_*\}) \setminus \{e\ |\ e$ crosses $e_*\}$
\STATE $U_{t+1}\leftarrow (U_t \cup \{v_*^t\}) \setminus C_{\gadG_{t}[U_t]}(t(e_*^t), h(e_*^t))$
\STATE use {\sf MakePlanar} to make the lower area of $e_*^t$ planar and update $\gadV_{t+1}$, $\gadE_{t+1}$ and $\gadK_{t+1}$.
\STATE change the labels by using Algorithm~\ref{alg:change_label} for keeping reachability
\STATE output $\gadG_{t+1}[C_{\gadG_{t}[U_t]}[t(e_*^t), h(e_*^t)]\cup \{v_*^t\} \cup V_{\sf MP}^t]$, which is the lower area of $e_*^t$.
\STATE $t\leftarrow t+1$
\ENDWHILE
\STATE use {\sf MakePlanar} to make $\gadG_t[U_t]$ planar and assign labels by line 17-24 of Algorithm~\ref{alg:change_label}.
\STATE output $\gadG_{t}[U_t \cup V_{\sf MP}^t]$
\end{algorithmic}
\end{algorithm}

\begin{algorithm}
\caption{}\label{alg:change_label}
\begin{algorithmic}[1]
\ENSURE Set $\gadL_{t+1}$ so that $\gadG_{t+1}$ has the same reachability as $\gadG_t$
\STATE For every edge $e$ appearing in both $\gadG_t$ and $\gadG_{t+1}$, let $\gadL_{t+1}(e) = \gadL_{t}(e)$.
\STATE $S^\ell$ (resp., $S^u$) $\leftarrow \{v\in U_t\ |\ \exists e \in \gadE_t$ s.t. $e$ crosses $e_*^t$, $t(e) = v$ or $h(e) = v$, and $v$ is at the lower (resp., upper) area of $e_*^t\}$
\STATE $T^\ell$ (resp., $T^u$) $\leftarrow \{v\in \cirblcV\ |\ p^t(v) \in S^\ell$ (resp. $S^u$)$\}$
\STATE Fix any vertices $x', y' \in \cirblcV$ such that $p^t(x') = t(e_*^t), p^t(y') = h(e_*^t)$.
\STATE Set an order to $T^\ell$ according to the order appearing in $C_{\cirblcG}[y', x']$. We regard $T^\ell$ as a sequence $(t_1^\ell, t_2^\ell, \ldots, t_{|T^\ell|}^\ell)$ (see Figure~\ref{fig:ST_ul}(b)).
\STATE Set an order to $T^u$ in the same way as $T^\ell$ but according to the tour along the other arc. We also regard $T^u$ as a sequence $(t_1^u, t_2^u, \ldots, t_{|T^u|}^u)$ (see Figure~\ref{fig:ST_ul}(b)). 
\STATE Use Algorithm~\ref{alg:calc_level} for calculating $\lin^{t+1}(v)$ and $\lout^{t+1}(v)$ for all $v \in T^\ell$.

\FOR {$u \in S^\ell$}
\STATE $\gadL_{t+1}(u, v_*^t) \leftarrow \{\lin^t(v) \rightarrow \lin^{t+1}(v)\ |\ p^t(v) = u\}$
\STATE $\gadL_{t+1}(v_*^t, u) \leftarrow \{\lout^{t+1}(v) \rightarrow \lout^t(v)\ |\ p^t(v) = u\}$
\ENDFOR

\FOR {$u \in S^u$}
\STATE $\gadL_{t+1}(u, v_*^t) \leftarrow \{\lin^{t}(t_i^u) \rightarrow \max_{t^\ell \in T^\ell}\{\lout^{t+1}(t^\ell)|\ (t_i^u, t^\ell) \in \cirblcE\}|t_i^u \in T^u$ and $p^t(t_i^u) = u\}$
\STATE $\gadL_{t+1}(v_*^t, u) \leftarrow \{\min_{t^\ell \in T^\ell}\{\lin^{t+1}(t^\ell)\ |\ (t^\ell, t_i^u) \in \cirblcE\} \rightarrow \lout^{t}(t_i^u)|\ t_i^u \in T^u$ and $p^t(t_i^u) = u\}$
\ENDFOR

\STATE $\gadL_{t+1}(t(e_*^t), v_*^t) \leftarrow \{\infty \rightarrow 0\}$, $\gadL_{t+1}(v_*^t, h(e_*^t)) \leftarrow \{\infty \rightarrow 0\}$

\FORALL {edge $e$ created by {\sf MakePlanar}}
\STATE Let $e'$ be the original edge of $e$
\IF {$t(e) = t(e')$}
\STATE $\gadL_{t+1}(e) \leftarrow \{a \rightarrow b\ |\ a\rightarrow b \in \gadL_t(e')\}$
\ELSE
\STATE $\gadL_{t+1}(e) \leftarrow \{b \rightarrow b\ |\ a\rightarrow b \in \gadL_t(e')\}$
\ENDIF
\ENDFOR

\FOR{$v\in \{u\in\cirblcV\ |\ \exists w\in U_t$ s.t. $w$ is at the lower area of $e_*^t$ and $p^t(u) = w\}$}
\STATE $p^{t+1}(v) = v_*^t$
\ENDFOR

\STATE Unchanged $\lin^t(\cdot), \lout^t(\cdot)$ and $p^t(\cdot)$ will be taken over to $\lin^{t+1}(\cdot), \lout^{t+1}(\cdot)$ and $p^{t+1}(\cdot)$.
\end{algorithmic}
\end{algorithm}

\begin{algorithm}
\caption{}\label{alg:calc_level}
\begin{algorithmic}[1]
\ENSURE Calculate $\lin^{t+1}(v)$ and $\lout^{t+1}(v)$ for all $v \in T^\ell$.
\FOR {$i \in [1,|T^\ell|]$}
\STATE $\lin^{t+1}(t_i^\ell) \leftarrow \max \{j\ |\ (t_i^\ell, t_j^u) \in \cirblcE,\ t_j^u\in T^u\} + i / n$
\STATE $\lout^{t+1}(t_i^\ell) \leftarrow \min \{j\ |\ (t_j^u, t_i^\ell) \in \cirblcE,\ t_j^u\in T^u\} + i / n$
\ENDFOR
\FOR {$i = 1$ to $|T^\ell|$}
\STATE $\Delta \leftarrow \max(0, \max\{\lout^{t+1}(t_j^\ell) - j/n\ |\ 1\le j < i\} - (\lin^{t+1}(t_i^\ell) - i / n))$
\FOR {$k \in [i, |T^\ell|]$}
\STATE $\lin^{t+1}(t_k^\ell) \leftarrow \lin^{t+1}(t_k^\ell) + \Delta$
\STATE $\lout^{t+1}(t_k^\ell) \leftarrow \lout^{t+1}(t_k^\ell) + \Delta$
\ENDFOR
\ENDFOR

\end{algorithmic}
\end{algorithm}

The detailed process of step (iii) is written in Algorithm~\ref{alg:change_label},
and Algorithm~\ref{alg:trans_gadget} describes the entire process of step (i), (ii) and (iii).
The following lemma shows that an output graph of Algorithm~\ref{alg:trans_gadget} has small size.

\begin{lemma}\label{lem:trans_gadget_small}
Algorithm~\ref{alg:trans_gadget} terminates
creating a planar graph of size $O(n)$.
\end{lemma}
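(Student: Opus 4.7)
The plan is to prove three things: that Algorithm~\ref{alg:trans_gadget} terminates, that its output is planar, and that it has $O(n)$ vertices. Termination is immediate because a lowest gap-$2^+$ chord $e_*^t$ has $d_t\ge 2$ interior arc vertices; line~9 deletes all $d_t$ of them while adding only the single vertex $v_*^t$, so $|U_{t+1}|\le |U_t|-1$, giving at most $n$ iterations of the while loop. Planarity follows because each iteration planarizes a disjoint region (the lower area of $e_*^t$ at that step), and the post-loop MakePlanar call handles what remains; since these regions only meet along boundaries consisting of already-planarized edges, their union $\gadplG$ is planar. The vertex-count bound is the main quantitative step, and I would split it into three contributions: $|\gadplVout|=n$; one $v_*^t$ per iteration, for $O(n)$ total; and the sets $V_{\sf MP}^t$ introduced by MakePlanar at each step, plus the final MakePlanar call.

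The crux is to bound $|V_{\sf MP}^t|$ by $O(d_t)$. First I would argue that inside the lower area of $e_*^t$ the only edges present are (a) gap-$1$ chords of $\gadG_t[U_t]$ whose both endpoints lie on the arc $C[t(e_*^t),h(e_*^t)]$, and (b) bundled edges created in line~8 between $v_*^t$ and vertices on the arc. Any chord of $\gadG_t[U_t]$ whose two endpoints both lie on the arc would otherwise contradict the ``lowest gap-$2^+$'' property of $e_*^t$ unless it is gap-$1$, while any chord that originally crossed $e_*^t$ has been replaced by two edges through $v_*^t$ in line~8. A direct crossing count then gives: two gap-$1$ chords cross only when their arc triples share two consecutive vertices, so each gap-$1$ chord crosses $O(1)$ other gap-$1$ chords, for a total of $O(d_t)$; a $v_*^t$-edge incident to $v_j$ crosses a gap-$1$ chord $(v_i,v_{i+2})$ only when $j=i+1$, contributing $O(d_t)$ more crossings; and any two $v_*^t$-edges share $v_*^t$ and therefore do not cross elsewhere. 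Hence $|V_{\sf MP}^t|=O(d_t)$, and since $\sum_t d_t \le (|U_0|-|U_T|)+T\le 2n$, the total MakePlanar contribution over the loop is $O(n)$. The post-loop MakePlanar acts on $\gadG_T[U_T]$, which by the loop-exit condition contains only gap-$1$ chords, so the same crossing count yields $O(|U_T|)=O(n)$ further vertices.

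The main obstacle I expect is justifying cleanly that at step $t$ the lower area really contains only edges of types (a) and (b), given that $\gadG_t[U_t]$ has been reshaped by earlier iterations and now includes bundle-vertices $v_*^s$ for $s<t$ as well as edges produced by line~8 in previous steps. I would need to establish a structural invariant maintained by each iteration: $U_t$ sits on a cycle in the current planar embedding, the notions of gap-$d$ chord and ``lower area'' remain well-defined with respect to that cycle, and MakePlanar vertices from earlier steps lie outside the currently active lower area (having been placed only inside previously processed, and now inactive, lower regions). Once this invariant is in hand, the crossing count above goes through verbatim and yields the required $O(n)$ bound on the size of the planar graph $\gadplG$.
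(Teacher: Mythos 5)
Your proposal is correct and follows essentially the same route as the paper: both bound the loop at $\le n$ iterations via $|U_{t+1}|\le|U_t|-1$, both bound $|V_{\sf MP}^t|$ linearly in the gap $k_t$ (the paper states $|V_{\sf MP}^t|\le 2k_t-1$ directly from the fact that only gap-$1$ chords occur in the lower area), and both sum using $\sum_t(k_t-1)=|U_0|-|U_T|$ together with $T\le n$ to get $\sum_t k_t\le 2n$ and an overall $O(n)$ bound. You unpack the crossing count and the planarity/invariant considerations in more detail than the paper, which simply asserts the $2k-1$ bound; this is additional care, not a different argument.
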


\begin{proof}
In the beginning of the algorithm,
$|U_0| = n$ and $|U_t|$ decreases by at least 1 for each iteration
since the picked edge $e_*^t$ is a gap-$2^+$ chord.
Hence the algorithm stops after at most $n$ iterations
and the number of the new inner vertices made at line 7, or $v_*^t$, is also at most $n$.
If a gap-$k$ chord is picked,
we make at most $2k-1$ new inner vertices by {\sf MakePlanar},
namely $|V_{\sf MP}^t| \le 2k-1$,
since there exist only gap-$1$ chords in the lower area of the picked edge.
The total number of inner vertices becomes at most
\[
	n + \sum_{i=1}^t (2k_i-1) = n + 2\sum_{i=1}^t k_i - t \le n + 2\times 2n = 5n
\]
where $t$ is the number of iterations and $k_i$ means that a gap-$k_i$ chord was picked in the $i$-th iteration. After all, $|\gadplVout \cup \gadplVin|\le n + 5n = 6n$.
\end{proof}

Now we explain Algorithm~\ref{alg:change_label} describing how to assign labels to $\gadG_{t+1}$ constructed in Algorithm~\ref{alg:trans_gadget}.
For each outer vertex $v \in \gadVout$,
we keep three attributes $p^t(v)$, $\lin^t(v)$ and $\lout^t(v)$,
and we call them \newwd{parent}, \newwd{in-level} and \newwd{out-level} respectively.
We calculate these values from line 2 to 7 and line 25 to 27.
$p^t(v)$ is a vertex belonging to the circle graph part of $\gadG_t$, namely $p^t(v) \in U_t$.
From the algorithm,
we can show that there are token tours from $v$ to $p^t(v)$ and/or from $p^t(v)$ to $v$.
For the token tour from $v$ to $p^t(v)$,
the final level of the token becomes $\lin^t(v)$.
On the other hand,
for the token tour from $p^t(v)$,
it is enough to have $\lout^t(v)$ as an initial level to reach $v$.
We will show these facts implicitly in the proof of Lemma~\ref{lem:main_lem1}.

At the beginning of each iteration of Algorithm~\ref{alg:trans_gadget},
we choose a lowest gap-$2^+$ chord $e_*^t$.
We collect vertices in $U_t$ which are endpoints for some edges crossing with $e_*^t$,
and we refer to the vertices among them which are in the lower area of $e_*^t$ as $S^\ell$
and the vertices in the upper area of $e_*^t$ as $S^u$ (see Figure~\ref{fig:ST_ul}(a) and line 2).
Next we collect vertices whose parents are in $S^\ell$ (resp., $S^u$),
and we denote them by $T^\ell$ (resp., $T^u$) (line 3).
Let $x'$ and $y'$ be vertices 
whose parents are $t(e_*^t)$ and $h(e_*^t)$ respectively.
We assign indices to the vertices in $T^u$ and $T^\ell$
such that the nearer to $x'$ a vertex is located, the larger index the vertex has (see Figure~\ref{fig:ST_ul}(b)).
We regard $T^\ell$ as a sequence $(t_1^\ell, t_2^\ell, \ldots, t_{|T^\ell|}^\ell)$,
and $T^u$ as a sequence $(t_1^u, t_2^u, \ldots, t_{|T^u|}^u)$.
For each vertex $t_i^\ell$ in $T^\ell$,
we calculate $\lin^{t+1}(t_i^\ell)$ and $\lout^{t+1}(t_i^\ell)$ in Algorithm~\ref{alg:calc_level}.
From line 1 to 4,
we decide temporary values of $\lin^{t+1}(t_i^\ell)$ and $\lout^{t+1}(t_i^\ell)$ according to reachability among vertices in $T^\ell$ and $T^u$ in $\cirblcG$.
When $t_j^u$ has the maximum index among vertices that $t_i^\ell$ can reach in $T^u$,
we let $\lin^{t+1}(t_i^\ell) = j + i / n$.
When $t_j^u$  has the minimum index among vertices which can reach $t_i^\ell$ in $T^u$,
we let $\lout^{t+1}(t_i^\ell) = j + i / n$.
The term $i/n$ is for breaking ties.
See Figure~\ref{fig:ell_in_out}:
$T^\ell = \{t_1^\ell, t_2^\ell, t_3^\ell\}$, $T^u = \{t_1^u, t_2^u, t_3^u\}$
and the edges are derived from $\cirblcE$.
The vertex $t_3^\ell$ can reach $t_1^u$, $t_2^u$ and $t_3^u$.
Thus $\lin^{t+1}(t_3^\ell) = \max(1, 2, 3)+3/n = 3+3/n$.
The vertices $t_2^u$ and $t_3^u$ can reach $t_2^\ell$.
Thus $\lout^{t+1}(t_2^\ell) = \min(2, 3) + 2/n = 2 + 2/n$.
In the next for-loop, we change the in- and out-levels
so that the in-level of the larger indexed vertex is larger than the out-level of the smaller indexed vertex.
If there exists a vertex $t_j^\ell$ such that $i > j$ and $\lout^{t+1}(t_j^\ell) > \lin^{t+1}(t_i^\ell)$,
then we let $\Delta = (\lout^{t+1}(t_j^\ell) - j/n) - (\lin^{t+1}(t_i^\ell) - i / n)$ and
add $\Delta$ to $\lin^{t+1}(t_i^\ell)$ and $\lout^{t+1}(t_i^\ell)$.
For preserving the magnitude relationship between in- and out-levels of $t_i^\ell$ and
those of $t_k^\ell$ ($k > i$),
we also add $\Delta$ to $\lin^{t+1}(t_k^\ell)$ and $\lout^{t+1}(t_k^\ell)$.
In Figure~\ref{fig:ell_in_out},
we have $\lin^{t+1}(t_2^\ell) < \lout^{t+1}(t_1^\ell)$.
Thus
we add $1 = (\lout^{t+1}(t_1^\ell)-1/n) - (\lin^{t+1}(t_2^\ell) - 2/n)$ to $\lin^{t+1}(t_2^\ell)$.
Moreover,
we add 1 to $\lout^{t+1}(t_2^\ell)$, $\lin^{t+1}(t_3^\ell)$ and $\lout^{t+1}(t_3^\ell)$
so that
we keep the magnitude relationship.
Here we state a lemma.

\begin{lemma}\label{lem:in_out}
  For any $t$,
  if $i > j$,
  then $\lin^{t+1}(t_i^\ell) > \lout^{t+1}(t_j^\ell)$.
\end{lemma}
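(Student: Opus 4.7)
The plan is to read off the claim directly from the second \textbf{for}-loop of Algorithm~\ref{alg:calc_level}, which is precisely engineered to enforce exactly this inequality. The tie-breaking term $i/n$ will be the source of the strict inequality.

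First I would establish a bookkeeping fact about when levels get frozen. In iteration $i$ of the second \textbf{for}-loop, the update $\Delta$ is added only to $\lin^{t+1}(t_k^\ell)$ and $\lout^{t+1}(t_k^\ell)$ for $k \ge i$. Consequently, once we advance past iteration $j$, the values $\lin^{t+1}(t_j^\ell)$ and $\lout^{t+1}(t_j^\ell)$ are never touched again, so they equal their final values. This means that when we compute the $\max\{\lout^{t+1}(t_j^\ell) - j/n \mid 1 \le j < i\}$ inside iteration $i$, we are already reading off the final values of $\lout^{t+1}(t_j^\ell)$ for every $j < i$.

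Next I would extract the loop invariant. After the update in iteration $i$, the definition of $\Delta$ guarantees
\[
  \lin^{t+1}(t_i^\ell) - \frac{i}{n} \;\ge\; \lout^{t+1}(t_j^\ell) - \frac{j}{n} \quad \text{for every } j < i,
\]
since if this inequality was violated for some $j<i$ before the update then $\Delta$ is at least the missing amount, and if it was already satisfied then $\Delta = 0$ and the inequality persists. Rearranging and using $i > j$, which gives $(i-j)/n \ge 1/n > 0$, yields
\[
  \lin^{t+1}(t_i^\ell) - \lout^{t+1}(t_j^\ell) \;\ge\; \frac{i-j}{n} \;>\; 0,
\]
which is exactly the claim.

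The only subtle point, and the closest thing to an obstacle, is to confirm that the final value $\lin^{t+1}(t_i^\ell)$ satisfies the same inequality as the value right after iteration $i$; but this is immediate because later iterations $i' > i$ add a common nonnegative $\Delta'$ to every $\lin^{t+1}(t_k^\ell)$ and $\lout^{t+1}(t_k^\ell)$ with $k \ge i'$, and since $j < i$, the right-hand side $\lout^{t+1}(t_j^\ell)$ is not shifted by any such $\Delta'$, so the inequality can only get slacker (never tighter). No more work is required.
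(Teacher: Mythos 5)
Your proof is correct. The paper actually states Lemma~\ref{lem:in_out} without a formal proof; the idea is only sketched in the explanatory prose just before the lemma (``If there exists a vertex $t_j^\ell$ such that $i>j$ and $\lout^{t+1}(t_j^\ell) > \lin^{t+1}(t_i^\ell)$, then we let $\Delta=\ldots$ and add $\Delta$\ldots''). Your argument is a rigorous rendering of exactly that mechanism, and your loop invariant is the right one. One small imprecision in your final ``subtle point'': for $i'>i$ the inner loop of iteration $i'$ touches only indices $k\ge i'$, which excludes $k=i$ as well as $k=j$; so in fact \emph{both} sides of the inequality are already frozen after iteration $i$, not merely the right-hand side. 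Your claim that later iterations can only slacken the inequality is therefore true but slightly weaker than what actually happens (nothing changes at all). This does not affect the validity of the proof.
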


Back to Algorithm~\ref{alg:change_label}.
From line 8 to 16,
we assign labels to edges newly appearing in $\gadG_{t+1}$.
Figure~\ref{fig:label} is an example of how to assign label sets based on Figure~\ref{fig:ell_in_out}.
The vertex $a$ is the parent of $t_3^u$,
$b$ is the parent of $t_1^u$ and $t_2^u$,
$c$ is the parent of $t_2^\ell$ and $t_3^\ell$ and
$d$ is the parent of $t_1^\ell$.
Let $v$ be any vertex in $T^\ell$.
For edges in the lower area of $e_*^t$,
the edge $(p^t(v), v_*^t)$ has a label $\lin^t(v) \rightarrow \lin^{t+1}(v)$ (line 9),
and the edge $(v_*^t, p^t(v))$ has a label $\lout^{t+1}(v) \rightarrow \lout^t(v)$ (line 10).
In Figure~\ref{fig:label},
the edge $(c, v_*^t)$ has labels $\lin^t(t_2^\ell) \rightarrow \lin^{t+1}(t_2^\ell)$ and $\lin^t(t_3^\ell) \rightarrow \lin^{t+1}(t_3^\ell)$.
The edge $(v_*^t, c)$ has a label $\lout^{t+1}(t_2^\ell) \rightarrow \lout^t(t_2^\ell)$ and
the edge $(v_*^t, d)$ has a label $\lout^{t+1}(t_1^\ell) \rightarrow \lout^t(t_1^\ell)$.
Consider edges in the upper area of $e_*^t$.
Let $v$ be any vertex in $T^u$.
The edge $(p^t(v), v_*^t)$ has a label $\lin^t(v) \rightarrow \ell_{max}$
where $\ell_{max}$ is the maximum in-level of vertices in $T^\ell$ that can reach $v$ (line 13).
The edge $(v_*^t, p^t(v))$ has a label $\ell_{min} \rightarrow \lout^t(v)$
where $\ell_{min}$ is the minimum out-level of vertices in $T^\ell$ that $v$ can reach (line 14).
In Figure~\ref{fig:label},
the edge $(a, v_*^t)$ has a label $\lin^t(t_3^u) \rightarrow \lout^{t+1}(t_2^\ell)$
since
$t_3^u$ can reach $t_1^\ell$ and $t_2^\ell$, and $\lout^{t+1}(t_1^\ell) < \lout^{t+1}(t_2^\ell)$ (see Figure~\ref{fig:ell_in_out}).
The edge $(v_*^t, b)$ has a label $\lin^{t+1}(t_2^\ell) \rightarrow \lout^{t}(t_1^u)$
since
$t_2^\ell$ and $t_3^\ell$ can reach $t_1^u$, and $\lin^{t+1}(t_2^\ell) < \lin^{t+1}(t_3^\ell)$ (see Figure~\ref{fig:ell_in_out}).
The edges $(t(e_*^t), v_*^t)$ and $(v_*^t, h(e_*^t))$ have only one label $\infty \rightarrow 0$, which prohibits using these edges (line 16).

From line 17 to 24,
we assign labels to edges made by {\sf MakePlanar}.
For every edge $(u, v)$ in the lower area of $e_*^t$,
the edge $(u, v)$ might be divided into some edges, for instance $(u, w_1), (w_1, w_2),\ldots (w_k, v)$ by {\sf MakePlanar}.
In this case,
when $(u, v)$ has a label $a \rightarrow b$,
$(u, w_1)$ has a label $a \rightarrow b$
and the other edges have labels $b \rightarrow b$ (see Figure~\ref{fig:mp_label}).

From line 25 to 27,
we update the parents of the vertices whose parents are in the lower area of $e_*^t$.
For each vertex $v$ in $\cirblcV$ which $p^t(v)$ is in the lower area of $e_*^t$,
we let $p^{t+1}(v) = v_*^t$.

\begin{figure}[bt]
\begin{center}
\includegraphics[width= 13 cm]{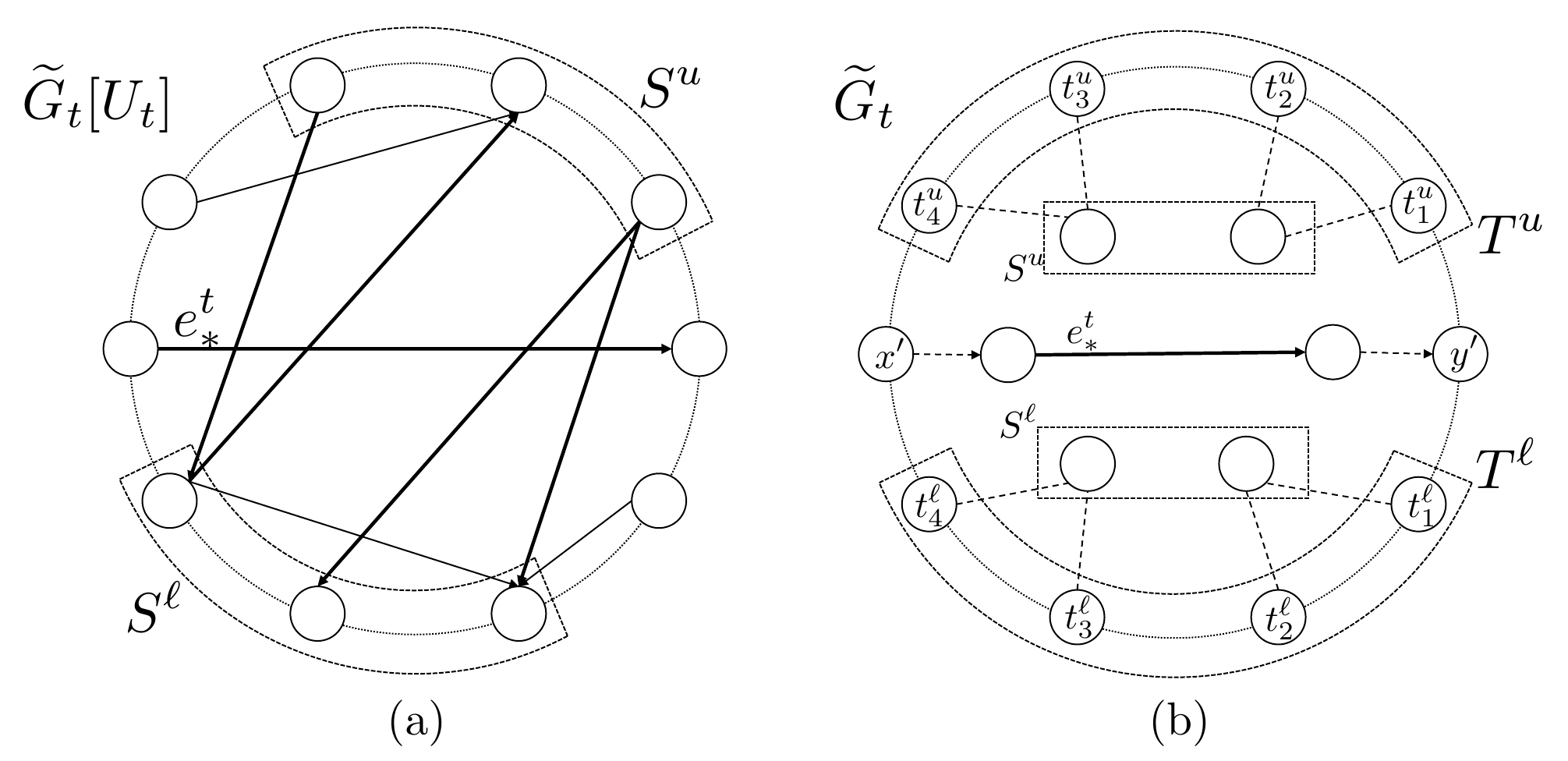}
\caption{(a) An example of $S^\ell$ and $S^u$, (b) An example of $T^\ell$ and $T^u$.}
\label{fig:ST_ul}
\end{center}
\end{figure}

For a gadget graph $\gadG = (\gadV, \gadE, \gadK, \gadL)$,
we use
$(v_1, \ell_1)\Rightarrow (v_2, \ell_2), \Rightarrow \cdots \Rightarrow (v_m, \ell_m)$ 
to denote a token tour from $v_1$ to $v_m$ in $\gadG$
with having a level $\ell_i$ at $v_i \in \gadV$ for any $1\le i \le m$. 
Needless to say,
if such a tour exists,
$(v_i, v_{i+1})\in \gadE$ and $\ell'_i \rightarrow \ell_{i+1} \in \gadL(v_i, v_{i+1})$ where $\ell_i \ge \ell'_i$ for any $1\le i < m$.
In addition,
when we would like to show which available labels we used, we write, for example, $(v_i, \ell_i; \ell'_i\rightarrow \ell_{i+1}) \Rightarrow (v_{i+1}, \ell_{i+1})$, which means the available label $\ell'_i \rightarrow \ell_{i+1}$ was used.
The following lemma shows that
paths in $\gadG_0$ remain in $\gadG_t$ for every $t$.

\begin{figure}
\begin{center}
\includegraphics[width= 8 cm]{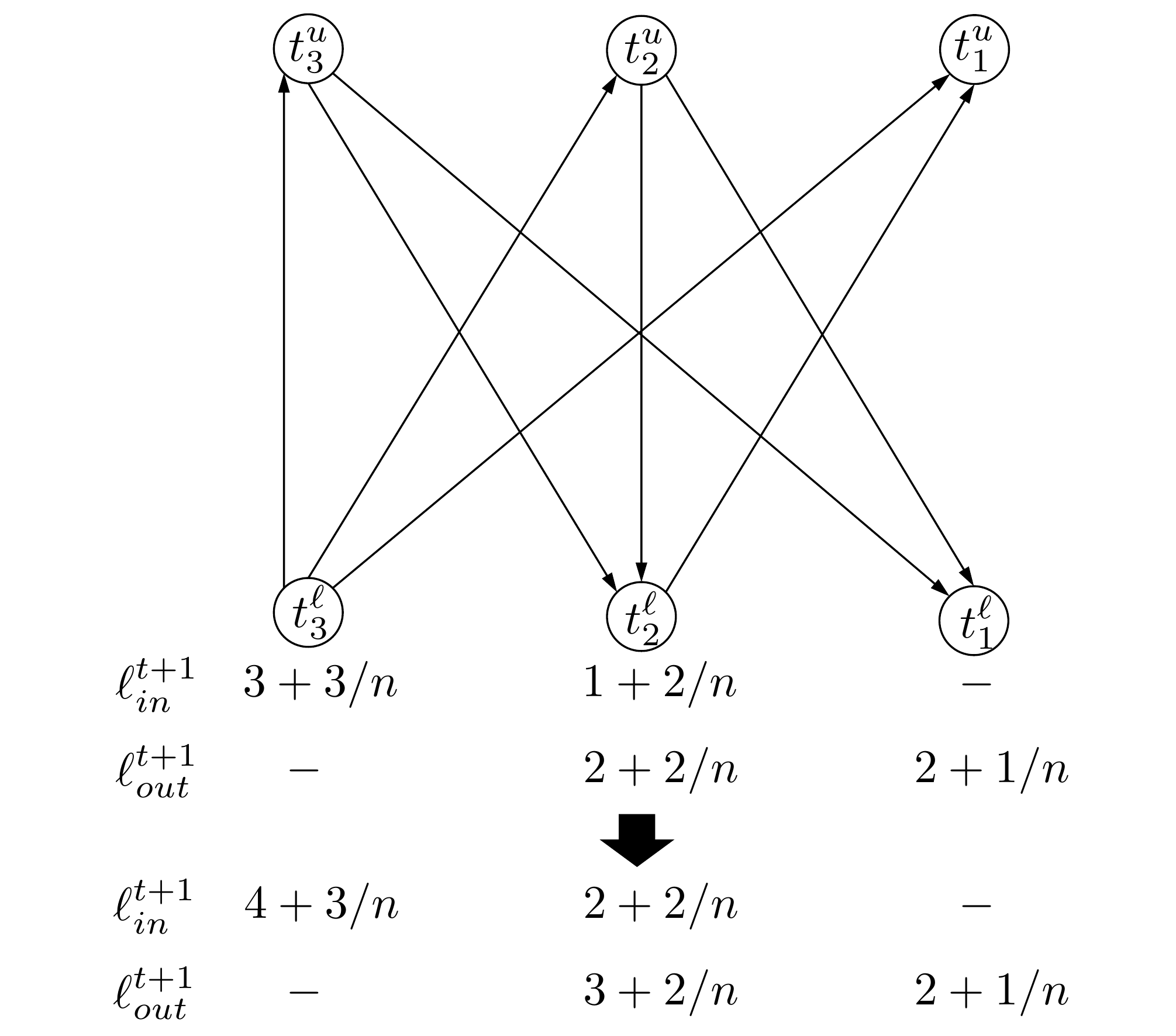}
\caption{How to calculate in and out levels.}
\label{fig:ell_in_out}
\end{center}
\end{figure}

\begin{figure}
\begin{center}
\includegraphics[width= 12 cm]{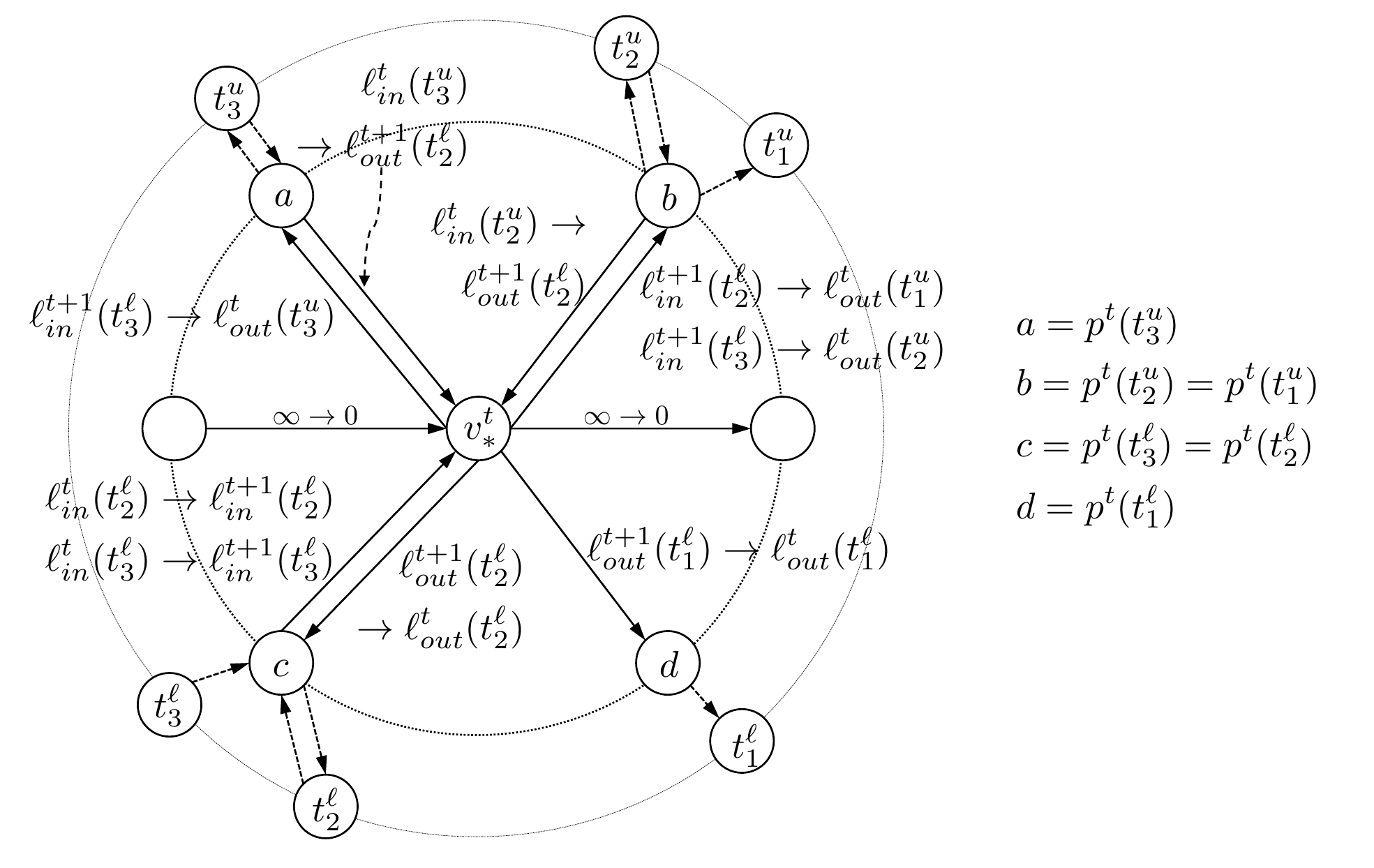}
\caption{How to assign labels to edges.}
\label{fig:label}
\end{center}
\end{figure}

\begin{figure}
\begin{center}
\includegraphics[width= 11 cm]{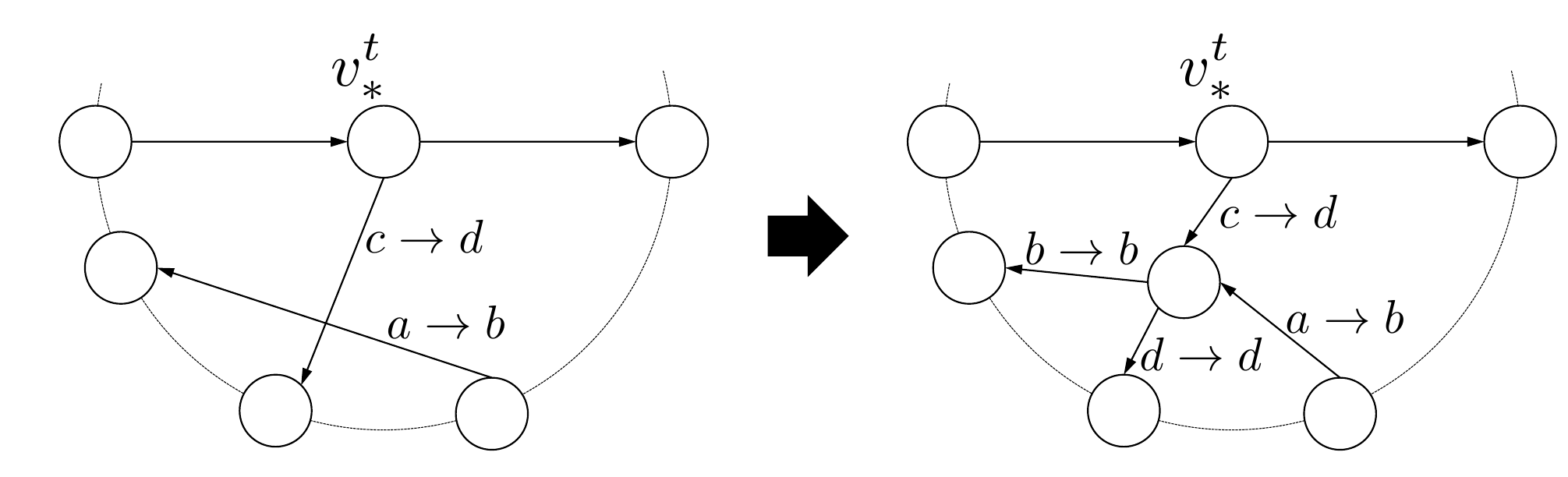}
\caption{How to assign labels to edges made by {\sf MakePlanar}.}
\label{fig:mp_label}
\end{center}
\end{figure}

\begin{figure}
\begin{center}
\includegraphics[width= 10 cm]{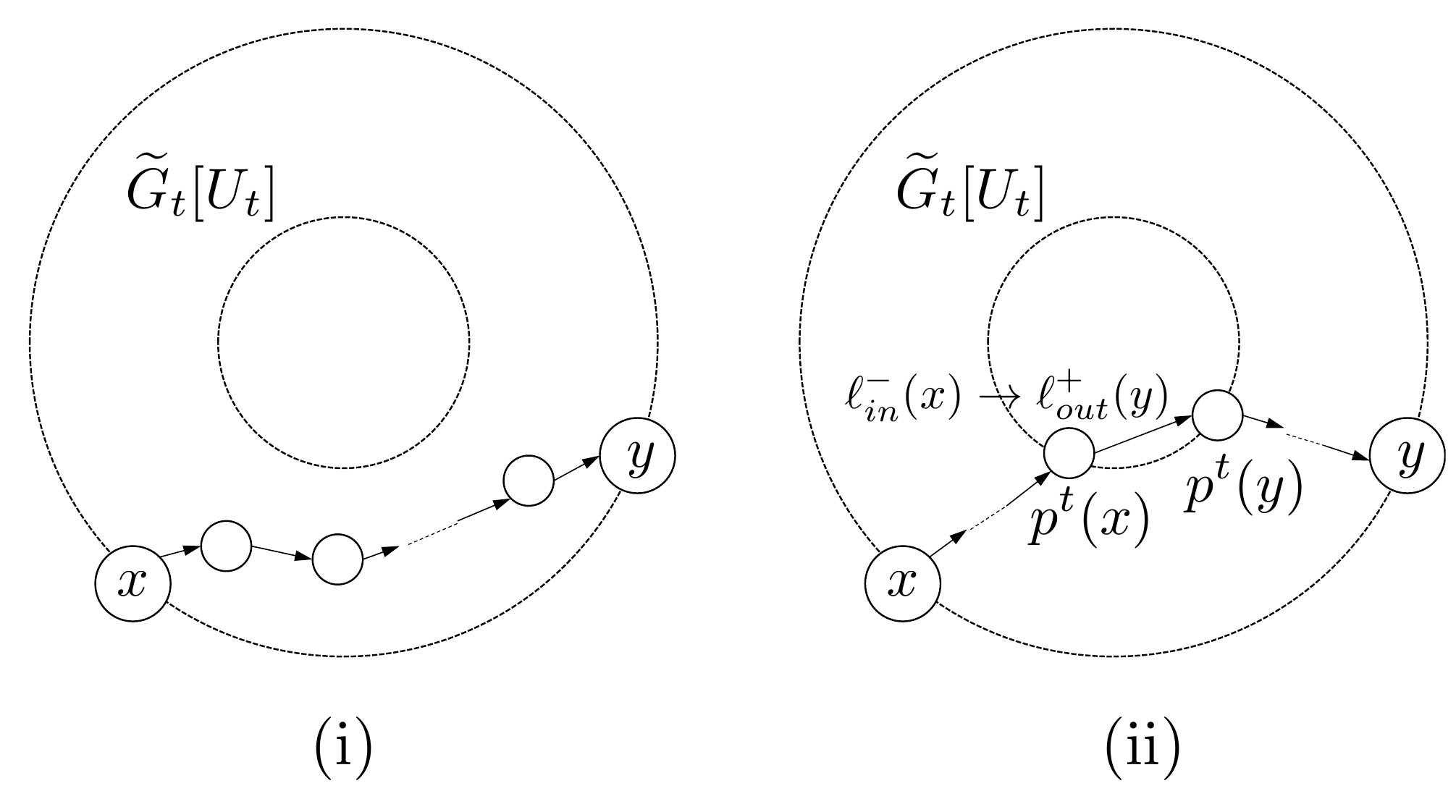}
\caption{Two cases of token tours in the proof of Lemma~\ref{lem:main_lem1}.}
\label{fig:tour_i_ii}
\end{center}
\end{figure}

\begin{lemma}\label{lem:main_lem1}
	For any $t$ in Algorithm~\ref{alg:trans_gadget},
	if there exists an edge from $x$ toward $y$ in $\gadG_0$,
	then there exists a token tour from $x$ to $y$ in $\gadG_t$ whose length is at most $2t+1$.
\end{lemma}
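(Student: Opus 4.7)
My plan is to prove Lemma~\ref{lem:main_lem1} by induction on $t$. For the base case $t=0$, the single-edge sequence $((x,y))$ is a token tour of length $1 = 2\cdot 0 + 1$: the sole edge carries the label $\{0\to\infty\}$, which accepts the initial level $\infty$, and the token arrives at $y$ with level $\infty$.

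For the inductive step, I would take the tour $T_t=(e_1,\ldots,e_m)$ in $\gadG_t$ with $m\le 2t+1$ provided by the hypothesis and build a tour $T_{t+1}$ in $\gadG_{t+1}$ of length at most $2t+3$. The transformation absorbs the lower area of the lowest gap-$2^+$ chord $e_*^t$ into the new vertex $v_*^t$. Call an edge \emph{affected} by step $t+1$ if it crosses $e_*^t$, equals $e_*^t$, or has an endpoint strictly inside the lower area. If no $e_k$ is affected, then each $e_k$ and its label are preserved in $\gadG_{t+1}$ (line~1 and lines~17--24 of Algorithm~\ref{alg:change_label}), and I take $T_{t+1}:=T_t$. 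Otherwise, letting $i$ and $j$ be the least and greatest affected indices, I splice
\[
  T_{t+1} \;:=\; (e_1,\ldots,e_{i-1}) \,\cdot\, \bigl((t(e_i),v_*^t),\,(v_*^t,h(e_j))\bigr) \,\cdot\, (e_{j+1},\ldots,e_m).
\]
Line~8 of Algorithm~\ref{alg:trans_gadget} supplies both detour edges, since the endpoints of affected edges lie in $S^\ell\cup S^u\cup\{t(e_*^t),h(e_*^t)\}$, and the length is $(i-1)+2+(m-j)\le m+1\le 2t+3$.

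The heart of the proof is verifying that $T_{t+1}$ is indeed a token tour, i.e., that labels admit a consistent sequence of levels. Prefix- and suffix-edges retain their old labels, so composition is automatic there. For the detour I would pick a label on $(t(e_i),v_*^t)$ whose in-level is at most the level carried by $T_t$ when it enters $t(e_i)$, and a label on $(v_*^t,h(e_j))$ whose out-level matches the in-level required by $e_{j+1}$. The bundled labels assigned in lines~9, 10, 13, 14 of Algorithm~\ref{alg:change_label} are designed for exactly this purpose, and the crucial tool will be Lemma~\ref{lem:in_out}: since $\lin^{t+1}(t_i^\ell)>\lout^{t+1}(t_j^\ell)$ whenever $i>j$, one can always pick an intermediate level at $v_*^t$ that simultaneously unlocks both halves of the detour.

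The main obstacle I anticipate is this label-matching step, which splits into the two geometric sub-cases illustrated in Figure~\ref{fig:tour_i_ii} (whether $t(e_i)$ and $h(e_j)$ lie on the same side of $e_*^t$ or on opposite sides). In each sub-case I would chase the token level through the relevant label set constructed by Algorithm~\ref{alg:change_label}, relying on a concurrent invariant---maintained alongside the main induction---that the outer-vertex attributes $\lin^t(\cdot)$, $\lout^t(\cdot)$, $p^t(\cdot)$ continue to encode realized token tours between each outer vertex and its parent in $U_t$, in both directions. The intuition echoes the planarity/crossing argument in Lemma~\ref{lem:cross_reach}, but lifted into the labelled-gadget world. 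Once the label chase is pinned down, the length arithmetic, edge existence, and $p^{t+1}$-bookkeeping follow routinely from Algorithms~\ref{alg:trans_gadget} and~\ref{alg:change_label}.
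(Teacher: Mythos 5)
Your plan identifies the right difficulty (label-matching at the detour) but leaves it unsolved, and the weak induction hypothesis you propose cannot resolve it. Saying only ``there exists a token tour of length at most $2t+1$'' gives you no handle on \emph{what level} that tour carries when it enters $t(e_i)$. The label sets on $(u,v_*^t)$ are not generic: line~9 of Algorithm~\ref{alg:change_label} puts only the pairs $\lin^t(v)\rightarrow\lin^{t+1}(v)$ for $v$ with $p^t(v)=u$ on that edge, and lines~10, 13, 14 are similarly rigid. Unless the tour arrives at $t(e_i)$ with a level that is $\ge$ one of these specific $\lin^t(v)$ values --- and, moreover, whose corresponding out-level $\lin^{t+1}(v)$ is large enough to unlock $(v_*^t,h(e_j))$ --- the splice simply has no available label, and Lemma~\ref{lem:in_out} offers no help (it is an ordering fact among $T^\ell$-vertices, not a guarantee that an arbitrary level at $v_*^t$ clears some threshold). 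The ``concurrent invariant'' you mention in the last paragraph is exactly what is missing, but it cannot be ``maintained alongside'' the main induction: it has to \emph{be} the induction hypothesis, because the object you are patching is the very tour whose level profile must be controlled.

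The paper's proof makes precisely this strengthening. It proves, by induction on $t$, that the tour from $x$ to $y$ in $\gadG_t$ either (i) uses no chord of $\gadG_t[U_t]$ at all, or (ii) has the rigid two-phase shape $(x,\infty)\Rightarrow\cdots\Rightarrow(p^t(x),\lin^t(x))\Rightarrow(p^t(y),\lout^t(y)^+)\Rightarrow\cdots\Rightarrow(y,\infty)$ with exactly \emph{one} chord of $\gadG_t[U_t]$, namely $(p^t(x),p^t(y))$. That pins the token level at $p^t(x)$ to exactly $\lin^t(x)$, which is exactly the in-level of the label the algorithm installs on $(p^t(x),v_*^t)$; and because $(x,y)\in\cirblcE$, the vertex $x$ participates in the $\min$/$\max$ defining the label on $(v_*^t,p^t(y))$, so the two halves of the detour compose. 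It also makes your ``affected-index'' bookkeeping trivial: there is at most one edge of the tour that can cross $e_*^t$, so there is never a question of what happens to unaffected edges trapped between two affected ones, nor of whether the single detour through $v_*^t$ semantically replaces several crossings at once --- a scenario your splice would face but does not address. To repair your argument you would need to replace ``there exists a tour of length $\le 2t+1$'' with this structured statement and re-run the case analysis on whether $(p^t(x),p^t(y))$ lies below, above, or crosses $e_*^t$; at that point you would essentially be reproducing the paper's proof.
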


\begin{proof}
We prove the lemma by showing that if $(x, y) \in \gadE_0$ then one of the following two statements holds in $\gadG_t$ for any $t$:
\begin{enumerate}[label=(\roman*)]
\item
there exists a token tour of length at most $2t+1$ from $x$ to $y$ which uses no chords appearing in $\gadG_t[U_t]$ (see Figure~\ref{fig:tour_i_ii}(i)).
\item
there exists a token tour $t_{x, y} = (x, \infty) \Rightarrow \cdots \Rightarrow (p^t(x), \lin^t(x); \lin^t(x)^- \rightarrow \lout^t(y)^+) \Rightarrow (p^t(y), \lout^t(y)^+; \lout^t(y)^-\rightarrow \ell) \Rightarrow \cdots \Rightarrow (y, \infty)$ where $\lin^t(x)^- \le \lin^t(x)$, $\lout^t(y)^+\ge \lout^t(y)$ and $\lout^t(y)^-\le \lout^t(y)$.
In addition, this tour uses no chords appearing in $\gadG_t[U_t]$ except $(p^t(x), p^t(y))$, and its length is at most $2t+1$ (see Figure~\ref{fig:tour_i_ii}(ii)).
\end{enumerate}
We prove by induction on $t$.
We have a tour $(x, \infty; 0\rightarrow \infty) \Rightarrow (y, \infty)$ in $\gadG_0$.
Thus $\gadG_0$ satisfies the statement (i) if $x$ and $y$ are consecutive on the cycle, and otherwise satisfies the statement (ii).

Assume that the statement (i) holds in $\gadG_t$.
The tour from $x$ to $y$ appears also in $\gadG_{t+1}$ and satisfies the statement (i) in $\gadG_{t+1}$. 
Now, we suppose the statement (ii) holds in $\gadG_t$.
We first consider the case that the chord $(p^t(x), p^t(y))$ does not cross $e_*^t$.
When $(p^t(x), p^t(y))$ is in the lower area of $e_*^t$,
the tour $t_{x,y}$ satisfies statement (i) in $\gadG_{t+1}$.
When $(p^t(x), p^t(y))$ is in the upper area of $e_*^t$ or equal to $e_*^t$,
the tour $t_{x,y}$ satisfies statement (ii) in $\gadG_{t+1}$.
Next, we assume that the chord $(p^t(x), p^t(y))$ crosses $e_*^t$.
There are two cases:
\begin{enumerate}[label=(\Roman*)]
	\item
	$p^t(x) \in S^\ell$ and $p^t(y) \in S^u$:
	We have $x\in T^\ell$ and $y\in T^u$.
        There exists a label $\lin^t(x) \rightarrow \lin^{t+1}(x) \in \gadL^{t+1}(p^t(x), v_*^t)$ (cf. line 9 of Algorithm~\ref{alg:change_label}).
        There also exists a label $\ell_{min} \rightarrow \lout^t(y) \in \gadL^{t+1}(v_*^t, p^t(y))$ where $\ell_{min} = \min_{t^\ell \in T^\ell}\{\lin^{t+1}(t^\ell)\ |\ (t^\ell, y) \in \cirblcE\}$ (cf. line 14 of Algorithm~\ref{alg:change_label}).
        Since $x \in T^\ell$ and $(x, y) \in \cirblcE$,
        we have $\ell_{min} \le \lin^{t+1}(x)$.
        Thus, in $\gadG_{t+1}$,
	there exists a token tour $(x, \infty) \Rightarrow \cdots \Rightarrow (p^t(x), \lin^t(x);\lin^t(x)\rightarrow \lin^{t+1}(x)) \Rightarrow (v_*^t, \lin^{t+1}(x);\ell_{min}\rightarrow \lout^t(y)) \Rightarrow (p^t(y), \lout^t(y)) \Rightarrow \cdots \Rightarrow (y, \infty)$.
	If the edge $(p^t(x), v_*^t)$ has a crossing point in the lower area of $e_*^t$,
	we have to modify the part $(p^t(x), \lin^t(x);\lin^t(x)\rightarrow \lin^{t+1}(x)) \Rightarrow (v_*^t, \lin^{t+1}(x))$ to $(p^t(x), \lin^t(x);\lin^t(x)\rightarrow \lin^{t+1}(x)) \Rightarrow (u, \lin^{t+1}(x); \lin^{t+1}(x)\rightarrow \lin^{t+1}(x)) \Rightarrow (v_*^t, \lin^{t+1}(x))$ where $u$ is a vertex created by {\sf MakePlanar}.
	
	\item
	$p^t(x) \in S^u$ and $p^t(y) \in S^\ell$:
	We have $x\in T^u$ and $y\in T^\ell$.
        There exists a label $\lin^t(x) \rightarrow \ell_{max} \in \gadL^{t+1}(p^t(x), v_*^t)$ where $\ell_{max} = \max_{t^\ell \in T^\ell}\{\lout^{t+1}(t^\ell)\ |\ (x, t^\ell) \in \cirblcE\}$ (cf. line 13 of Algorithm~\ref{alg:change_label}).
        Since $y \in T^\ell$ and $(x, y) \in \cirblcE$,
        we have $\ell_{max} \ge \lout^{t+1}(y)$.
        There also exists a label $\lout^{t+1}(y) \rightarrow \lout^t(y) \in \gadL^{t+1}(v_*^t, p^t(y))$ (cf. line 10 of Algorithm~\ref{alg:change_label}).
        Thus, in $\gadG_{t+1}$,
	there exists a token tour $(x, \infty) \Rightarrow \cdots \Rightarrow (p^t(x), \lin^t(x);\lin^t(x)\rightarrow \ell_{max}) \Rightarrow (v_*^t, \ell_{max};\lout^{t+1}(y)\rightarrow \lout^t(y)) \Rightarrow (p^t(y), \lout^t(y)) \Rightarrow \cdots \Rightarrow (y, \infty)$ in $\gadG_{t+1}$.
	If the edge $(v_*^t, p^t(y))$ has a crossing point in the lower area of $e_*^t$,
	we have to modify the part $(v_*^t, \ell_{max};\lout^{t+1}(y))\rightarrow \lout^t(y)) \Rightarrow (p^t(y), \lout^t(y))$ to $(v_*^t, \ell_{max};\lout^{t+1}(y)\rightarrow \lout^t(y)) \Rightarrow (u, \lout^t(y); \lout^t(y)\rightarrow \lout^{t}(y)) \Rightarrow (p^t(y), \lout^{t}(y))$ where $u$ is a vertex created by {\sf MakePlanar}.
\end{enumerate}
In both cases,
the length of the new tour is longer than that of $t_{x,y}$ by at most 2,
thus it is at most $2(t+1)+1$.
We have $p^{t+1}(x) = v_*^t$ in case (I) and $p^{t+1}(y) = v_*^t$ in case (II).
Thus the new tour has only one chord $(p^{t+1}(x), p^{t+1}(y))$ appearing in $\gadG_{t+1}[U_{t+1}]$,
and the chord has a label $\lin^t(x)^- \rightarrow \lout^t(y)^+$.
Therefore the new tour satisfies statement (ii).
\end{proof}

The following lemma shows the other direction:
if there exists a token tour from $x$ to $y$ in the gadget graph,
then there exists a path from $x$ to $y$ in the circle graph.
From Lemma~\ref{lem:cross_reach},
it is enough to prove the following Lemma.

\begin{lemma}\label{lem:main_lem2}
For any $t$ and $x, y \in \cirblcV$,
if there exists a token tour from $x$ to $y$ in $\gadG_t$,
then there exists a traversable edge sequence $(e_1, \ldots, e_k)$ in $\cirblcG$
such that $t(e_1) = x$ and $h(e_k) = y$.
\end{lemma}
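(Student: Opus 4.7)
The plan is to proceed by induction on the loop counter $t$ of Algorithm~\ref{alg:trans_gadget}. In the base case $t=0$, the graph $\gadG_0$ has edge set $\gadE_0 = \cirblcE$, trivial path function $\gadK_0\equiv\bot$, and every label equal to $\{0\to\infty\}$, so a token tour from $x$ to $y$ is just a directed walk $x=v_0\to v_1\to\cdots\to v_m=y$ in $\cirblcG$. A single edge is trivially a traversable sequence of length one, and any two edges sharing an endpoint semi-cross (by inspecting which arc of $\blcC$ the fourth endpoint lies on). Hence $\bigl((v_0,v_1),(v_1,v_2)\bigr)$ is traversable, Lemma~\ref{lem:cross_reach} gives $(v_0,v_2)\in\cirblcE$, and iterating this compression reduces the walk to a single chord $(x,y)\in\cirblcE$.

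For the inductive step, given a token tour $\tau$ from $x$ to $y$ in $\gadG_{t+1}$, I would lift $\tau$ to a token tour $\tau'$ from $x$ to $y$ in $\gadG_t$ and apply the inductive hypothesis. The edges of $\gadE_{t+1}\setminus\gadE_t$ are of two kinds: subdivisions of lower-area edges produced by {\sf MakePlanar}, and edges incident to the new vertex $v_*^t$. A sub-walk of $\tau$ through consecutive {\sf MakePlanar} subdivisions contracts transparently to its original edge because lines~17--24 of Algorithm~\ref{alg:change_label} choose the divided-edge labels so that a token entering the first subdivision at level $a$ exits the last at level $b$ exactly when $a\to b$ is a label of the original edge. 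For each sub-walk of $\tau$ of the form $(u,v_*^t)\Rightarrow(v_*^t,u')$, I would use the label formulas (lines~9--14 of Algorithm~\ref{alg:change_label} together with Algorithm~\ref{alg:calc_level}) to identify concrete vertices $v,v'\in\cirblcV$ with $p^t(v)=u,\ p^t(v')=u'$ and $(v,v')\in\cirblcE$, corresponding to a chord of $\cirblcG$ that either crosses $e_*^t$ or lies on the same side as $e_*^t$. Concatenating the chords so identified with the traversable sequences produced by the inductive hypothesis on the $v_*^t$-avoiding sub-tours of $\tau$ viewed inside $\gadG_t$ then yields a candidate sequence of chords of $\cirblcG$ from $x$ to $y$.

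The main difficulty is twofold: matching token levels across $v_*^t$ and verifying traversability of the concatenated sequence. For the level matching, the tie-breaking term $i/n$ in Algorithm~\ref{alg:calc_level}, together with the $\Delta$-inflation and Lemma~\ref{lem:in_out}, ensures that the $\lin^{t+1}$ and $\lout^{t+1}$ values are distinct and appropriately ordered, so any admissible pair of labels at $v_*^t$ is forced to come from a specific edge $(v,v')$ of $\cirblcG$; this makes the identification of $v$ and $v'$ unambiguous and guarantees $(v,v')\in\cirblcE$. For traversability, one uses the geometric configuration: consecutive chords in the concatenated sequence share an endpoint and hence semi-cross, while any chord lifted through $v_*^t$ separates two earlier chords because its endpoints lie on opposite arcs of $e_*^t$ in $\cirblcG$. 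I expect this to require a case split on whether $u\in S^\ell$ or $u\in S^u$, mirroring cases (I) and (II) in the proof of Lemma~\ref{lem:main_lem1}, and the combined geometric and level bookkeeping is what I anticipate to be the main technical content of the proof.
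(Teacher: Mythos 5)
Your base case is fine, but the inductive step --- which departs from the paper's non-inductive argument --- has two genuine gaps. First, you cannot apply the inductive hypothesis to the $v_*^t$-avoiding sub-tours of $\tau$: the hypothesis covers only endpoints in $\cirblcV$, whereas the endpoints of those sub-tours are the neighbors of $v_*^t$, which are typically inner vertices $v_*^k$ ($k<t$) or {\sf MakePlanar} vertices rather than rim vertices. You would need a strengthened invariant telling you what chord of $\cirblcG$ such an inner-vertex endpoint represents. The paper sidesteps this entirely by abandoning induction on $t$: after contracting {\sf MakePlanar} subdivisions and deleting redundant moves, it assigns to each label of the (single, fixed) token tour a \emph{source edge} in $\cirblcG$ (always a chord between rim vertices, by construction of the labels), and interleaves these with linking chords $e'_i\in\cirblcE$ whose endpoints have parents $t(e_*^k)$ and $h(e_*^k)$ when the $i$-th interior vertex of the tour is $v_*^k$. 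Your proposal never introduces an analogue of this machinery.

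Second, your claim that an admissible pair of labels at $v_*^t$ ``is forced to come from a specific edge $(v,v')$ of $\cirblcG$'' is not correct: the bundling at $v_*^t$ is designed precisely so that the label entering $v_*^t$ and the label leaving it can originate from two \emph{different} chords of $\cirblcE$. In case 1 of Lemma~\ref{lem:location}, for instance, the in-label and out-label give two distinct source edges $(t_i^u,t_p^\ell)$ and $(t_q^\ell,t_j^u)$ with $p\ge q$, which need not share an endpoint and need not collapse to a single chord from a preimage of $u$ to a preimage of $u'$. Consequently the chords in your concatenated sequence will not, in general, ``share an endpoint and hence semi-cross.'' What must actually be shown, for each chord beyond the first two, is that it \emph{separates} a pair of earlier chords; establishing this (via Lemmas~\ref{lem:proper_reach}--\ref{lem:adjacent} and the case analysis illustrated in Figures~\ref{fig:basic_relation}--\ref{fig:k_t_2_3}) is the bulk of the paper's proof and is not addressed by your proposal beyond an acknowledgment that a case split will be needed.
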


Before proving this,
we prepare several lemmas,
i.e., Lemma~\ref{lem:proper_reach} to Lemma~\ref{lem:adjacent}.

We refer to temporal in- and out-levels for $t_i^\ell \in T^\ell$ calculated at line 2 and 3 in Algorithm~\ref{alg:calc_level} as $t\lin^{t+1}(t_i^\ell)$ and $t\lout^{t+1}(t_i^\ell)$.
Namely,
\begin{eqnarray*}
	t\lin^{t+1}(t_i^\ell) &=& \max \{j\ |\ (t_i^\ell, t_j^u) \in \cirblcE,\ t_j^u\in T^u\} + i / n,\\
	t\lout^{t+1}(t_i^\ell) &=& \min \{j\ |\ (t_j^u, t_i^\ell) \in \cirblcE,\ t_j^u\in T^u\} + i / n.
\end{eqnarray*}

\begin{lemma}\label{lem:proper_reach}
In Algorithm~\ref{alg:calc_level} of step $t$,
if $i > j$
then $t\lin^{t+1}(t_i^\ell) > t\lin^{t+1}(t_j^\ell)$,
$\lin^{t+1}(t_i^\ell) > \lin^{t+1}(t_j^\ell)$,
$t\lout^{t+1}(t_i^\ell) > t\lout^{t+1}(t_j^\ell)$ and 
$\lout^{t+1}(t_i^\ell) > \lout^{t+1}(t_j^\ell)$.
Moreover,
if $i > j$ and $t\ell_{io_1}^{t+1}(t_i^\ell) > t\ell_{io_2}^{t+1}(t_j^\ell)$
then
$\ell_{io_1}^{t+1}(t_i^\ell) > \ell_{io_2}^{t+1}(t_j^\ell)$ $(io_1, io_2 \in \{in, out\})$.
\end{lemma}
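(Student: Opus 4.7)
My plan is to decompose the lemma into two parts. First, I would establish strict monotonicity of the temporary values $t\lin^{t+1}$ and $t\lout^{t+1}$ as functions of the index $i$ using planarity. Second, I would verify that the re-leveling for-loop of Algorithm~\ref{alg:calc_level} preserves every strict inequality between a larger-indexed vertex and a smaller-indexed one; this immediately yields both the main monotonicity claims and the ``moreover'' clause.

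For the first part I would fix $i>j$, argue by contradiction, and appeal to Lemma~\ref{lem:cross_reach}. Let $M_i=\max\{k:(t_i^\ell,t_k^u)\in\cirblcE\}$ so that $t\lin^{t+1}(t_i^\ell)=M_i+i/n$, and define $M_j$ analogously. Suppose $M_i<M_j$. By the indexing conventions (larger index = nearer to $x'$ on the respective arc), the four endpoints appear around $\blcC$ in the cyclic order $t_j^\ell,t_i^\ell,t_{M_j}^u,t_{M_i}^u$, so one arc of the chord $(t_i^\ell,t_{M_i}^u)$ contains $t_j^\ell$ and the other contains $t_{M_j}^u$. Directly from the definition, $(t_i^\ell,t_{M_i}^u)$ semi-crosses $(t_j^\ell,t_{M_j}^u)$, so Lemma~\ref{lem:cross_reach} with $k=2$ forces $(t_i^\ell,t_{M_j}^u)\in\cirblcE$, contradicting maximality of $M_i$. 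Hence $M_i\ge M_j$, and combining with the tiebreaker $i/n>j/n$ gives $t\lin^{t+1}(t_i^\ell)>t\lin^{t+1}(t_j^\ell)$. A symmetric argument on the minima $m_i=\min\{k:(t_k^u,t_i^\ell)\in\cirblcE\}$ handles $t\lout^{t+1}$.

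For the second part the work is purely arithmetic. Let $\Delta_s$ denote the value of $\Delta$ computed in iteration $s$ of the outer for-loop. Because $\Delta_s\ge 0$ and iteration $s$ adds $\Delta_s$ to the levels of exactly the indices $k\ge s$, a quick induction on $i$ yields
\[
\lin^{t+1}(t_i^\ell)-t\lin^{t+1}(t_i^\ell)\;=\;\lout^{t+1}(t_i^\ell)-t\lout^{t+1}(t_i^\ell)\;=\;\sum_{s=1}^{i}\Delta_s.
\]
Hence for any $i>j$ and any $io_1,io_2\in\{in,out\}$,
\[
\ell_{io_1}^{t+1}(t_i^\ell)-\ell_{io_2}^{t+1}(t_j^\ell)\;=\;\bigl(t\ell_{io_1}^{t+1}(t_i^\ell)-t\ell_{io_2}^{t+1}(t_j^\ell)\bigr)+\sum_{s=j+1}^{i}\Delta_s,
\]
and since the second summand is nonnegative, strict positivity transfers from the temporary difference to the final one. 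The diagonal cases $io_1=io_2$ are supplied by the first part, and the off-diagonal cases are exactly the hypothesis of the ``moreover'' clause.

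The main non-routine step is the planarity argument in the first part: one has to set up the cyclic order carefully ($T^\ell$ is ordered along $C_{\cirblcG}[y',x']$ whereas $T^u$ is ordered along the opposite arc, both with ``nearer to $x'$'' meaning larger index) in order to match the semi-crossing definition cleanly and apply Lemma~\ref{lem:cross_reach}. Once that configuration is pinned down, both halves of the lemma reduce to almost-trivial checks.
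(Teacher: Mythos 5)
Your proof is correct and takes essentially the same approach as the paper: a planarity contradiction via Lemma~\ref{lem:cross_reach} for the temporary levels, followed by the observation that the re-leveling loop only adds nonnegative increments that accumulate monotonically in the index. Your algebraic formulation with $\sum_{s=j+1}^i\Delta_s$ is a cleaner way to express the same additive argument the paper states informally.
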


\begin{proof}
Let $i' = \max\{k\ |\ (t_i^\ell, t_k^u) \in \cirblcE,\ t_k^u\in T^u\}$
and $j' = \max\{k\ |\ (t_j^\ell, t_k^u) \in \cirblcE,\ t_k^u\in T^u\}$.
Namely,
$t\lin^t(t_i^\ell) = i' + i/n$ and
$t\lin^t(t_j^\ell) = j' + j/n$.
Assume $i' < j'$.
Now we have $i > j$ and $i' < j'$.
Thus the edges $(t_i^\ell, t_{i'}^u)$ and $(t_j^\ell, t_{j'}^u)$ are crossing (see Figure~\ref{fig:proper_reach}(a)).
From Lemma~\ref{lem:cross_reach},
the edge $(t_i^\ell, t_{j'}^u)$ is in $\cirblcE$.
This is contrary to the fact that $i'$ is the maximum index.
Thus $t\lin^t(t_i^\ell) = i' + i/n > j' + j/n = t\lin^t(t_j^\ell)$ holds.

Let $i' = \min\{k\ |\ (t_k^u, t_i^\ell) \in \cirblcE,\ t_k^u\in T^u\}$
and $j' = \min\{k\ |\ (t_k^u, t_j^\ell) \in \cirblcE,\ t_k^u\in T^u\}$.
Namely,
$t\lout^t(t_i^\ell) = i' + i/n$ and
$t\lout^t(t_j^\ell) = j' + j/n$.
Assume $i' < j'$.
Now we have $i > j$ and $i' < j'$.
Thus the edges $(t_{i'}^u, t_i^\ell)$ and $(t_{j'}^u, t_j^\ell)$ are crossing (see Figure~\ref{fig:proper_reach}(b)).
From Lemma~\ref{lem:cross_reach},
the edge $(t_{i'}^u, t_j^\ell)$ is in $\cirblcE$.
This is contrary to the fact that $j'$ is the minimum index.
Thus $t\lout^t(t_i^\ell) = i' + i/n > j' + j/n= t\lout^t(t_j^\ell)$ holds.

The in- and out-levels $\lin^t(t_i^\ell)$, $\lin^t(t_j^\ell)$, $\lout^t(t_i^\ell)$ and $\lout^t(t_j^\ell)$
might be larger than $t\lin^t(t_i^\ell)$, $t\lin^t(t_j^\ell)$, $t\lout^t(t_i^\ell)$ and $t\lout^t(t_j^\ell)$
since some positive integer $\Delta$ might be added (see at line 8 and 9 in Algorithm~\ref{alg:calc_level}).
However,
when $\Delta$ is added to $\lin^t(t_j^\ell)$ (resp., $\lout^t(t_j^\ell)$),
$\Delta$ is also added to $\lin^t(t_i^\ell)$ (resp., $\lout^t(t_i^\ell)$)
since $i$ is not less than $j$.
Thus
$\lin^t(t_i^\ell) > \lin^t(t_j^\ell)$ and
$\lout^t(t_i^\ell) > \lout^t(t_j^\ell)$.
By the same argument,
the second statement holds.
\end{proof}

\begin{figure}
\begin{center}
\includegraphics[width= 7 cm]{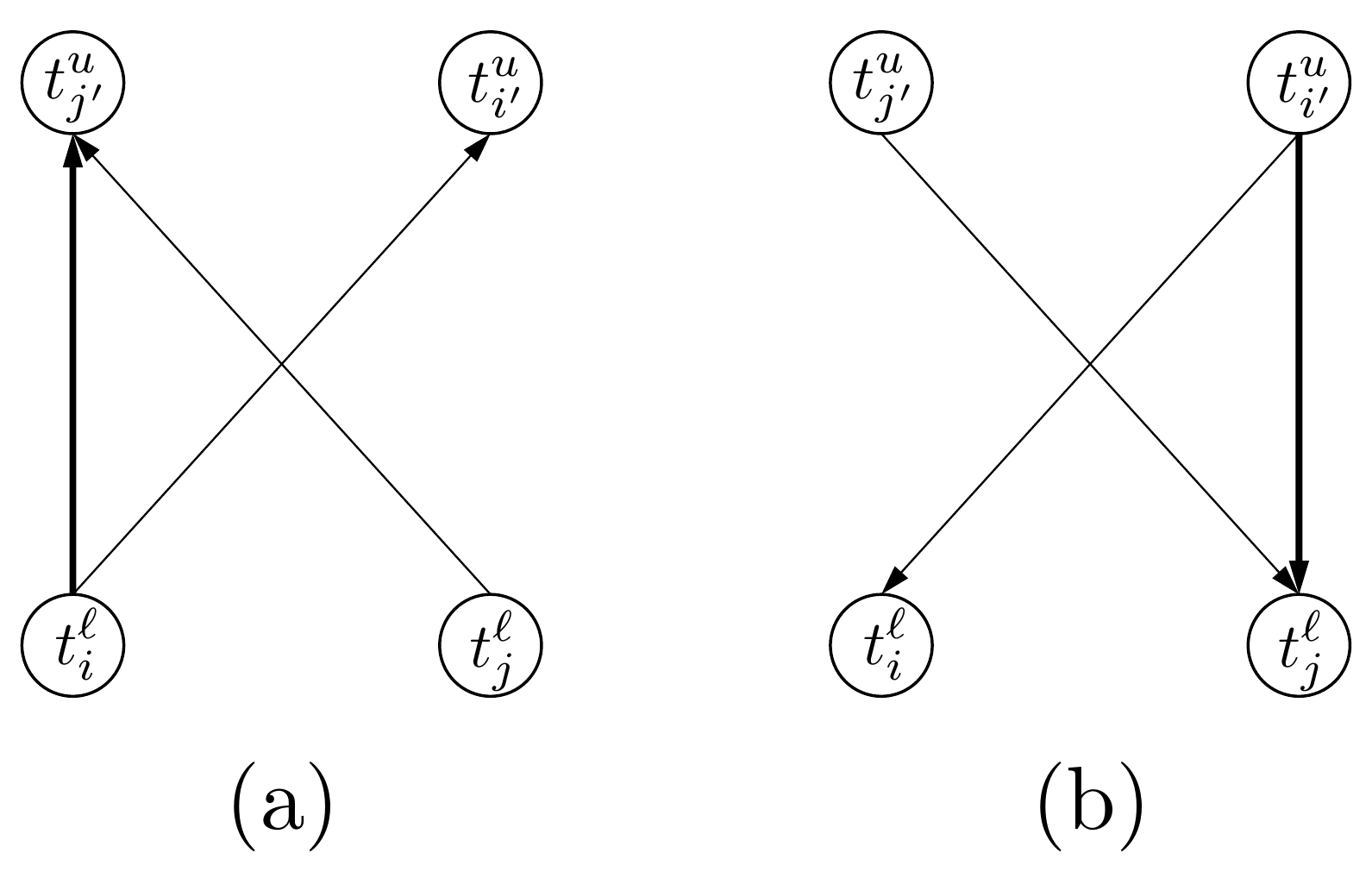}
\caption{Examples of wrong positions of vertices.}
\label{fig:proper_reach}
\end{center}
\end{figure}

\begin{lemma}\label{lem:magnitude}
In Algorithm~\ref{alg:change_label} in step $t$,
for $t_p^\ell$, $t_q^\ell\in T^\ell$,
if $t\lin^{t+1}(t_p^\ell) \ge t\lout^{t+1}(t_q^\ell)$,
then $\lin^{t+1}(t_p^\ell) \ge \lout^{t+1}(t_q^\ell)$.
\end{lemma}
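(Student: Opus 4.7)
The plan is to express the final in-/out-levels computed by Algorithm~\ref{alg:calc_level} as perturbations of the temporal values by cumulative $\Delta$'s, and then to show that the hypothesis is strong enough to force every intermediate $\Delta$ on the interval $(p,q]$ to vanish. Writing $f(k)=\max\{j\mid (t_k^\ell,t_j^u)\in\cirblcE\}$ and $g(k)=\min\{j\mid (t_j^u,t_k^\ell)\in\cirblcE\}$, lines~2--3 of Algorithm~\ref{alg:calc_level} give $t\lin^{t+1}(t_k^\ell)=f(k)+k/n$ and $t\lout^{t+1}(t_k^\ell)=g(k)+k/n$. Let $\Delta_i$ denote the value of $\Delta$ computed at iteration $i$ of the second for-loop and set $\delta_k=\sum_{i\le k}\Delta_i$. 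Since iteration $i$ adds $\Delta_i$ to both $\lin^{t+1}(t_k^\ell)$ and $\lout^{t+1}(t_k^\ell)$ exactly when $k\ge i$, the final values are
\[
\lin^{t+1}(t_k^\ell)=f(k)+k/n+\delta_k,\qquad \lout^{t+1}(t_k^\ell)=g(k)+k/n+\delta_k,
\]
and $(\delta_k)$ is non-decreasing in $k$.

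I would first handle $p\ge q$ directly: here $\delta_p\ge\delta_q$, so
\[
\lin^{t+1}(t_p^\ell)-\lout^{t+1}(t_q^\ell)=\bigl(t\lin^{t+1}(t_p^\ell)-t\lout^{t+1}(t_q^\ell)\bigr)+(\delta_p-\delta_q)\ge 0
\]
follows immediately from the hypothesis. The interesting case is $p<q$, in which $\delta_q\ge\delta_p$ could, a priori, cancel the slack provided by the hypothesis.

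For $p<q$ the plan is to show $\Delta_i=0$ for every $i\in(p,q]$, which gives $\delta_q=\delta_p$ and reduces to the previous case. From $f(p)+p/n\ge g(q)+q/n$ together with $0<(q-p)/n<1$, integrality forces $f(p)\ge g(q)+1$, i.e., $f(p)>g(q)$. Lemma~\ref{lem:proper_reach} (whose proof derives weak monotonicity of $f$ and $g$ via Lemma~\ref{lem:cross_reach}) then yields $f(i)\ge f(p)>g(q)\ge g(j)$ for every $i\in(p,q]$ and every $j<i$. Hence at the start of iteration $i$ of the second for-loop,
\[
\lout^{t+1}(t_j^\ell)-j/n=g(j)+\delta_j\le f(i)+\delta_{i-1}=\lin^{t+1}(t_i^\ell)-i/n
\]
for all $j<i$, so the $\max$ on line~6 of Algorithm~\ref{alg:calc_level} is non-positive and $\Delta_i=0$.

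The main obstacle I foresee is recognising that the hypothesis is genuinely restrictive when $p<q$: without the $+k/n$ tie-breakers the hypothesis would only give $f(p)\ge g(q)$, which is too weak. Thanks to the tie-breakers the gap becomes strict in the integers, and the weak monotonicity of $f$ and $g$ propagates this strict gap to every intermediate index, killing every $\Delta_i$ on the range $(p,q]$. After that the conclusion is the same one-line calculation as in the $p\ge q$ case.
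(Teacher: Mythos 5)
Your proof is correct and follows essentially the same route as the paper's: both arguments exploit Lemma~\ref{lem:proper_reach} (the weak monotonicity of the integer parts $f$, $g$) together with the integrality of $f,g$ and the $+k/n$ tie-breaker to show that no positive $\Delta$ can be applied strictly between indices $p$ and $q$, so the temporal inequality survives. The only stylistic difference is that the paper argues by contradiction from the existence of a single offending $\Delta_r > 0$, whereas you argue directly that $\Delta_i = 0$ for every $i\in(p,q]$ via the cumulative $\delta_k$; your version is a bit more explicit about the interplay between the current and temporal values, but the underlying mechanics are identical.
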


\begin{proof}
When $p > q$,
this lemma holds from Lemma~\ref{lem:proper_reach}.
Consider the case $p \le q$.
Let $t\lin^{t+1}(t_p^\ell) = i+p/n$ and $t\lout^{t+1}(t_q^\ell) = j+q/n$.
Since $t\lin^{t+1}(t_p^\ell) \ge t\lout^{t+1}(t_q^\ell)$ and $p \le q$, we have $i \ge j$.
Assume $\lin^{t+1}(t_p^\ell) < \lout^{t+1}(t_q^\ell)$.
In order that $\lin^{t+1}(t_p^\ell) < \lout^{t+1}(t_q^\ell)$ holds,
some positive integer $\Delta$ should be added to $\lout^{t+1}(t_{q}^\ell)$
at line 9,
and not added to $\lin^{t+1}(t_p^\ell)$
at line 8 of Algorithm~\ref{alg:calc_level}.
Thus,
there should be a vertex $t_{r}^\ell$
such that $q \ge r \ge p$
and $\Delta = \max\{t\lout^{t+1}(t_k^\ell)-k/n\ |\ 1\le k < r\} - (t\lin^{t+1}(t_{r}^\ell)-r/n) > 0$.
Since $\Delta$ is positive,
there exists a vertex $t_{s}^\ell$
such that $r > s$
and $t\lin^{t+1}(t_{r}^\ell) < t\lout^{t+1}(t_{s}^\ell)$.
Since $r \ge p$ and $q \ge s$,
we have $t\lin^{t+1}(t_{r}^\ell) \ge t\lin^{t+1}(t_{p}^\ell)$
and $t\lout^{t+1}(t_q^\ell) \ge t\lout^{t+1}(t_{s}^\ell)$
from Lemma~\ref{lem:proper_reach}.
Now $t\lin^{t+1}(t_{p}^\ell) \ge t\lout^{t+1}(t_q^\ell)$ holds,
thus $t\lin^{t+1}(t_{r}^\ell) \ge t\lout^{t+1}(t_{s}^\ell)$ and
$\Delta$ becomes non-positive.
This is a contradiction.
Thus $\lin^{t+1}(t_p^\ell) \ge \lout^{t+1}(t_q^\ell)$ holds.
\end{proof}

For every label in $\gadG_t$
and $k \le t$,
there are three types:
(i) $\lin^k(x) \rightarrow \lin^{k+1}(x)$ (cf. line 9),
(ii) $\lout^{k+1}(x) \rightarrow \lout^k(x)$ (cf. line 10) and
(iii) $\lin^k(x) \rightarrow \lout^k(y)$ (cf. line 13, 14) for $x, y \in \cirblcV$.
We define a \newwd{source vertex} and a \newwd{sink vertex} for any types of labels.
\begin{enumerate}[label=(\roman*)]
	\item source vertex is $x$. 
	When $i = \max\{j\ |\ (x, t_j^u)\in \cirblcE, t_j^u\in T^u\}$,
	sink vertex is $t_i^u$.
	\item sink vertex is $x$.
	When $i = \min\{j\ |\ (t_j^u, x)\in \cirblcE, t_j^u\in T^u\}$,
	source vertex is $t_i^u$.
	\item source vertex is $x$ and sink vertex is $y$.
\end{enumerate}
For a label $L$,
we refer to an edge in $\cirblcG$ from $L$'s source vertex to $L$'s sink vertex as a \newwd{source edge} of $L$.
It is obvious that
any source edge exists in $\cirblcG$.

\begin{lemma}\label{lem:location}
We consider any token tour of length 2 going through $v_*^t$$:$
$(x, a'; a\rightarrow b) \Rightarrow (v_*^t, b; c\rightarrow d) \Rightarrow (y, d)$ in $\gadG_{t'}$ where $t < t'$.
\begin{enumerate}
\item $(x, v_*^t)$ is in upper area, and $(v_*^t, y)$ is in upper area of $e_*^t$:
Let $(t_i^u, t_p^\ell)$ be 
$a\rightarrow b$'s source edge,
and we let $(t_q^\ell, t_j^u)$ be
$c\rightarrow d$'s source edge.
We have $p \ge q$.

\item $(x, v_*^t)$ is in upper area, and $(v_*^t, y)$ is in lower area of $e_*^t$:
Let $(t_i^u, t_p^\ell)$ be 
$a\rightarrow b$'s source edge,
and we let $(t_j^u, t_q^\ell)$ be
$c\rightarrow d$'s source edge.
We have $p \ge q$.

\item $(x, v_*^t)$ is in lower area, and $(v_*^t, y)$ is in upper area of $e_*^t$:
Let $(t_p^\ell, t_i^u)$ be 
$a\rightarrow b$'s source edge,
and we let $(t_q^\ell, t_j^u)$ be
$c\rightarrow d$'s source edge.
We have $i \ge j$ and $p \ge q$.

\item $(x, v_*^t)$ is in lower area, and $(v_*, y)$ is in lower area of $e_*^t$:
Let $(t_p^\ell, t_i^u)$ be 
$a\rightarrow b$'s source edge,
and we let $(t_j^u, t_q^\ell)$ be
$c\rightarrow d$'s source edge.
We have
$({\rm i})$ $i \ge j$ and $p \ge q$,
$({\rm ii})$ $i \ge j$ and $p < q$ or
$({\rm iii})$ $i < j$ and $p \ge q$.

\end{enumerate}
The indices $i$, $j$, $p$ and $q$ are
based on the sequences $T^u$ and $T^\ell$ made in Algorithm~\ref{alg:change_label} in step $t$.
\end{lemma}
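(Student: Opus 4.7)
The plan is to handle the four cases separately by reading the labels $a\rightarrow b$ and $c\rightarrow d$ off Algorithm~\ref{alg:change_label} and then applying the monotonicity and ordering lemmas already proved. I first note that labels on edges incident to $v_*^t$ are assigned when $v_*^t$ is created (step $t$) and, by line~1 of Algorithm~\ref{alg:change_label}, are inherited unchanged in all later iterations; thus $a\rightarrow b$ and $c\rightarrow d$ are produced by one of lines 9, 10, 13, 14 according to whether each edge lies in the lower or upper area of $e_*^t$. Using that $\lin^{t+1}$ and $\lout^{t+1}$ are strictly monotone on $T^\ell$ (Lemma~\ref{lem:proper_reach}), the $\max$ and $\min$ appearing in lines 13, 14 are attained at unique indices, which in each case recover the indices $p$ and $q$ in the claimed source edges.

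Cases 1, 2, 3 then reduce to one-line applications of monotonicity. In Case 1, $b=\lout^{t+1}(t_p^\ell)$ and $c=\lin^{t+1}(t_q^\ell)$, so the token-tour condition $b\ge c$ combined with Lemma~\ref{lem:in_out} forces $p\ge q$. In Case 2, both $b$ and $c$ are $\lout^{t+1}$-values, so Lemma~\ref{lem:proper_reach} gives $p\ge q$. In Case 3 the analogous argument on $\lin^{t+1}$-values yields $p\ge q$, while the extra conclusion $i\ge j$ follows by contradiction: if $i<j$ and $p>q$, the source edges $(t_p^\ell,t_i^u)$ and $(t_q^\ell,t_j^u)$ semi-cross on the cycle of $\cirblcG$, so Lemma~\ref{lem:cross_reach} produces $(t_p^\ell,t_j^u)\in\cirblcE$, contradicting the maximality defining $i$; the $p=q$ subcase is immediate from that maximality.

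The main obstacle is Case 4, where $b=\lin^{t+1}(t_p^\ell)$, $c=\lout^{t+1}(t_q^\ell)$, and we must exclude the single combination $i<j$ together with $p<q$. Under this combination the temporal values satisfy $t\lout^{t+1}(t_q^\ell)-t\lin^{t+1}(t_p^\ell)=(j-i)+(q-p)/n>0$. The key observation is that the $\Delta$-updates in the inner loop of Algorithm~\ref{alg:calc_level} propagate only to indices no smaller than the outer index, so any $\Delta$ added to $\lin^{t+1}(t_p^\ell)$ is also added to $\lout^{t+1}(t_q^\ell)$ (because $q>p$); hence the strict inequality $\lout^{t+1}(t_q^\ell)>\lin^{t+1}(t_p^\ell)$ survives all updates, contradicting $b\ge c$. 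All that remains is careful bookkeeping of which source-vertex convention applies in each case.
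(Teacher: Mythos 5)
Your proposal is correct and follows essentially the same approach as the paper: cases 1--3 via the monotonicity Lemmas~\ref{lem:in_out} and \ref{lem:proper_reach} plus Lemma~\ref{lem:cross_reach} for the $i\ge j$ conclusion, and case~4 by showing the temporal gap survives the $\Delta$-updates. The only cosmetic differences are that the paper reaches the case-4 conclusion by citing the second (``moreover'') statement of Lemma~\ref{lem:proper_reach} rather than re-deriving the $\Delta$-propagation argument inline, and that it handles $p=q$ in case~3 uniformly through semi-crossing rather than splitting off an immediate-maximality subcase.
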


\begin{proof}
\ 
\begin{enumerate}
\item
We have $b = \lout^{t+1}(t_p^\ell)$ and $c = \lin^{t+1}(t_q^\ell)$.
From the rule of token tours,
$\lout^{t+1}(t_p^\ell) \ge \lin^{t+1}(t_q^\ell)$ holds.
If $p < q$,
we have $\lout^{t+1}(t_p^\ell) < \lin^{t+1}(t_q^\ell)$
from Lemma~\ref{lem:in_out}.
Thus we have $p \ge q$.

\item 
We have $b = \lout^{t+1}(t_p^\ell)$ and $c = \lout^{t+1}(t_q^\ell)$.
From the rule of token tours,
$\lout^{t+1}(t_p^\ell) \ge \lout^{t+1}(t_q^\ell)$ holds.
If $p < q$,
we have $\lout^{t+1}(t_p^\ell) < \lout^{t+1}(t_q^\ell)$ from Lemma~\ref{lem:proper_reach}.
Thus we have $p \ge q$.

\item
We have $b = \lin^{t+1}(t_p^\ell)$ and $c = \lin^{t+1}(t_q^\ell)$.
From the rule of token tours,
$\lin^{t+1}(t_p^\ell) \ge \lin^{t+1}(t_q^\ell)$ holds.
If $p < q$,
we have $\lin^{t+1}(t_p^\ell) < \lin^{t+1}(t_q^\ell)$ from Lemma~\ref{lem:proper_reach}.
Thus we have $p \ge q$.
From the definition of source edge,
$t_i^u$ has the maximum index among vertices that $t_p^\ell$ can reach. 
Assume $i < j$.
Now we have $p \ge q$ and $i < j$.
Thus the edges $(t_p^\ell, t_i^u)$ and $(t_q^\ell, t_j^u)$ are semi-crossing.
From Lemma~\ref{lem:cross_reach},
the edge $(t_p^\ell, t_j^u)$ is in $\cirblcE$.
This is contrary to the fact that $i$ is the maximum index.
Thus $i \ge j$ holds.

\item 
We have $b = \lin^{t+1}(t_p^\ell)$ and $c = \lout^{t+1}(t_q^\ell)$.
From the rule of token tours,
$\lin^{t+1}(t_p^\ell) \ge \lout^{t+1}(t_q^\ell)$ holds.
We will show that $i < j$ and $p < q$ do not hold simultaneously.
Assume $i < j$ and $p < q$.
We have $i+p/n = t\lin^{t+1}(t_p^\ell) < t\lout^{t+1}(t_q^\ell) = j + q/n$.
From Lemma~\ref{lem:proper_reach},
we have $\lin^{t+1}(t_p^\ell) < \lout^{t+1}(t_q^\ell)$ since $p < q$,
and we cannot follow the tour.
Thus,
there are three possible relationships:
(i) $i \ge j$ and $p \ge q$,
(ii) $i \ge j$ and $p < q$,
(iii) $i < j$ and $p \ge q$.
\end{enumerate}
\end{proof}

\begin{figure}
\begin{center}
\includegraphics[width= 10 cm]{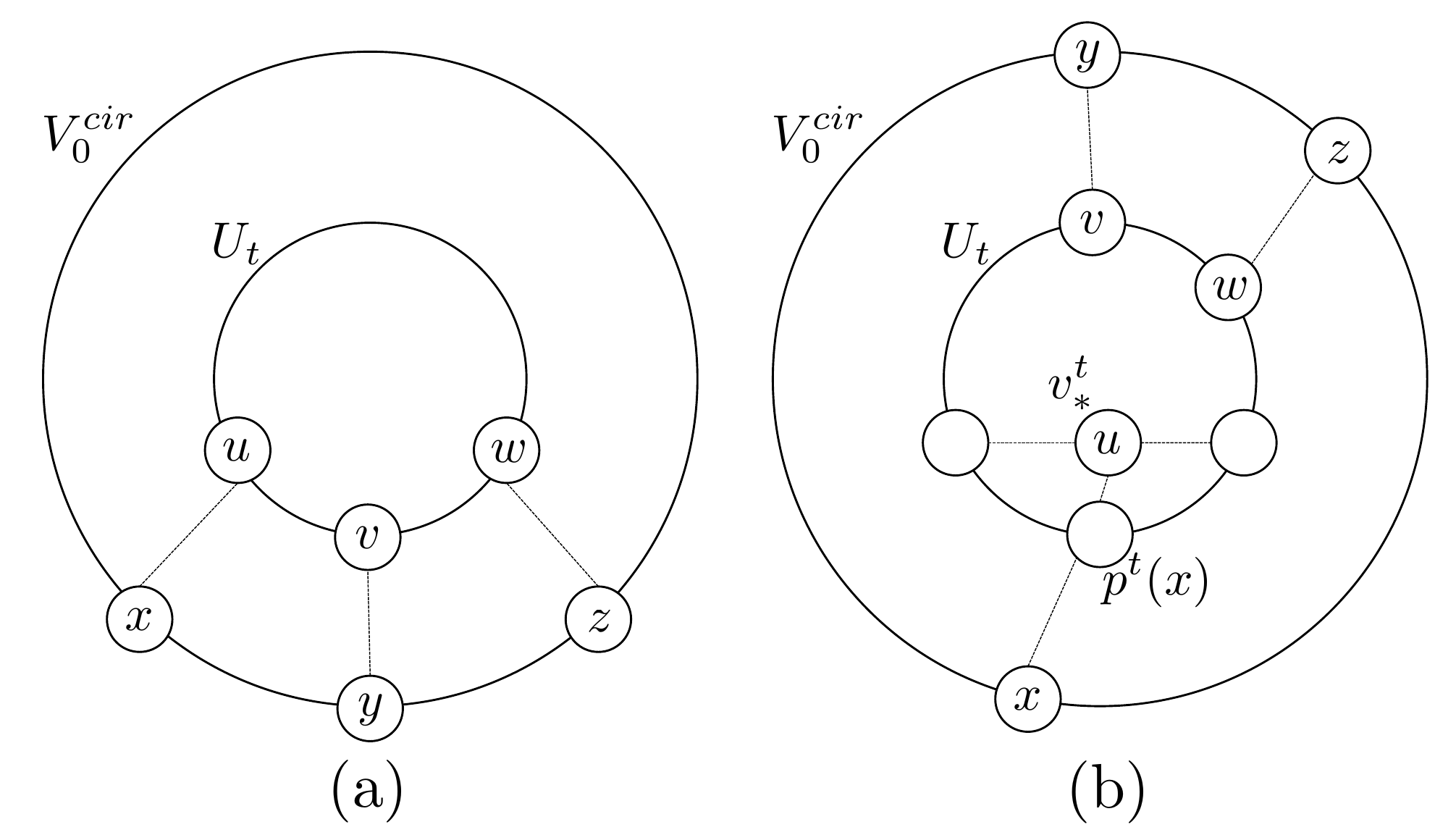}
\caption{Relations of vertices and their parents.}
\label{fig:parent_child}
\end{center}
\end{figure}

\begin{lemma}\label{lem:proper_order}
  For any three vertices $u$, $v$, $w \in U_t$,
  if $(u, v, w)$ is in clockwise (resp., anti-clockwise) order in $\gadG_t[U_t]$,
  then
  $(x, y, z)$ is also in clockwise (resp., anti-clockwise) order in $\cirblcG$
  for any $x$, $y$, $z \in \cirblcV$
  such that $p^t(x) = u$, $p^t(y) = v$ and $p^t(z) = w$ $($see $Figure~\ref{fig:parent_child} ({\rm a}))$.
\end{lemma}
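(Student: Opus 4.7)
The plan is to prove the lemma by induction on the step counter $t$, exploiting the fact that the parent function $p^t$ partitions $\cirblcV$ into ``blocks,'' each of which corresponds to a contiguous arc of the original cycle $\blcC$, and that these arcs appear on $\blcC$ in the same cyclic order as their representatives appear on the cycle of $\gadG_t[U_t]$. The base case $t=0$ is immediate, since $U_0=\cirblcV$, $p^0$ is the identity, and the cycle of $\gadG_0[U_0]$ coincides with $\blcC$.

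For the inductive step, I would read off from Algorithm~\ref{alg:trans_gadget} and Algorithm~\ref{alg:change_label} exactly how the circle graph part and the parent function change from step $t$ to step $t+1$: the cycle of $\gadG_{t+1}[U_{t+1}]$ is obtained from that of $\gadG_t[U_t]$ by replacing the open clockwise tour $\cC_{\gadG_t[U_t]}(t(e_*^t),h(e_*^t))$ (the strict interior of the lower arc of the picked chord $e_*^t$) by the single new vertex $v_*^t$, while the endpoints $t(e_*^t)$ and $h(e_*^t)$ are retained. Correspondingly, $p^{t+1}(v)=v_*^t$ exactly when $p^t(v)$ lay in that removed strict interior, and otherwise $p^{t+1}(v)=p^t(v)$. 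Now given $u,v,w\in U_{t+1}$ in clockwise order and $x,y,z\in\cirblcV$ with $p^{t+1}(x)=u$, $p^{t+1}(y)=v$, $p^{t+1}(z)=w$, I would split into two cases. If none of $u,v,w$ equals $v_*^t$, then all three belong to $U_t$, their cyclic order in $\gadG_t[U_t]$ is the same as in $\gadG_{t+1}[U_{t+1}]$ (deleting a consecutive arc and substituting a single vertex for it does not change the relative cyclic order of the remaining vertices), and the induction hypothesis applied via $p^t$ immediately yields the conclusion. If exactly one of $u,v,w$ equals $v_*^t$, say $v=v_*^t$ (the other two configurations follow by cyclic rotation of the triple), then $u$ and $w$ both lie on the closed upper arc $\cC_{\gadG_t[U_t]}[h(e_*^t),t(e_*^t)]$, while $p^t(y)$ lies strictly inside the lower arc; consequently $(u,p^t(y),w)$ is clockwise in $\gadG_t[U_t]$, and the induction hypothesis applied to these three vertices (and to $x,y,z$) closes the case. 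The anti-clockwise case is completely symmetric.

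The step that requires the most care is the claim that splicing $v_*^t$ in place of the entire open clockwise tour preserves the cyclic order of the remaining vertices, together with the treatment of the boundary vertices $t(e_*^t)$ and $h(e_*^t)$. These endpoints are not removed at line~9 of Algorithm~\ref{alg:trans_gadget}, so their parent assignments stay fixed and they continue to flank $v_*^t$ on the new cycle exactly as they flanked the interior of the lower arc on the old one; in particular, the case where $u$ or $w$ happens to equal $t(e_*^t)$ or $h(e_*^t)$ is already handled by the reasoning above. Once this geometric bookkeeping is made precise, the rest is a routine application of the inductive hypothesis.
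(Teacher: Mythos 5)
Your proposal is correct and takes essentially the same approach as the paper's proof: induction on $t$, with a case split on whether any of $u,v,w$ equals $v_*^t$, and in the latter case replacing $v_*^t$ by the corresponding parent $p^t(\cdot)$ (which lies in the interior of the removed lower arc) so that the induction hypothesis applies in $\gadG_t[U_t]$. You simply chose $v=v_*^t$ as the representative subcase while the paper chose $u=v_*^t$, and you spell out the splicing bookkeeping a bit more explicitly than the paper does, but the argument structure is the same.
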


\begin{proof}
We prove by induction on $t$.
Since the parent of every vertex is itself in step 0,
the Lemma is true in step 0.
Let $u$, $v$ and $w$ be vertices such that
$(u, v, w)$ is in clockwise (resp., anti-clockwise) order in $\gadG_{t+1}[U_{t+1}]$.
Fix any three vertices $x$, $y$ and $z$
such that $p^{t+1}(x) = u$, $p^{t+1}(y) = v$ and $p^{t+1}(z) = w$.
When none of $u$, $v$ or $w$ is $v_*^t$,
$(u, v, w)$ is in clockwise (resp., anti-clockwise) order 
also in $\gadG_{t}[U_{t}]$.
Thus $(x, y, z)$ is in clockwise (resp., anti-clockwise) order
from the induction hypothesis.
Let $u = v_*^t$.
Since $p^t(x)$ is in the lower area of $e_*^t$,
$(p^t(x), u, v)$ is in clockwise (resp., anti-clockwise) order in $\gadG_t[U_t]$ (see Figure~\ref{fig:parent_child}(b)).
From the induction hypothesis,
$(x, y, z)$ is in clockwise (resp., anti-clockwise) order in $\cirblcG$.
In cases $v = v_*^t$ or $w = v_*^t$,
the Lemma is proved in the same way.
\end{proof}

\begin{lemma}\label{lem:adjacent}
  For any $k \le t$ such that $v_*^k \in U_t$,
  let $x$ and $y$ be vertices such that $p^k(x) = t(e_*^k)$ and $p^k(y) = h(e_*^k)$ respectively.
  $p^t(x)$, $v_*^k$ and $p^t(y)$ are consecutive in $\gadG_t[U_t]$.
\end{lemma}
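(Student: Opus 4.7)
The plan is to prove the claim by induction on $t$, starting at $t=k+1$ (the first step at which $v_*^k$ enters $U_t$). For the base case, I would use the fact that line~9 of Algorithm~\ref{alg:trans_gadget} removes only the open arc $C_{\gadG_k[U_k]}(t(e_*^k),h(e_*^k))$ from the circle-graph vertex set and inserts $v_*^k$ in between, so $t(e_*^k),v_*^k,h(e_*^k)$ are consecutive on the cycle of $\gadG_{k+1}[U_{k+1}]$. Since the endpoints of $e_*^k$ are not strictly inside its lower area, the update rule at lines~25--27 of Algorithm~\ref{alg:change_label} leaves $p^{k+1}(x)=t(e_*^k)$ and $p^{k+1}(y)=h(e_*^k)$, which settles the base case.

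For the inductive step, assume the claim at step $t$ with $v_*^k\in U_{t+1}$. Set $u=p^t(x)$ and $w=p^t(y)$; by the induction hypothesis, $u,v_*^k,w$ are consecutive on the cycle of $\gadG_t[U_t]$. The only cycle surgery performed in step $t$ removes the open lower arc of $e_*^t$ and inserts $v_*^t$ between $t(e_*^t)$ and $h(e_*^t)$; because $v_*^k\in U_{t+1}$, $v_*^k$ cannot lie strictly in that open arc. I would then split on whether $v_*^k$ is an endpoint of $e_*^t$. If it is not, $v_*^k$ sits strictly in the upper arc, and its cycle-neighbors $u,w$ must also sit on the upper side --- anything strictly in the open lower arc is separated from $v_*^k$ by an endpoint of $e_*^t$, contradicting adjacency on the cycle. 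Hence $u,w\in U_{t+1}$, their roles as parents of $x,y$ are preserved by line~29 of Algorithm~\ref{alg:change_label}, and the insertion of $v_*^t$ on the lower side does not split the triple $u,v_*^k,w$.

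The harder case is when $v_*^k$ itself is an endpoint of $e_*^t$, say $v_*^k=t(e_*^t)$ (the other endpoint case being symmetric). Because $e_*^t$ is a gap-$2^+$ chord, the unique cycle-neighbor of $v_*^k$ along the lower arc lies strictly inside that arc. By the induction hypothesis exactly one of $u,w$ --- call it $w$ --- is this lower-arc neighbor and therefore gets removed from $U_{t+1}$, while the other lies on the upper side. Lines~25--27 then redirect $p^{t+1}(y)$ to $v_*^t$, and $p^{t+1}(x)=u$ persists. In the new cycle the order around $v_*^k$ reads $\ldots,u,v_*^k,v_*^t,h(e_*^t),\ldots$, so $p^{t+1}(x),v_*^k,p^{t+1}(y)=u,v_*^k,v_*^t$ are consecutive. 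The main obstacle I anticipate is exactly this endpoint case: one has to use the gap-$2^+$ property to guarantee that the child vertex $y$ really has its parent absorbed by the lower area (and not left at $h(e_*^t)$), so that the parent-update rule redirects it to precisely the newly created $v_*^t$ and thereby preserves the three-vertex consecutive pattern.
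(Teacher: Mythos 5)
Your proof is correct and follows essentially the same inductive strategy as the paper: induct on $t$, establish the base case when $v_*^k$ first enters $U_t$ (you explicitly use $t=k+1$, where the paper writes $t=k$ with an implicit off-by-one), and in the inductive step split on whether $v_*^k$ is an endpoint of $e_*^t$. You supply slightly more justification than the paper for why, in the non-endpoint case, the cycle-neighbors of $v_*^k$ cannot lie in the open lower arc and hence $p^{t+1}(x)=p^t(x)$, $p^{t+1}(y)=p^t(y)$; the minor citation of ``line~29'' should read line~28 of Algorithm~\ref{alg:change_label}.
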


\begin{proof}
Fix any $k$.
We prove by induction on $t$.
When $t = k$,
it is obvious that $p^t(x)$, $v_*^k$ and $p^t(y)$ are consecutive.
Assume the Lemma is true for a fixed $t$.
If $v_*^k$ is not an endpoint of $e_*^{t}$,
we have $p^t(x) = p^{t+1}(x)$ and $p^t(y) = p^{t+1}(y)$.
Thus $p^{t+1}(x)$, $v_*^k$ and $p^{t+1}(y)$ are consecutive in $\gadG_{t+1}[U_{t+1}]$ from the induction hypothesis.
When $v_*^k$ is an endpoint of $e_*^{t}$ and
$p^t(x)$ (resp., $p^t(y)$) is on $C_{\gadG_t[U_t]}(t(e_*^t), h(e_*^t))$,
we have $p^{t+1}(x) = v_*^t$ (resp., $p^{t+1}(y) = v_*^t$).
Since $v_*^k$ and $v_*^t$ are adjacent,
$p^{t+1}(x)$, $v_*^k$ and $p^{t+1}(y)$ are consecutive in $\gadG_{t+1}[U_{t+1}]$.
\end{proof}

Now
we prove Lemma~\ref{lem:main_lem2}.
\setcounter{lemma}{5}
\begin{lemma}[restated]
For any $t$ and $x, y \in \cirblcV$,
if there exists a token tour from $x$ to $y$ in $\gadG_t$,
then there exists a traversable edge sequence $(e_1, \ldots, e_k)$ in $\cirblcG$
such that $t(e_1) = x$ and $h(e_k) = y$.
\end{lemma}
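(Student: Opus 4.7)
The proof will proceed by induction on $t$. I will strengthen the statement to the claim that for any token tour in $\gadG_t$ from $x$ to $y$, the sequence of \emph{source edges} of the labels used, taken in order, is a traversable sequence in $\cirblcG$ from $x$ to $y$. The base case $t=0$ is direct: every edge of $\gadG_0$ carries the single label $0\rightarrow\infty$, which is of the form $\lin^0(x)\rightarrow\lout^0(y)$ whose source edge is the edge itself, so a token tour corresponds exactly to a directed path $x = x_0, x_1, \ldots, x_m = y$ in $\cirblcE$. Any two consecutive chords share a common vertex lying on both closed arcs of the other chord, so semi-crossing holds trivially; and for $i \ge 3$, taking $p = i-2$, $q = i-1$ lets the shared endpoint witness the separation condition by an elementary arc-membership check, so the path is traversable.

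For the inductive step, given a token tour $T$ in $\gadG_{t+1}$ from $x$ to $y$, I partition $T$ at its visits to $v_*^t$ as
\[
T = T_0 \cdot P_1 \cdot T_1 \cdot P_2 \cdots P_r \cdot T_r,
\]
where each $P_i$ is a length-two passage $(u_i, v_*^t, w_i)$ through $v_*^t$ and each $T_j$ avoids $v_*^t$ internally. Every edge traversed in a $T_j$ either carries a label inherited unchanged from $\gadG_t$ (line 1 of Algorithm~\ref{alg:change_label}) or is a MakePlanar subdivision edge whose label faithfully encodes the underlying original edge's label (lines 17--23). Collapsing subdivision chains converts $T_j$ into a token tour of $\gadG_t$ with the same source-edge sequence, and the inductive hypothesis gives a traversable sub-sequence $\sigma_j$ in $\cirblcG$ joining the endpoints of $T_j$. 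For each passage $P_i$, let $e_i^{\rm in}, e_i^{\rm out} \in \cirblcE$ be the source edges of the two labels used; by Lemma~\ref{lem:location} their endpoints obey explicit index inequalities in the step-$t$ sequences $T^u$ and $T^\ell$. Concatenating $\sigma_0, e_1^{\rm in}, e_1^{\rm out}, \sigma_1, \ldots, \sigma_r$ produces the candidate traversable sequence from $x$ to $y$.

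It then remains to verify the two traversability conditions on this concatenation. Semi-crossing of the first pair of edges is handled either inside $\sigma_0$ by the base-case argument or, if $\sigma_0$ is empty, by Lemma~\ref{lem:location} applied to $P_1$. For separation at each subsequent edge, when the new edge lies inside some $\sigma_j$ the condition is inherited from the inductive hypothesis; when it is a passage edge, the index inequalities of Lemma~\ref{lem:location}, combined with Lemma~\ref{lem:proper_order} (to translate circular-order claims on $\gadG_t[U_t]$ back to the original cycle of $\cirblcG$) and Lemma~\ref{lem:adjacent} (to place $v_*^k$ between $p^t(t(e_*^k))$ and $p^t(h(e_*^k))$), furnish the required arc placements for the endpoints of the new edge relative to the prior ones.

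The principal obstacle is the gluing across these boundaries: I must verify that a passage source edge $e_i^{\rm in}$ separates two edges that originated earlier, possibly far back in some $\sigma_{j-1}$ produced by induction. Since Lemma~\ref{lem:location} only constrains endpoint indices within $T^u$ and $T^\ell$ at step $t$, while the earlier edges of $\sigma_{j-1}$ live on the full original cycle of $\cirblcG$, I expect to strengthen the inductive hypothesis to additionally record the positions of the exit endpoints of each $\sigma_j$ relative to the chord currently being processed. With that bookkeeping, the separation condition at each passage boundary should reduce to the arc inequalities already supplied by Lemmas~\ref{lem:location}, \ref{lem:proper_order}, and~\ref{lem:adjacent}.
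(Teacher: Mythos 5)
Your approach is genuinely different from the paper's. The paper does not induct on $t$: it takes the token tour in $\gadG_t$ directly, preprocesses it (collapsing {\sf MakePlanar} subdivision chains and, crucially, removing \emph{redundant moves} where $v_i = v_{i+2}$ and $f_i \ge \ell_{i+2}$), and then builds the interleaved sequence $(e_1, e'_2, e_2, e'_3, \ldots, e_{m-2}, e'_{m-1}, e_{m-1})$ once and for all, where $e_i$ is the source edge of the $i$-th label and $e'_i$ is a chord of $\cirblcG$ whose step-$k$ parents are the endpoints of $e_*^k$ with $v_i = v_*^k$. Traversability is then verified \emph{locally}: each $e_{i+1}$ is shown to separate a pair drawn from $\{e_i, e'_{i+1}, e'\}$, and each $e'_{i+2}$ to separate a pair drawn from $\{e_{i+1}, e_i, e'_{i+1}\}$, using a case split over Lemma~\ref{lem:location} together with Lemmas~\ref{lem:proper_order} and \ref{lem:adjacent}. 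Because witnesses are always adjacent in the sequence, no cross-boundary gluing ever arises.

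Your decomposition-at-$v_*^t$ strategy runs into two concrete gaps. First, the inductive hypothesis as stated applies only to $x, y \in \cirblcV$, but the pieces $T_j$ for $1 \le j \le r-1$ begin and end at the neighbors $w_j$ and $u_{j+1}$ of $v_*^t$, which are typically $v_*^k$-type inner vertices, not members of $\cirblcV$. The source edges produced by the induction are chords of $\cirblcG$ between $\cirblcV$ vertices, so $\sigma_j$ does not literally ``join the endpoints of $T_j$''; you would need to strengthen the statement (and re-establish the base case) for tours whose endpoints are inner vertices, and specify what the boundary $\cirblcV$-vertices of $\sigma_j$ are. Second, and as you acknowledge, concatenating traversable sequences is not automatically traversable: the first and second edges of $\sigma_{j}$ (and the passage source edges $e_j^{\rm in}, e_j^{\rm out}$) must each separate a pair of \emph{earlier} edges in the concatenation, which the per-piece traversability of $\sigma_j$ says nothing about. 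You flag this as the ``principal obstacle'' and propose extra bookkeeping, but the proposal does not actually carry it out, and it is exactly here that all the real work lies. Additionally, you omit the redundant-move removal; without it, case 4-(iii) of Lemma~\ref{lem:location} permits $v_i = v_{i+2}$, and the paper needs a dedicated argument (using Lemma~\ref{lem:in_out} and Lemma~\ref{lem:magnitude}) to show such a backtracking passage cannot occur in a minimized tour. As written, the proposal identifies the right ingredients (Lemmas~\ref{lem:location}, \ref{lem:proper_order}, \ref{lem:adjacent}) but leaves the core difficulty unresolved.
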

\setcounter{lemma}{11}

\begin{proof}
Let $t_{x,y}$ be a token tour $(v_1, \ell_1; f_1\rightarrow \ell_2) \Rightarrow (v_2, \ell_2; f_2 \rightarrow \ell_3) \Rightarrow \cdots \Rightarrow (v_m, \ell_m)$
such that $v_1 = x$ and $v_m = y$.
We modify $t_{x,y}$.
If the edges $(v_i, v_{i+1}),\ldots,(v_{i+d-1}, v_{i+d})$ are made by {\sf MakePlanar} and they have the same original edge,
namely $\gadK_t(v_j, v_{j+1}) = (v_{j+1}, v_{j+2})$ $(i\le j < i+d-1)$,
we change the partial tour $(v_i, \ell_i; f_i\rightarrow \ell_{i+1}) \Rightarrow (v_{i+1}, \ell_{i+1}; f_{i+1}\rightarrow \ell_{i+2}) \Rightarrow \cdots \Rightarrow (v_{i+d}, \ell_{i+d})$
to $(v_i, \ell_i; f_i\rightarrow \ell_{i+1}) \Rightarrow (v_{i+d}, \ell_{i+1})$.
Note that $\ell_{i+1}$ is equal to $\ell_{i+d}$.
Next,
we remove redundant moves.
Consider a partial tour of length 2 $(v_i, \ell_i; f_i\rightarrow \ell_{i+1}) \Rightarrow (v_{i+1}, \ell_{i+1}; f_{i+1}\rightarrow \ell_{i+2}) \Rightarrow (v_{i+2}, \ell_{i+2}; f_{i+2}\rightarrow \ell_{i+3})$.
When $v_i = v_{i+2}$ and $f_i \ge \ell_{i+2}$,
we regard this move as a redundant move.
We have $\ell_i \ge f_i$ and $\ell_{i+2} \ge f_{i+2}$.
If the move is redundant, $\ell_i \ge f_{i+2}$ holds.
We change the partial tour
$(v_i, \ell_i; f_i\rightarrow \ell_{i+1}) \Rightarrow (v_{i+1}, \ell_{i+1}; f_{i+1}\rightarrow \ell_{i+2}) \Rightarrow (v_{i+2}, \ell_{i+2}; f_{i+2}\rightarrow \ell_{i+3}) \Rightarrow (v_{i+3}, \ell_{i+3})$
 to $(v_i, \ell_i; f_{i+2}\rightarrow \ell_{i+3}) \Rightarrow (v_{i+3}, \ell_{i+3})$.
Again,
we let the changed token tour be $t_{x,y} = (v_1, \ell_1; f_1\rightarrow \ell_2) \Rightarrow (v_2, \ell_2; f_2 \rightarrow \ell_3) \Rightarrow \cdots \Rightarrow (v_{m}, \ell_{m})$.

We construct a traversable edge sequence.
We put source edges of the labels appearing in $t_{x,y}$.
Let this edge sequence be $(e_1,\ldots, e_{m-1})$
where $e_i$ is a source edge of the label $f_i\rightarrow \ell_{i+1}$.
For each $i$ $(2\le i < m)$,
$v_i$ corresponds to $v_*^k$ for some $1 \le k < t$
since we removed vertices made by {\sf MakePlanar} from the tour.
For every $i$ $(2\le i < m)$,
we take an edge $e'_i \in \cirblcE$
such that $p^k(t(e'_i)) = t(e_*^k)$ and $p^k(h(e'_i)) = h(e_*^k)$
where $k$ is derived from $v_i = v_*^k$.
There exist several ways to choose $e'_i$.
We show that
if we select $e'_i$'s appropriately,
the edge sequence $(e_1, e'_2, e_2, e'_3, e_3,\ldots, e_{m-2}, e'_{m-1}, e_{m-1})$ becomes traversable.

\begin{figure}
\begin{center}
\includegraphics[width= 12 cm]{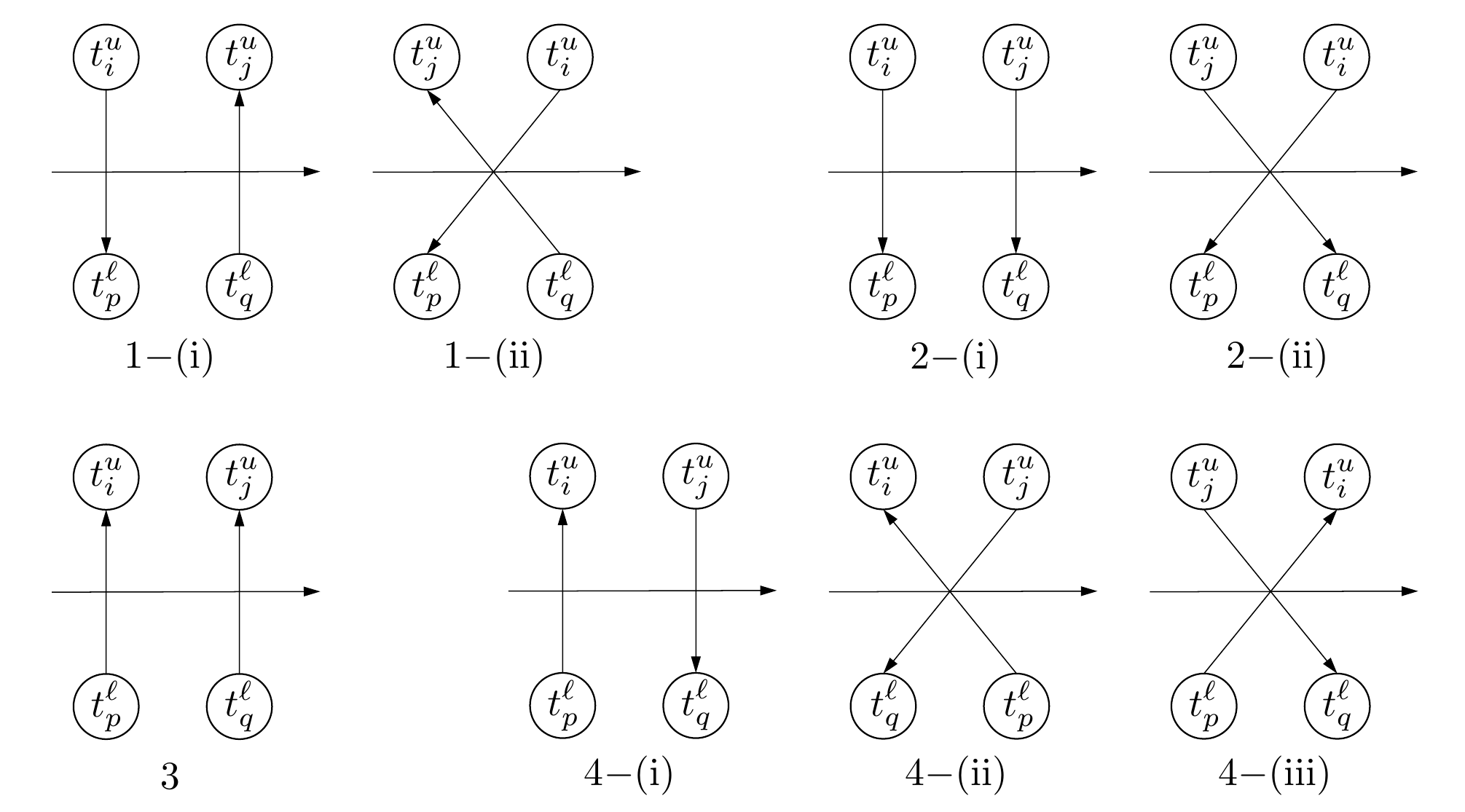}
\caption{Relations of two source edges.}
\label{fig:basic_relation}
\end{center}
\end{figure}

It is obvious that $e_1$ crosses $e'_2$.
We have to show that all edges except for $e_1$ and $e'_2$ separate two edges appearing before themselves.
By induction,
suppose we fixed $e'_j$ $(i+2 \le j < m)$.
We show which pair of edges $e_{i+1}$ separates.
Consider the partial tour of length 2 $(v_i, \ell_i; f_i\rightarrow \ell_{i+1}) \Rightarrow (v_{i+1}, \ell_{i+1}; f_{i+1}\rightarrow \ell_{i+2}) \Rightarrow (v_{i+2}, \ell_{i+2})$,
and let $v_{i+1} = v_*^k$.
We suppose that this partial tour corresponds to case 1 of Lemma~\ref{lem:location}, namely the edge $(v_i, v_{i+1})$ is in upper area and the edge $(v_{i+1}, v_{i+2})$ is in upper area of $e_*^k$.
We let $(t_i^u, t_p^\ell)$ be 
$f_i\rightarrow \ell_{i+1}$'s source edge,
and $(t_q^\ell, t_j^u)$ be
$f_{i+1}\rightarrow \ell_{i+2}$'s source edge.
From Lemma~\ref{lem:location},
we have $p \ge q$.
Thus $e_{i+1} = (t_q^\ell, t_j^u)$ separates $e'_{i+1}$ and $e_i=(t_i^u, t_p^\ell)$
for any choice of $e'_{i+1}$
(see Figure~\ref{fig:basic_relation} 1-(i), 1-(ii).
The horizontal edge corresponds to $e'_{i+1}$. 
In both cases (i)$i\ge j$ and (ii)$i < j$,
$t_p^\ell$ and $h(e'_{i+1})$ are on the opposite side of $(t_q^\ell, t_j^u)$).
When the partial tour corresponds to case 2, 3, 4-(i) or 4-(ii),
$e_{i+1}$ separates $e'_{i+1}$ and $e_i$ for any choice of $e'_{i+1}$ (see Figure~\ref{fig:basic_relation}).

Suppose the partial tour corresponds to case 4-(iii).
Assume $v_i = v_{i+2}$.
Let $v_i$ and $v_{i+1}$ be $v_*^k$ and $v_*^t$ respectively.
Let $(t_p^\ell, t_i^u)$ be $f_i\rightarrow \ell_{i+1}$'s source edge
and $(t_j^u, t_q^\ell)$ be $f_{i+1}\rightarrow \ell_{i+2}$'s source edge.
The indices $i$, $j$, $p$ and $q$ are
based on the sequences $T^u$ and $T^\ell$ made in Algorithm~\ref{alg:change_label} in step $t$.
Recall $i < j$ and $p \ge q$.
We will show that $f_i = \lin^{k+1}(t_p^\ell) \ge \lout^{k+1}(t_q^\ell) = \ell_{i+2}$.
Suppose $e_*^k$ and $e_*^t$ has the same direction,
namely $t_p^\ell$ and $t_q^\ell$ had indices $p'$ and $q'$ respectively in step $k$ such that $p' \ge q'$.
In this case,
we have $\lin^{k+1}(t_p^\ell) \ge \lout^{k+1}(t_q^\ell)$ from Lemma~\ref{lem:in_out}.
Assume $e_*^k$ and $e_*^t$ has the opposite direction,
namely $t_i^u$, $t_j^u$, $t_p^\ell$ and $t_q^\ell$ had indices $i'$, $j'$, $p'$ and $q'$ respectively in step $k$ such that $i' > j'$ and $p' \le q'$.
Since $i' > j'$, $t\lin^{k+1}(t_p^\ell)-p/n \ge i'$ and $t\lout^{k+1}(t_q^\ell)-q/n \le j'$,
we have $t\lin^{k+1}(t_p^\ell) \ge t\lout^{k+1}(t_q^\ell)$.
From Lemma~\ref{lem:magnitude},
$\lin^{k+1}(t_p^\ell) \ge \lout^{k+1}(t_q^\ell)$.
Thus,
when $v_i = v_{i+2}$,
this is a redundant move and does not appear in $t_{x,y}$.

\begin{figure}
\begin{center}
\includegraphics[width= 14 cm]{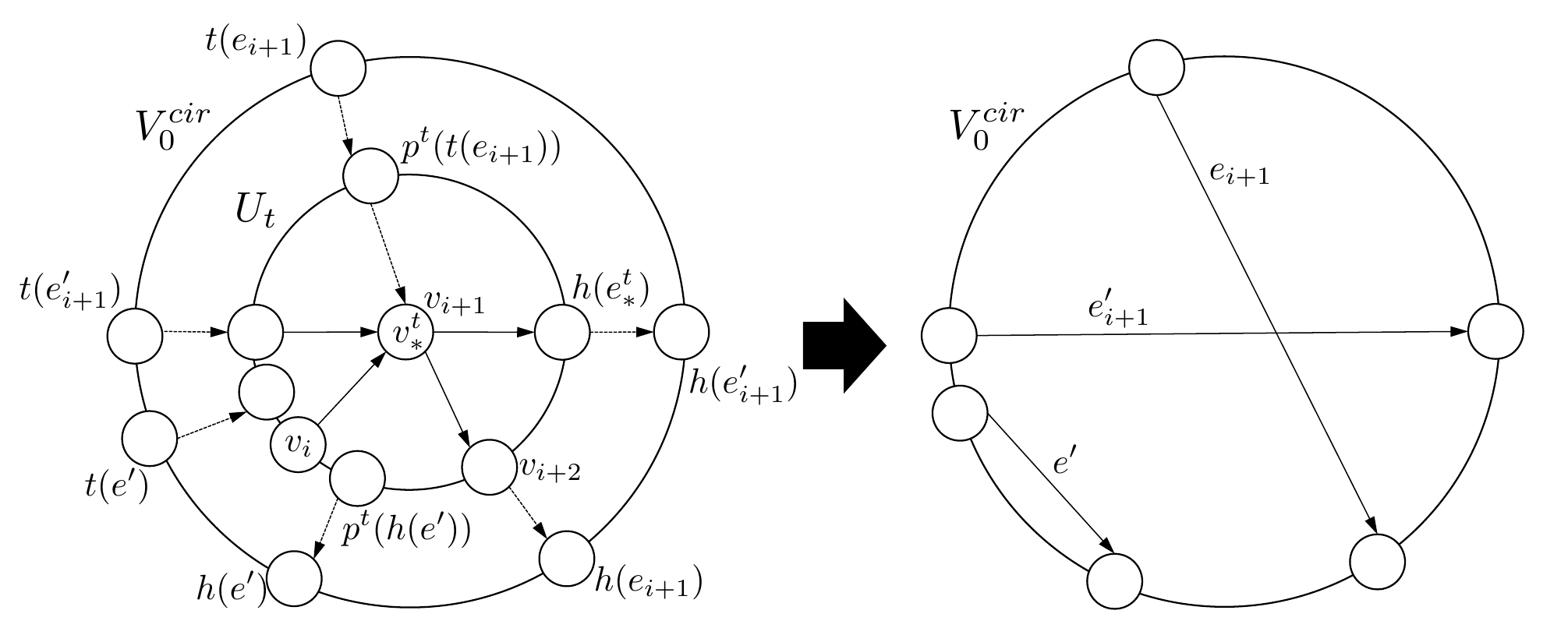}
\caption{case 4-(iii): $p^t(h(e'))$ is on $C_{\gadG_t[U_t]}[t(e_*^t), v_{i+2})$.}
\label{fig:case4-3-1}
\end{center}
\end{figure}

\begin{figure}[h]
\begin{center}
\includegraphics[width= 14 cm]{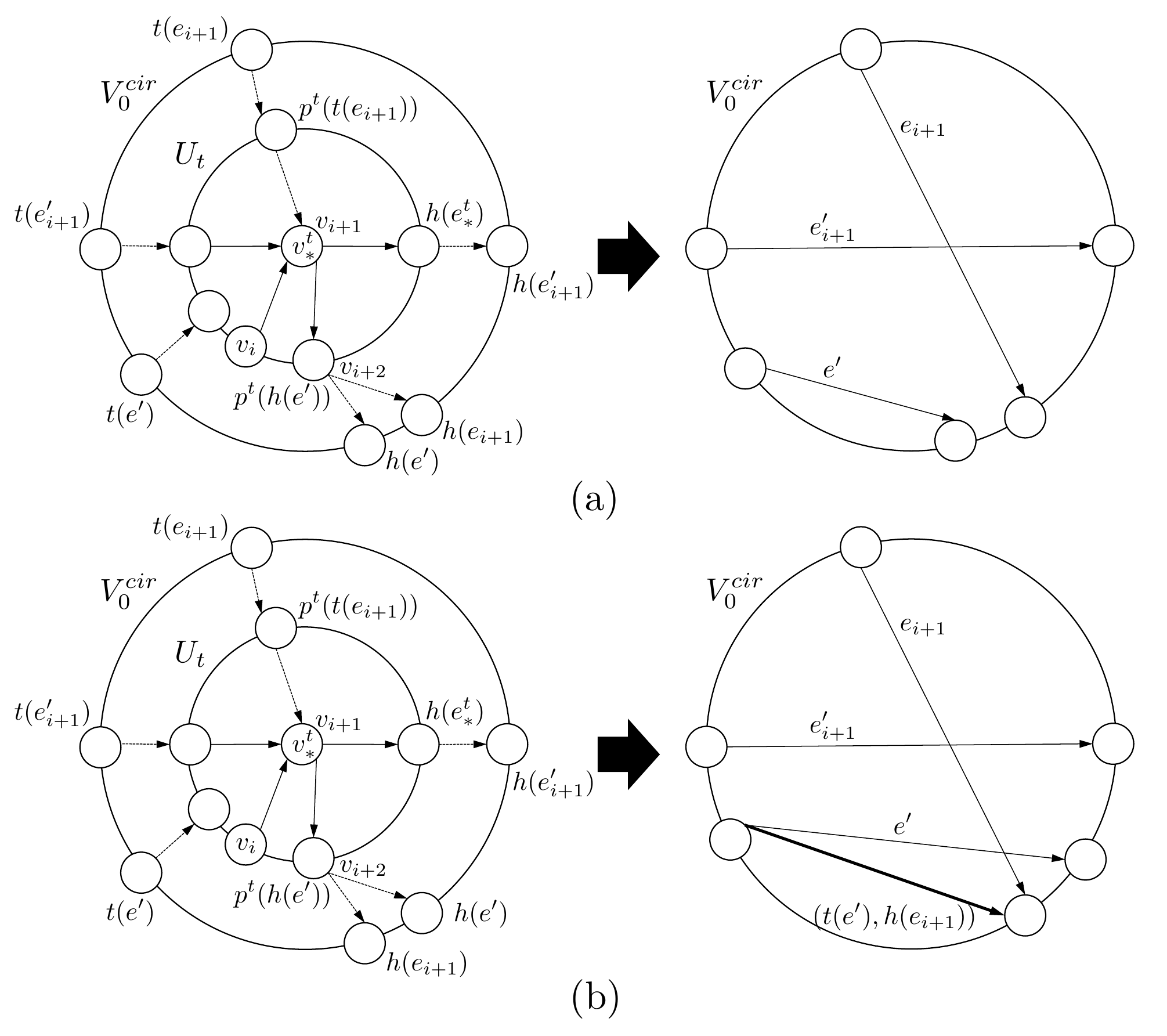}
\caption{case 4-(iii): $p^t(h(e'))$ is equal to $v_{i+2}$.}
\label{fig:case4-3-2}
\end{center}
\end{figure}

\begin{figure}[t]
\begin{center}
\includegraphics[width= 14 cm]{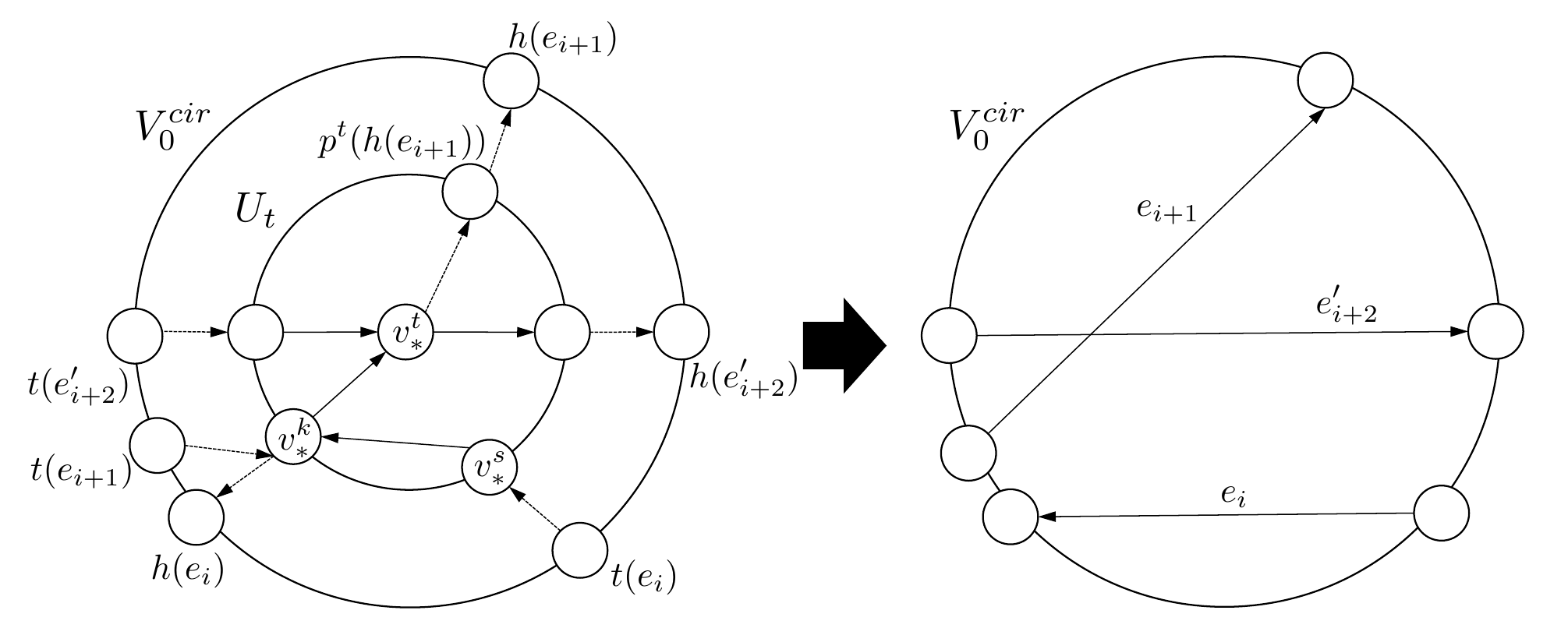}
\caption{$t > k$: $v_*^s$ is on $U_t$.}
\label{fig:t_k_1}
\end{center}
\end{figure}

Again,
let $v_i$ and $v_{i+1}$ be $v_*^k$ and $v_*^t$ respectively.
Let $e'$ be an edge such that $p^k(t(e')) = t(e_*^k)$ and $p^k(h(e')) = h(e_*^k)$.
Now we consider the case $v_i\neq v_{i+2}$,
thus
there are two cases for a location of $p^t(h(e'))$ from Lemma~\ref{lem:adjacent}.
\begin{enumerate}
\item
$p^t(h(e'))$ is on $C_{\gadG_t[U_t]}[t(e_*^t), v_{i+2})$:
Consider the four vertices $p^t(h(e'))$, $v_{i+2}$, $h(e_*^t)$ and $p^t(t(e_{i+1}))$.
A possible order on $U_t$ of the four vertices is
$(p^t(h(e')), v_{i+2}, h(e_*^t), p^t(t(e_{i+1})))$.
From Lemma~\ref{lem:proper_order},
$e_{i+1}$ separates $e'$ and $e'_{i+1}$ for any choice of $e'$ (see Figure~\ref{fig:case4-3-1}).

\item
$p^t(h(e'))$ is equal to $v_{i+2}$:
If $e_{i+1}$ separates $e'$ and $e'_{i+1}$,
we set $e'$ as $e'_{i}$ (see Figure~\ref{fig:case4-3-2} (a)).
However,
$e_{i+1}$ might not separate $e'$ and $e'_{i+1}$.
In this case,
$e'$ crosses $e_{i+1}$.
From Lemma~\ref{lem:cross_reach},
there exists an edge $(t(e'), h(e_{i+1}))$ in $\cirblcE$ (see Figure~\ref{fig:case4-3-2} (b)).
We choose $(t(e'), h(e_{i+1}))$ as $e'_{i}$ and
$e_{i+1}$ separates $e'_{i}$ and $e'_{i+1}$ in $\cirblcG$.

\end{enumerate}

Suppose we fixed $e'_j$ $(i+2 \le j < m)$.
We show how to select $e'_{i+1}$ and
which pair of edges $e'_{i+2}$ separates.
Consider any partial token tour of length 2
$(v_i, \ell_i; f_i\rightarrow \ell_{i+1}) \Rightarrow (v_{i+1}, \ell_{i+1}; f_{i+1}\rightarrow \ell_{i+2}) \Rightarrow (v_{i+2}, \ell_{i+2})$.
Assume $v_i = v_*^s$, $v_{i+1} = v_*^k$ and $v_{i+2} = v_*^t$.
Note that $e_{i}$ and $e_{i+1}$ are source edges of $f_i\rightarrow \ell_{i+1}$ and $f_{i+1}\rightarrow \ell_{i+2}$ respectively,
and
$e'_{i+2}$ is an edge such that $p^t(t(e'_{i+2})) = t(e_*^t)$ and $p^t(h(e'_{i+2})) = h(e_*^t)$.
We let $e'$ be a source edge of a label of $e_*^k$.
When $t > k$:

\begin{figure}
\begin{center}
\includegraphics[width= 14 cm]{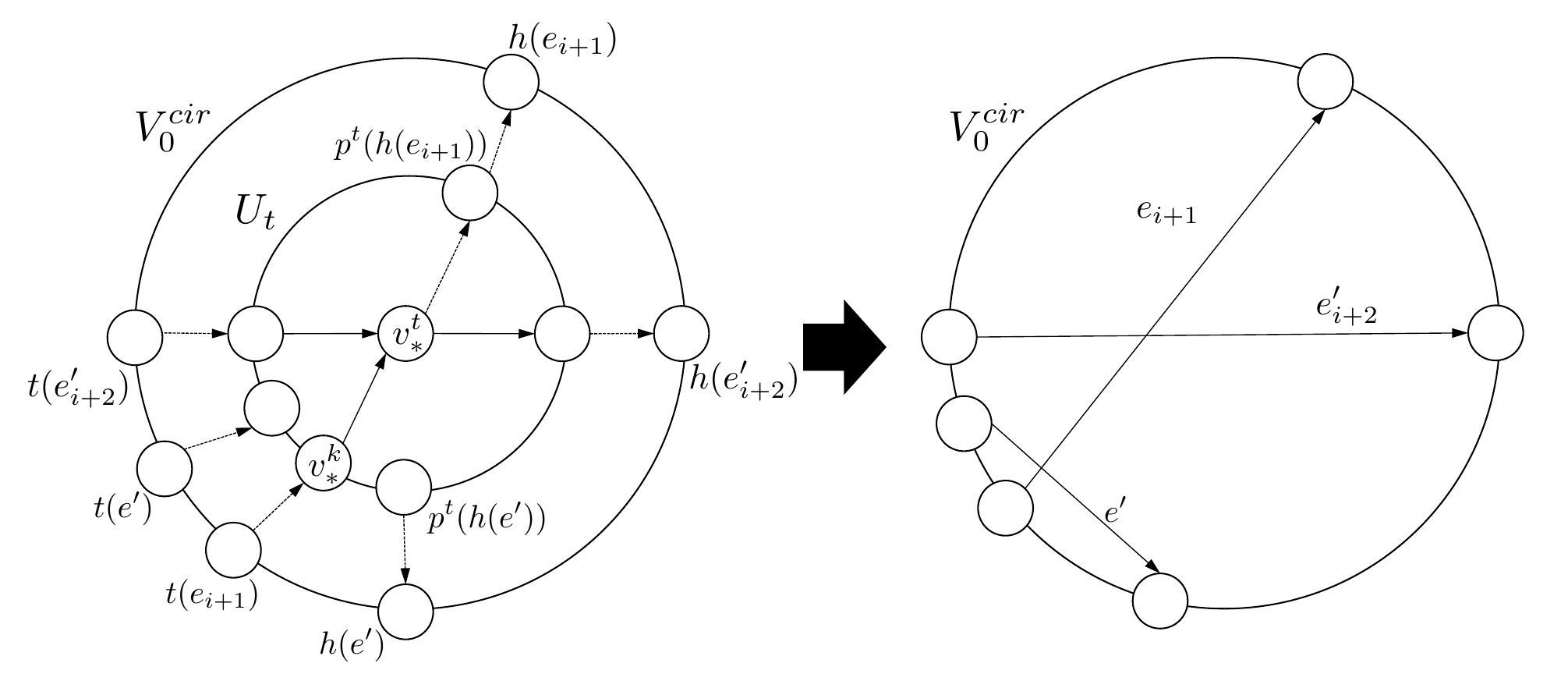}
\caption{$t > k$: $v_*^s$ is not on $U_t$, $p^t(h(e'))$ is neither $h(e_*^t)$ nor $t(e_*^t)$.}
\label{fig:t_k_2_1}
\end{center}
\end{figure}

\begin{figure}
\begin{center}
\includegraphics[width= 14 cm]{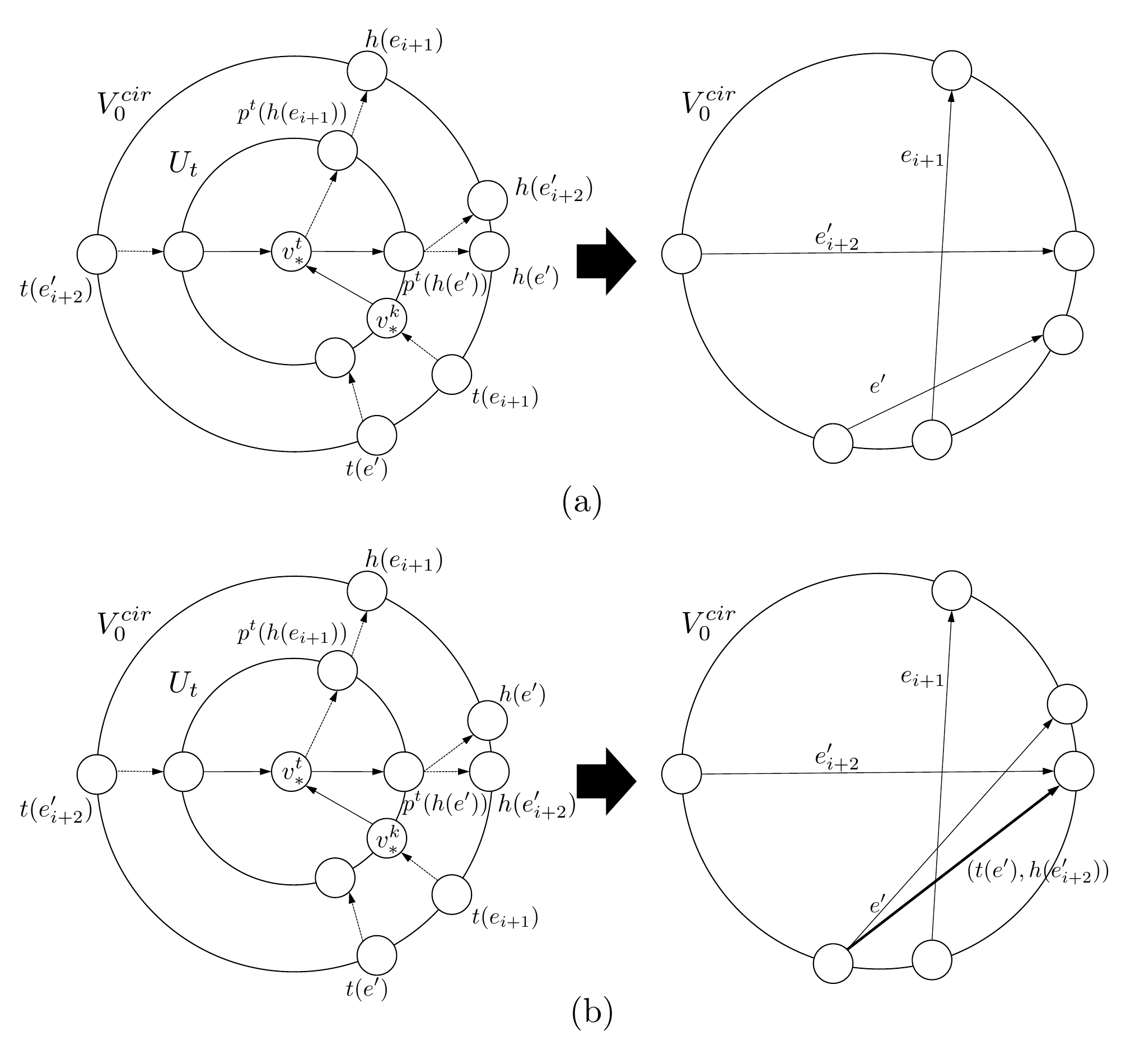}
\caption{$t > k$: $v_*^s$ is not on $U_t$, $p^t(h(e'))$ is equal to $h(e_*^t)$.}
\label{fig:t_k_2_2}
\end{center}
\end{figure}

\begin{figure}
\begin{center}
\includegraphics[width= 14 cm]{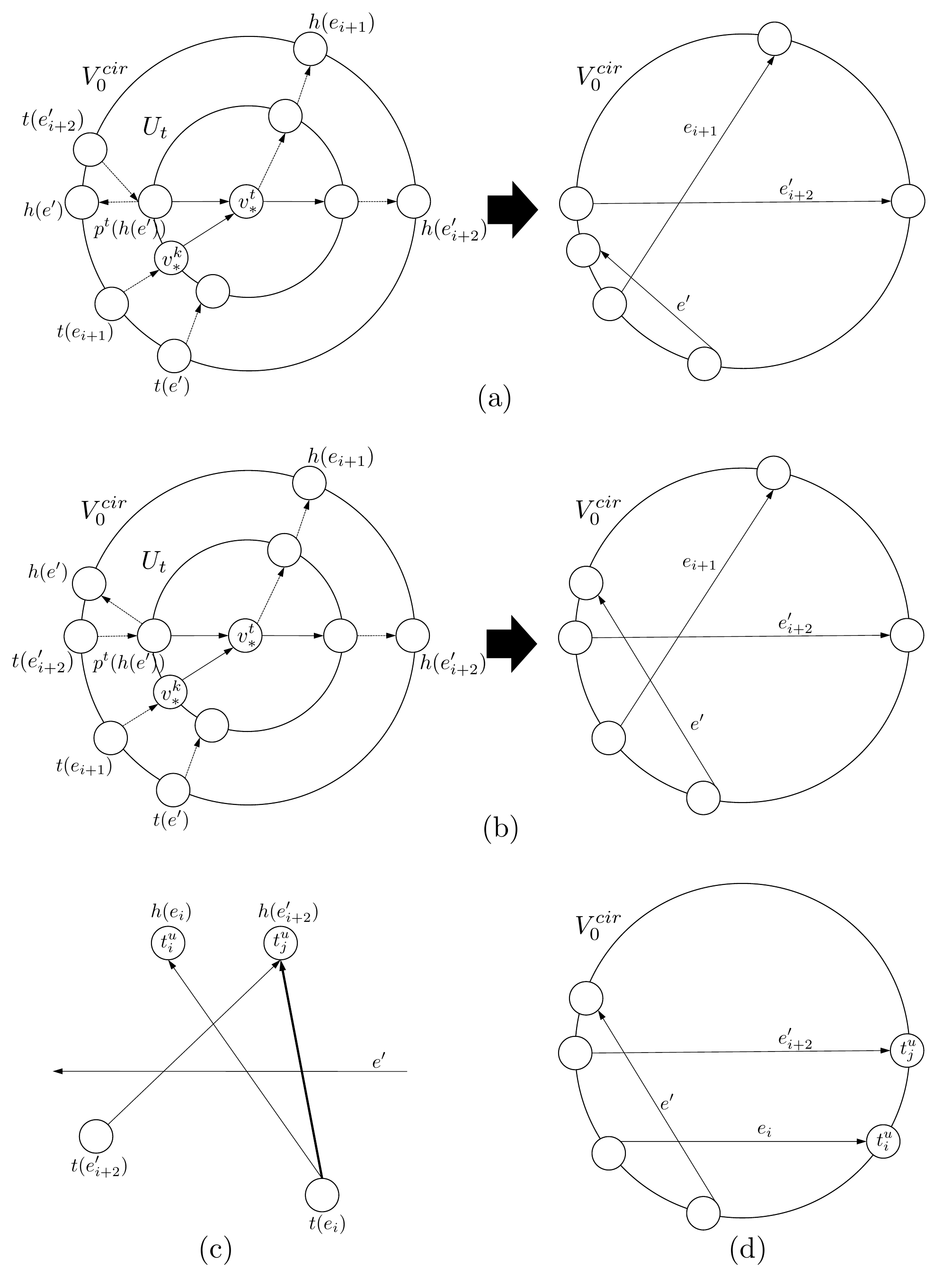}
\caption{$t > k$: $v_*^s$ is not on $U_t$, $p^t(h(e'))$ is equal to $t(e_*^t)$.}
\label{fig:t_k_2_3}
\end{center}
\end{figure}

\begin{enumerate}
\item
$v_*^s$ is on $U_t$:
Consider the four vertices $p^t(t(e'_{i+2}))$, $v_*^k$, $p^t(h(e'_{i+2}))$ and $p^t(h(e_{i+1}))$.
A possible order on $U_t$ of the four vertices is
$(p^t(t(e'_{i+2})), v_*^k, p^t(h(e'_{i+2})), p^t(h(e_{i+1})))$.
From Lemma~\ref{lem:proper_order},
$e'_{i+2}$ separates $e_{i+1}$ and $e_{i}$ (see Figure~\ref{fig:t_k_1}).
We select $e'$ as $e'_{i+1}$.

\item
$v_*^s$ is not on $U_t$:
From Lemma~\ref{lem:adjacent},
there are three cases for a location of $p^t(h(e'))$.
\begin{enumerate}[label=(\roman*)]
\item
$p^t(h(e'))$ is neither $h(e_*^t)$ nor $t(e_*^t)$:
Consider the four vertices $p^t(t(e'_{i+2}))$, $p^t(h(e'))$, $p^t(h(e'_{i+2}))$ and $p^t(h(e_{i+1}))$.
A possible order on $U_t$ of the four vertices is\\
$(p^t(t(e'_{i+2})), p^t(h(e')), p^t(h(e'_{i+2})), p^t(h(e_{i+1})))$.
From Lemma~\ref{lem:proper_order},
$e'_{i+2}$ separates $e_{i+1}$ and $e'$ (see Figure~\ref{fig:t_k_2_1}).
We select $e'$ as $e'_{i+1}$.

\item
$p^t(h(e'))$ is equal to $h(e_*^t)$:
When $e'_{i+2}$ separates $e_{i+1}$ and $e'$ (see Figure~\ref{fig:t_k_2_2}(a)),
we select $e'$ as $e'_{i+1}$.
$e'_{i+2}$ might not separate $e_{i+1}$ and $e'$ (see Figure~\ref{fig:t_k_2_2}(b)).
In this case,
$e'$ crosses $e'_{i+2}$.
From Lemma~\ref{lem:cross_reach},
there exists an edge $(t(e'), h(e'_{i+2}))$ in $\cirblcE$.
We choose $(t(e'), h(e'_{i+2}))$ as $e'_{i+1}$ and
$e'_{i+2}$ separates $e_{i+1}$ and $e'_{i+1}$ in $\cirblcG$.

\item
$p^t(h(e'))$ is equal to $t(e_*^t)$:
When $e'_{i+2}$ separates $e_{i+1}$ and $e'$ (see Figure~\ref{fig:t_k_2_3}(a)),
we select $e'$ as $e'_{i+1}$.
$e'_{i+2}$ might not separate $e_{i+1}$ and $e'$ (see Figure~\ref{fig:t_k_2_3}(b)).
In step $k$,
$h(e_{i})$ and $h(e'_{i+2})$ was in $T^u$.
Let $i$ and $j$ be indices of $h(e_i)$ and $h(e'_{i+2})$ respectively, that is, $h(e_i) = t_i^u$ and $h(e'_{i+2}) = t_j^u$.
If $i < j$,
then $e_{i}$ crosses $e'_{i+2}$,
and $t(e_{i})$ can reach $h(e'_{i+2})$ (see Figure~\ref{fig:t_k_2_3}(c)).
Since $t_i^u$ has the maximum index among vertices $t(e_{i})$ can reach,
this is a contradiction.
Therefore we have $i \ge j$.
Thus $e'_{i+2}$ separates $e_{i}$ and $e'$ (see Figure~\ref{fig:t_k_2_3}(d)).
We select $e'$ as $e'_{i+1}$.

\end{enumerate}
\end{enumerate}

\begin{figure}[t]
\begin{center}
\includegraphics[width= 14 cm]{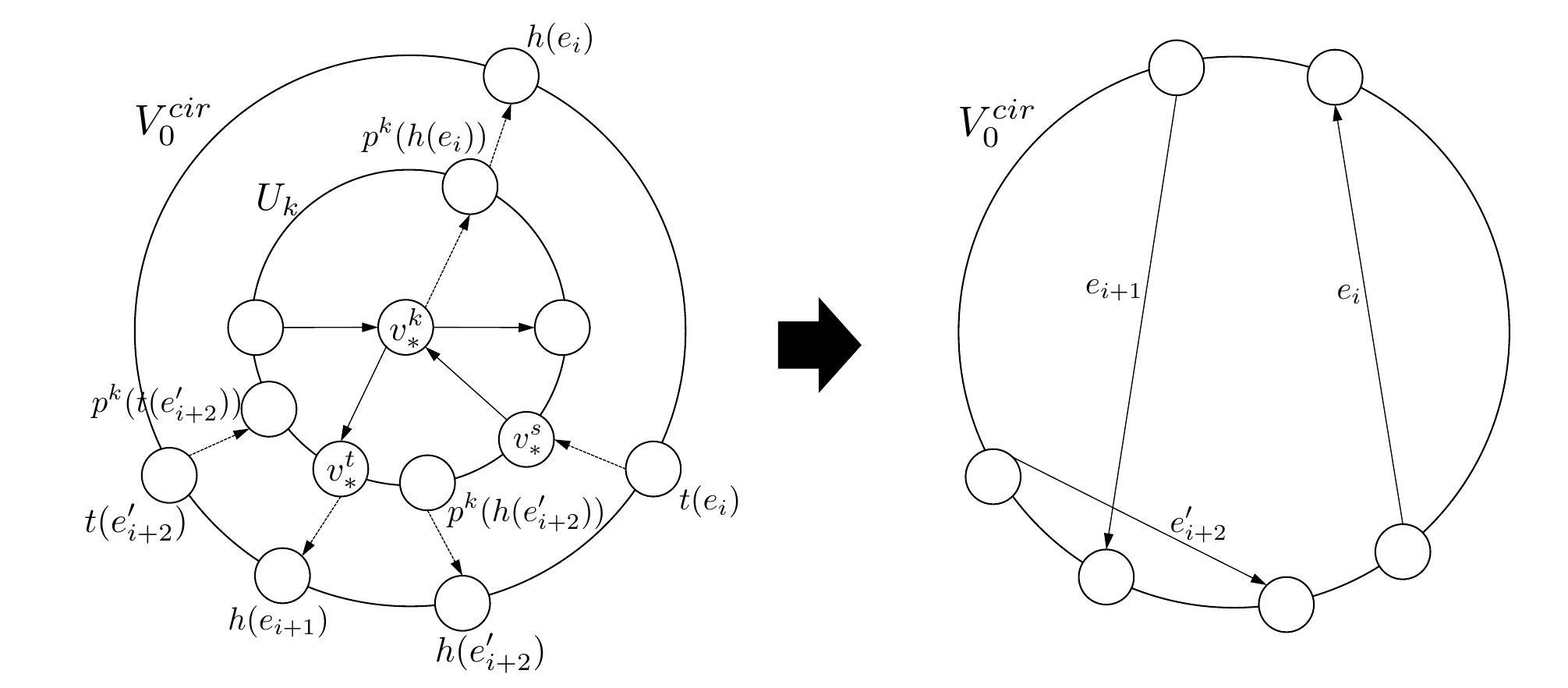}
\caption{$k > t$: $(v_*^s, v_*^k)$ and $(v_*^k, v_*^t)$ are in the same side of $e_*^k$.}
\label{fig:k_t_1}
\end{center}
\end{figure}

\begin{figure}
\begin{center}
\includegraphics[width= 14 cm]{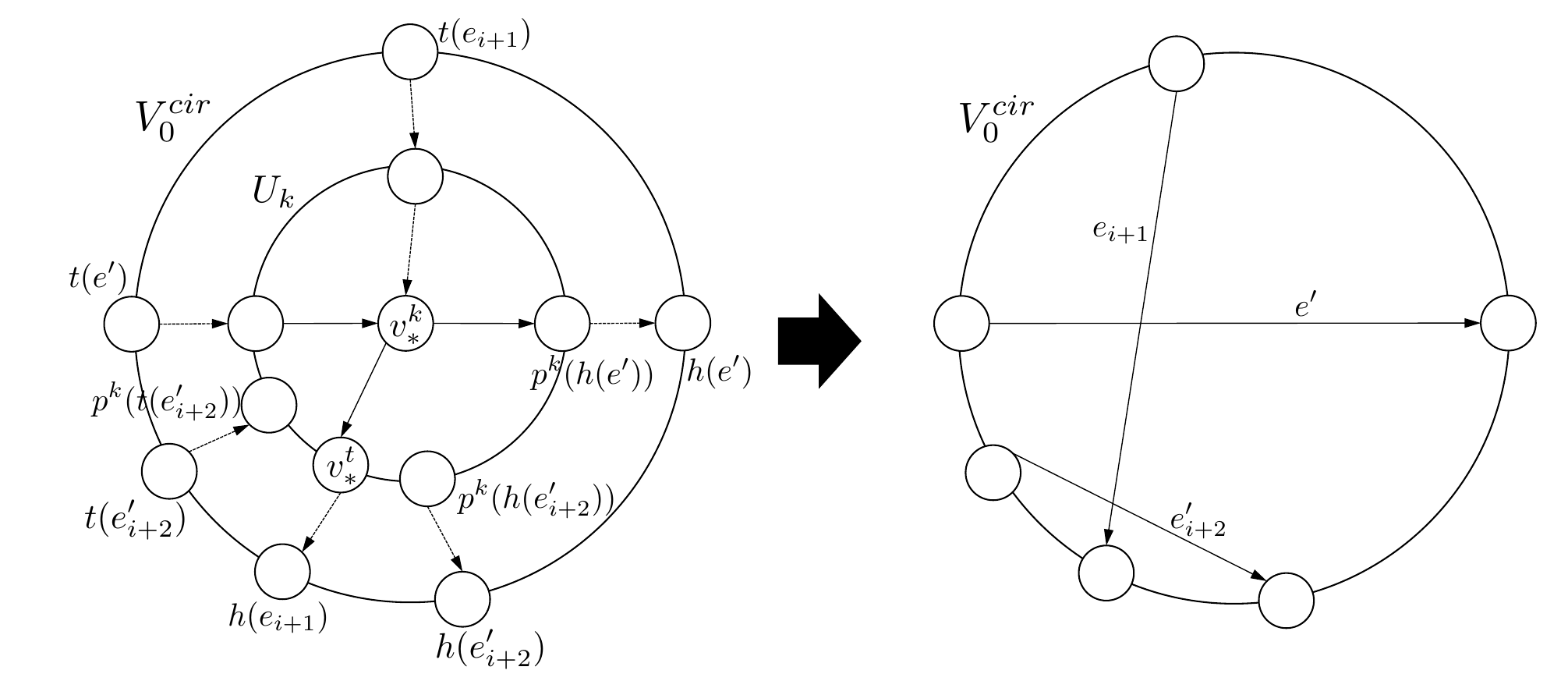}
\caption{$k > t$: $(v_*^s, v_*^k)$ and $(v_*^k, v_*^t)$ are in the opposite side of $e_*^k$, neither $p^k(h(e'_{i+2}))$ nor $p^k(t(e'_{i+2}))$ is $h(e_*^k)$.}
\label{fig:k_t_2_1}
\end{center}
\end{figure}

\begin{figure}[t]
\begin{center}
\includegraphics[width= 14 cm]{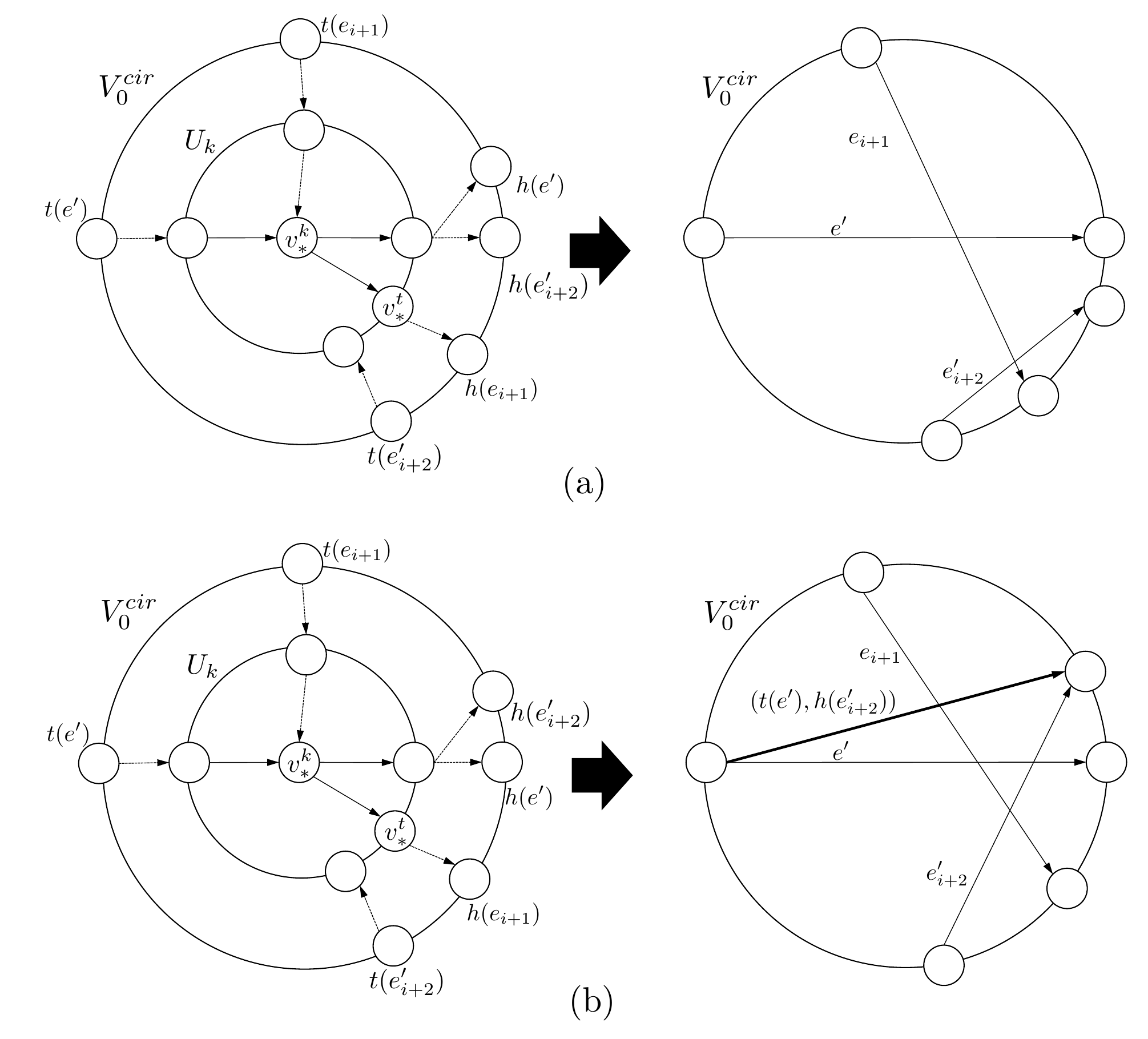}
\caption{$k > t$:  $(v_*^s, v_*^k)$ and $(v_*^k, v_*^t)$ are in the opposite side of $e_*^k$, $p^k(h(e'_{i+2}))$ is equal to $h(e_*^k)$.}
\label{fig:k_t_2_2}
\end{center}
\end{figure}

\begin{figure}
\begin{center}
\includegraphics[width= 14 cm]{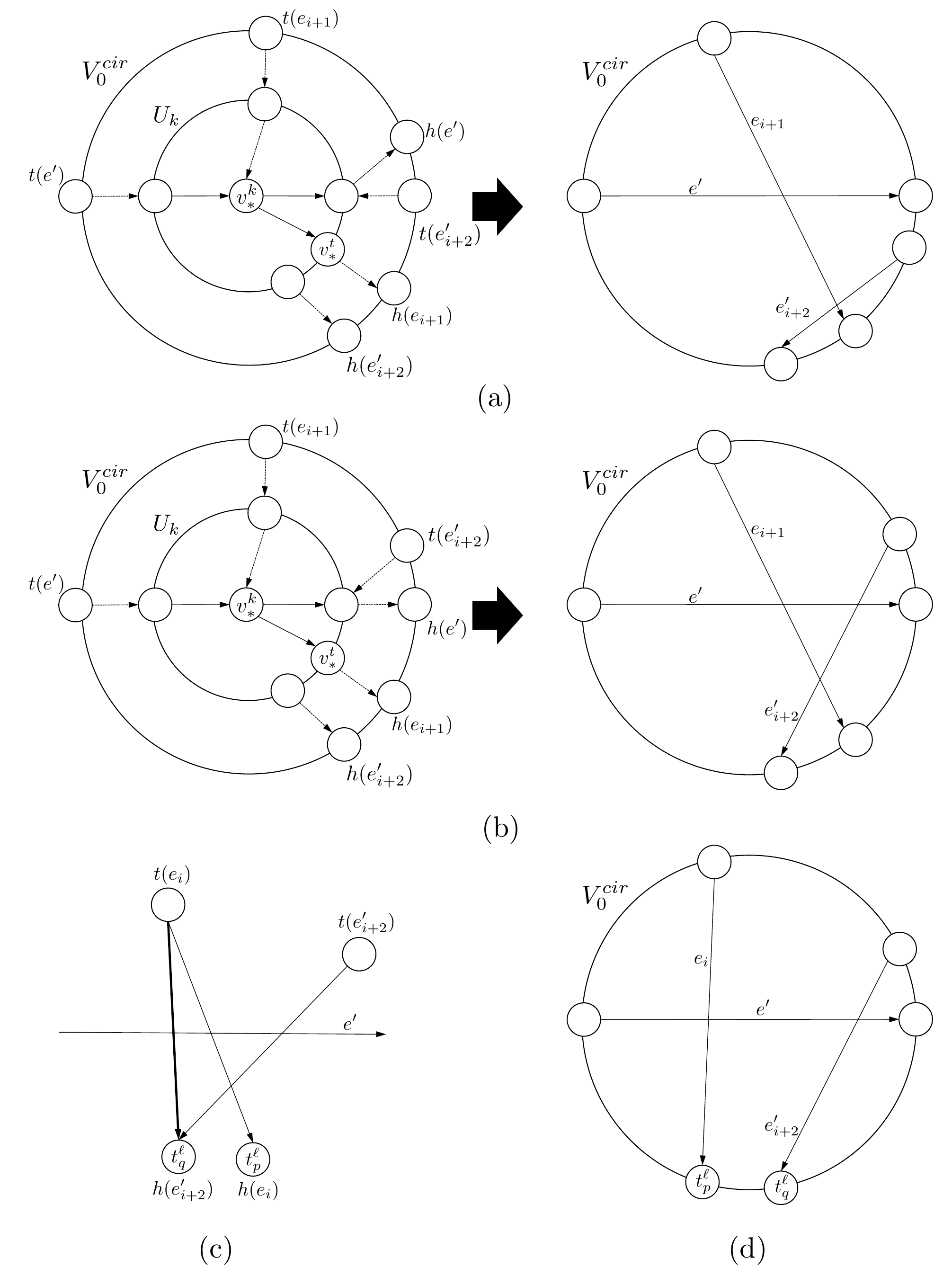}
\caption{$k > t$: $(v_*^s, v_*^k)$ and $(v_*^k, v_*^t)$ are in the opposite side of $e_*^k$, $p^k(t(e'_{i+2}))$ is equal to $h(e_*^k)$.}
\label{fig:k_t_2_3}
\end{center}
\end{figure}

When $k > t$:
\begin{enumerate}
\item
$(v_*^s, v_*^k)$ and $(v_*^k, v_*^t)$ are in the same side of $e_*^k$:
$p^k(t(e'_{i+2}))$, $p^k(h(e_{i+1}))$ and $p^k(h(e'_{i+2}))$ are consecutive on $U_k$ from Lemma~\ref{lem:adjacent}.
The parent of $h(e_{i})$ is on the other side of $e_*^k$.
From Lemma~\ref{lem:proper_order},
$e'_{i+2}$ separates $e_{i+1}$ and $e_{i}$ (see Figure~\ref{fig:k_t_1}).
We select $e'$ as $e'_{i+1}$.

\item
$(v_*^s, v_*^k)$ and $(v_*^k, v_*^t)$ are in the opposite side of $e_*^k$:
From Lemma~\ref{lem:adjacent},
there are three cases for a location of $p^k(h(e'_{i+2}))$ and $p^k(t(e'_{i+2}))$.
\begin{enumerate}[label=(\roman*)]
\item
Neither $p^k(h(e'_{i+2}))$ nor $p^k(t(e'_{i+2}))$ is $h(e_*^k)$:
Consider the four vertices $p^k(t(e'_{i+2}))$, $p^k(h(e'_{i+2}))$, $p^k(h(e_{i+1}))$ and $p^k(e'))$.
A possible order on $U_k$ of the four vertices is\\
$(p^k(t(e'_{i+2})), p^k(h(e_{i+1})), p^k(h(e'_{i+2})), p^k(e'))$ (see Figure~\ref{fig:k_t_2_1}).
From Lemma~\ref{lem:proper_order},
$e'_{i+2}$ separates $e_{i+1}$ and $e'$.
We select $e'$ as $e'_{i+1}$.

\item
$p^k(h(e'_{i+2}))$ is equal to $h(e_*^k)$:
When $e'_{i+2}$ separates $e_{i+1}$ and $e'$ (see Figure~\ref{fig:k_t_2_2}(a)),
we select $e'$ as $e'_{i+1}$.
$e'_{i+2}$ might not separate $e_{i+1}$ and $e'$ (see Figure~\ref{fig:k_t_2_2}(b)).
In this case,
$e'$ crosses $e'_{i+2}$.
From Lemma~\ref{lem:cross_reach},
there exists an edge $(t(e'), h(e'_{i+2}))$ in $\cirblcE$.
We choose $(t(e'), h(e'_{i+2}))$ as $e'_{i+1}$, and
$e'_{i+2}$ separates $e_{i+1}$ and $e'_{i+1}$ in $\cirblcG$.

\item
$p^k(t(e'_{i+2}))$ is equal to $h(e_*^k)$:
When $e'_{i+2}$ separates $e_{i+1}$ and $e'$ (see Figure~\ref{fig:k_t_2_3}(a)),
we select $e'$ as $e'_{i+1}$.
$e'_{i+2}$ might not separate $e_{i+1}$ and $e'$ (see Figure~\ref{fig:k_t_2_3}(b)).
Assume the edge $(v_*^s, v_*^k)$ is in the upper area of $e_*^k$.
In step $k$,
$h(e_{i})$ and $h(e'_{i+2})$ was in $T^\ell$.
Let $p$ and $q$ be indices of $h(e_i)$ and $h(e'_{i+2})$ respectively, that is, $h(e_i) = t_p^\ell$ and $h(e'_{i+2}) = t_q^\ell$.
If $p < q$,
then $e_{i}$ crosses $e'_{i+2}$,
and $t(e_{i})$ can reach $h(e'_{i+2})$ (see Figure~\ref{fig:k_t_2_3}(c)).
Since $t_p^\ell$ has the maximum index among vertices $t(e_{i})$ can reach,
this is a contradiction.
Therefore we have $p \ge q$.
Thus $e'_{i+2}$ separates $e_{i}$ and $e'$ (see Figure~\ref{fig:k_t_2_3}(d)).
We select $e'$ as $e'_{i+1}$.
In the case $(v_*^s, v_*^k)$ is in the lower area of $e_*^k$,
we could show that $e'_{i+2}$ separates $e_{i}$ and $e'$
in the almost same way.

\end{enumerate}
\end{enumerate}

When we consider which pair of edges $e_{i+1}$ separates,
we might choose a specific $e'_i$ (see Figure~\ref{fig:case4-3-2}(b)).
When we consider which pair of edges $e'_{i+1}$ separates,
we also might choose specific $e'_i$ (see Figure~\ref{fig:t_k_2_2}(b) and \ref{fig:k_t_2_2}(b)).
If the cases of  Figure~\ref{fig:case4-3-2}(b) and Figure~\ref{fig:t_k_2_2}(b) occur simultaneously,
$h(e_*^t)$ must be $v_{i+2}$,
but the edge $(v_*^t, t(e_*^t))$ has no available label.
In the case of Figure~\ref{fig:k_t_2_2}(b),
the edge $e_i$ is in the upper area of $e_*^t$.
Thus, these cases never occur simultaneously.
From the above,
the constructed edge sequence is traversable.
\end{proof}

We analyze the space and time complexity of Algorithm~\ref{alg:trans_gadget}.
Note that, 
for saving computation space,
we do not implement the Algorithm straightforwardly in some points.
We begin with the space complexity.
We regard the circle graph $\cirblcG = (\cirblcV, \cirblcE)$ as the input.
For every $v \in \cirblcV$,
we keep three attributes $\lin^t(v)$, $\lout^t(v)$ and $p^t(v)$
in step $t$.
The in- and out-levels are rational numbers that have the form of $i + j/n$.
Thus we keep two integers $i$ and $j$ for each in- and out-level.
We use $\widetilde{O}(n)$ space for preserving them.
In step $t$,
we also keep $U_t$ by using $\widetilde{O}(n)$ space.
We need $\gadG_t[U_t]$,
but we do not keep $\gadE_t$ explicitly.
For $u, v \in U_t$,
whether there exists an edge $(u, v)$ in $\gadG_t[U_t]$ is equivalent to
whether there exists an edge $(x, y)$ in $\cirblcE$ such that $p^t(x) = u$ and $p^t(y) = v$.
Since $\cirblcE$ is included in the input,
we could calculate it with $\widetilde{O}(1)$ space.
We keep no other information throughout the Algorithm.
The number of edges in $\gadG_t[U_t]$ is at most $2|U_t|^2 = O(n^2)$.
Thus,
for line 4 and 5,
we can find a lowest gap-$2^+$ chord by using $\widetilde{O}(1)$ space.
For line 7 and 9,
we use only $\widetilde{O}(1)$ space for updating $\gadV_t$ and $U_t$.
For line 8,
we ignore the edges in the upper area of $e_*^t$ (these edges belong to $\gadG_{t+1}[U_{t+1}]$, thus we have no need to keep them).
For the edges in the lower area of $e_*^t$,
since there exist only gap-$1$ chords in the area,
the number of edges in the area is $O(n)$.
We use $\widetilde{O}(n)$ space for temporarily keeping them.
In {\sf MakePlanar} (line 10),
we look through them,
and find crossing points and resolve them and set $\gadK_{t+1}(\cdot)$ by using $\widetilde{O}(n)$ space.

Now we consider Algorithm~\ref{alg:change_label}.
The number of edges in $\gadG_t[U_t]$ is at most $2|U_t|^2 = O(n^2)$.
Thus,
for line 2,
we can find $S^\ell$ and $S^u$ by using $\widetilde{O}(1)$ space,
and
we use $\widetilde{O}(n)$ space for keeping them.
For line 3 to 6,
since $|T^\ell|, |T^u| = O(n)$,
we also use $\widetilde{O}(n)$ for keeping $T^\ell$ and $T^u$.
In addition,
we use $\widetilde{O}(n)$ space for calculating $\lin^{t+1}(v)$ and $\lout^{t+1}(v)$ for all $v\in T^\ell$.
In Algorithm~\ref{alg:calc_level},
we use $\widetilde{O}(1)$ space for each operation and 
the length of for-loops is $O(n)$.
Thus we use $\widetilde{O}(1)$ space in all.
For line 8 to 11,
we only refer to in- and out-levels that we are keeping.
For line 12 to 15,
we do not keep and ignore the labels belonging to edges in the upper area.
For line 16,
we use $\widetilde{O}(1)$ space.
For line 17 to 24,
we check whether an edge in the lower area was divided by {\sf MakePlanar}
and we use additional $\widetilde{O}(1)$ space.
For line 25 to 27,
we can find all vertices in the lower area of $e_*^t$ by using $\widetilde{O}(n)$ space,
and
we use additional $\widetilde{O}(1)$ space for updating $p^{t+1}(\cdot)$.

We go back to Algorithm~\ref{alg:trans_gadget}.
For line 12,
we output the information of the vertices, edges, labels and values of  the path function in the lower area of $e_*^t$.
Here we have to calculate the labels on the gap-$1$ chords
(other information is preserved now).
Let the gap-$1$ chord be $(v_*^p, v_*^q)$.
If $p < q$,
this edge was made in step $q$ and
the labels on the edge were calculated at line 13 of Algorithm~\ref{alg:change_label}.
Thus,
for any $v \in \cirblcV$ such that $p^t(v) = v_*^p$,
we calculate $\lout = \max_{t^\ell \in \cirblcV, p^t(t^\ell) = v_*^q}\{\lout^{t}(t^\ell)\ |\ (v, t^\ell) \in \cirblcE\}$,
and $\lin^t(v) \rightarrow \lout$ becomes one of the labels on the edge
(if the vertex $v$ is not in $T^u$,
$\lout$ is not defined
and a label for $v$ does not exist).
On the other hand,
if $p > q$,
this edge was made in step $p$ and
the labels on the edge were calculated at line 14 of Algorithm~\ref{alg:change_label}.
Thus,
for any $v \in \cirblcV$ such that $p^t(v) = v_*^q$,
we calculate $\lin = \min_{t^\ell \in \cirblcV, p^t(t^\ell) = v_*^p}\{\lin^{t}(t^\ell)\ |\ (t^\ell, v) \in \cirblcE\}$,
and $\lin \rightarrow \lout^t(v)$ becomes one of the labels on the edge
(if the vertex $v$ is not in $T^u$,
$\lin$ is not defined
and a label for $v$ does not exist).
We use additional $\widetilde{O}(1)$ space for these calculation.
For line 15,
we trace line 10 to 12.
In total,
we use $\widetilde{O}(n)$ space.

Next consider the time complexity.
In Lemma~\ref{lem:trans_gadget_small},
we proved that the while-loop at line 4 stops after at most $n$ steps.
Since the sizes of $U_t$, $S^\ell$, $S^u$, $T^\ell$ and $T^u$ are all $O(n)$,
every operation in the Algorithms takes poly($n$) time.
Thus
this algorithm runs in polynomial time.

\begin{lemma}\label{lem:alg_complexity}
Algorithm~\ref{alg:trans_gadget} runs in polynomial time with using $\widetilde{O}(n)$ space.
\end{lemma}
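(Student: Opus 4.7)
The plan is to analyze Algorithm~\ref{alg:trans_gadget} and Algorithm~\ref{alg:change_label} line by line, showing that every operation can be carried out in $\widetilde{O}(n)$ space and polynomial time, and that the outer \textbf{while}-loop iterates only $O(n)$ times.

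First I would identify the data we need to store persistently across iterations. The only long-lived state for each outer vertex $v \in \cirblcV$ is the triple $(\lin^t(v),\lout^t(v),p^t(v))$; since levels have the form $i + j/n$, each is $\widetilde{O}(1)$ bits and the whole table fits in $\widetilde{O}(n)$ space. The current set $U_t$ is also kept explicitly, again $\widetilde{O}(n)$. Crucially, we do \emph{not} store $\gadE_t$: to test whether $(u,v)$ is an edge of $\gadG_t[U_t]$ we simply scan $\cirblcE$ (available on the input tape) and check whether some $(x,y)\in\cirblcE$ satisfies $p^t(x)=u,\ p^t(y)=v$, which costs only $\widetilde{O}(1)$ additional space. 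This trick is what makes the whole scheme sublinear.

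Given this representation, I would walk through each algorithmic step. Finding a lowest gap-$2^+$ chord (lines 4--5) reduces to scanning the $O(n^2)$ candidate pairs in $U_t\times U_t$, testing each via the above edge oracle: $\widetilde{O}(1)$ space. Updating $\gadV_t,U_t$ and invoking \textsf{MakePlanar} on the lower area of $e_*^t$ (lines 7--10) only touches gap-$1$ chords, of which there are $O(n)$, so $\widetilde{O}(n)$ space suffices to enumerate them, detect crossings and record the path function $\gadK_{t+1}$. In Algorithm~\ref{alg:change_label} the sets $S^\ell,S^u,T^\ell,T^u$ all have size $O(n)$ and can be built and sorted in $\widetilde{O}(n)$ space; Algorithm~\ref{alg:calc_level} runs in $O(n^2)$ time with only $\widetilde{O}(1)$ workspace beyond the level table. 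The label-assignment loops (lines 8--24) each query only already-stored in/out-levels, and the parent update (lines 25--27) reuses the enumeration of the lower area. The output step (line 12 of Algorithm~\ref{alg:trans_gadget}) must recompute the labels on gap-$1$ chords $(v_*^p,v_*^q)$; these were originally defined by lines 13--14 of Algorithm~\ref{alg:change_label} as a max or min over vertices $v$ with $p^t(v)\in\{v_*^p,v_*^q\}$, so a single linear sweep of $\cirblcV$ reconstructs them in $\widetilde{O}(1)$ extra space.

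For time and loop count, I would invoke Lemma~\ref{lem:trans_gadget_small}: each iteration strictly shrinks $|U_t|$, giving at most $n$ iterations. Since every subroutine above runs in time polynomial in $n$ (the dominant cost being repeated $O(n^2)$ scans over $U_t\times U_t$ to simulate the edge oracle), the overall running time is polynomial. I do not foresee a serious obstacle here: the only subtle point is the implicit representation of $\gadE_t$ via $\cirblcE$ and the parent function, and the recomputation of gap-$1$ chord labels at output time, both of which I have already described. Assembling these observations yields the claimed $\widetilde{O}(n)$-space, polynomial-time bound.
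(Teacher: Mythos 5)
Your proposal is correct and follows essentially the same approach as the paper: it relies on the same implicit-edge-oracle trick (querying $\cirblcE$ through the parent map $p^t$ instead of storing $\gadE_t$), bounds the work per iteration by the $O(n)$ gap-$1$ chords in the lower area and the $O(n)$-size sets $S^\ell,S^u,T^\ell,T^u$, recomputes gap-$1$ chord labels at output time by a max/min sweep over $\cirblcV$, and caps the iteration count at $n$ via Lemma~\ref{lem:trans_gadget_small}.
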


From Lemma~\ref{lem:main_lem1}, \ref{lem:main_lem2} and \ref{lem:alg_complexity},
we can obtain desired $\gadplG$ with $\tldO(n) = \tldO(N^{1/3})$ space and polynomial time.

\section{Apply {\sf PlanarReach} to a Gadget Graph}
By applying {\sf PlanarReach} to the obtained plane gadget graph $\gadplG$ with $O(N^{2/3})$ vertices,
we can prove Theorem~\ref{thm:main}.
In this section,
we explain how to apply {\sf PlanarReach} to a plane gadget graph,
which has labels in edges.
We have to modify {\sf PlanarReach} slightly.
We now describe the outline of the algorithm {\sf PlanarReach}.
The notion of ``separator'' is central to the algorithm.

\begin{definition}
For any undirected graph $G = (V, E)$ and for any constant $\rho$, $0 < \rho < 1$,
a subset of vertices $S$ is called a \newwd{$\rho$-separator}
if (i) removal of $S$ disconnects $G$ into two subgraphs $A$ and $B$, and
(ii) the number of vertices of any component is at most $\rho\cdot |V|$.
The size of separator is the number of vertices in the separator.
\end{definition}

It is well known that every planar graph with $n$ vertices has a $(2/3)$-separator of size $O(\sqrt{n})$ \cite{gazit1987parallel, lipton1979separator},
and we refer an algorithm which obtains such a separator as {\sf Separator}.

Let $G=(V, E)$, $s$ and $t$ be the given input;
that is,
$G$ is a directed graph, and $s$ and $t$ are the start and goal vertices in $V$.
We assume that $t$ is reachable from $s$ in $G$,
and explain that the algorithm confirms it.
We use $\underline{G}$ to denote an underlying undirected graph of $G$.
The algorithm first uses {\sf Separator} to compute a separator $S$ of size $O(\sqrt{n})$ for $\underline{G}$,
and suppose $G$ is divided into two subgraphs $G[V_0]$ and $G[V_1]$ by $S$
($V_0 \cap V_1 = \emptyset$, $V_0 \cup V_1 \cup S = V$).
Let us fix a path $p$ from $s$ to $t$.
The path $p$ is divided into some $k$ subpaths $p_1, p_2, \ldots, p_k$ by $S$.
Note that the end vertex $u_i$ of $p_i$ is on $S$
and whole path $p_i$ is in either one of $G[V_0\cup S]$ and $G[V_1\cup S]$.
Suppose $p_1$ is in $G[V_0\cup S]$.
By searching in $G[V_0 \cup S]$,
we can find $u_1$ is reachable from $s$.
The algorithm remembers it and
searches $G[V_1 \cup S]$ from $u_1$ in the next step.
Then we can find $p_2$, namely $u_2$ is reachable from $u_1$ and $s$.
By repeating this procedure,
we can confirm $u_i$ is reachable from $s$ for any $i$.
More precisely,
for each vertex $v \in S$,
we keep a boolean value which represents reachability from $s$.
In each searching step,
we start the search from vertices in $S$ whose boolean values are true.
We use this reachability algorithm recursively when searching $G[V_b \cup S]$ $(b\in \{0, 1\})$.
Algorithm~\ref{alg:planarreach} is a pseudo code for this algorithm.
In the actual algorithm,
we have to control the recursion more carefully,
but this is enough for explaining where to modify the algorithm for gadget graphs.

\begin{algorithm}
\caption{{\sf PlanarReach}$(G= (V, E), V_s, R[V_s], V_t)$}\label{alg:planarreach}
\begin{algorithmic}[1]
\REQUIRE A planar graph $G$, start vertices $V_s$, a boolean array $R[V_s]$ for $V_s$, end vertices $V_t$.
\ENSURE Return a boolean array $R[V_t]$ for $V_t$. For any $v\in V_t$, $R[v]$ is true if and only if $v$ is reachable from some vertex $u \in V_s$ such that $R[u]$ is true.
\IF{the size of $V$ is small enough}
\STATE use a standard BFS algorithm and compute $R[V_t]$.
\RETURN $R[V_t]$
\ELSE
\STATE Run {\sf Separator} and obtain a separator $S$ ($G$ is divided into $G[V_0]$ and $G[V_1]$).
\STATE $R[S] = {\sf PlanarReach}(G[V_0 \cup S \cup V_s], V_s, R[V_s], S)$
\WHILE {unsearched paths remain}
\STATE $R[S] = {\sf PlanarReach}(G[V_0 \cup S], S, R[S], S)$ 
\STATE $R[S] = {\sf PlanarReach}(G[V_1 \cup S], S, R[S], S)$ 
\ENDWHILE
\RETURN ${\sf PlanarReach}(G[V_1 \cup S \cup V_t], S, R[S], V_t)$
\ENDIF
\end{algorithmic}
\end{algorithm}

\begin{figure}
\begin{center}
\includegraphics[width= 7 cm]{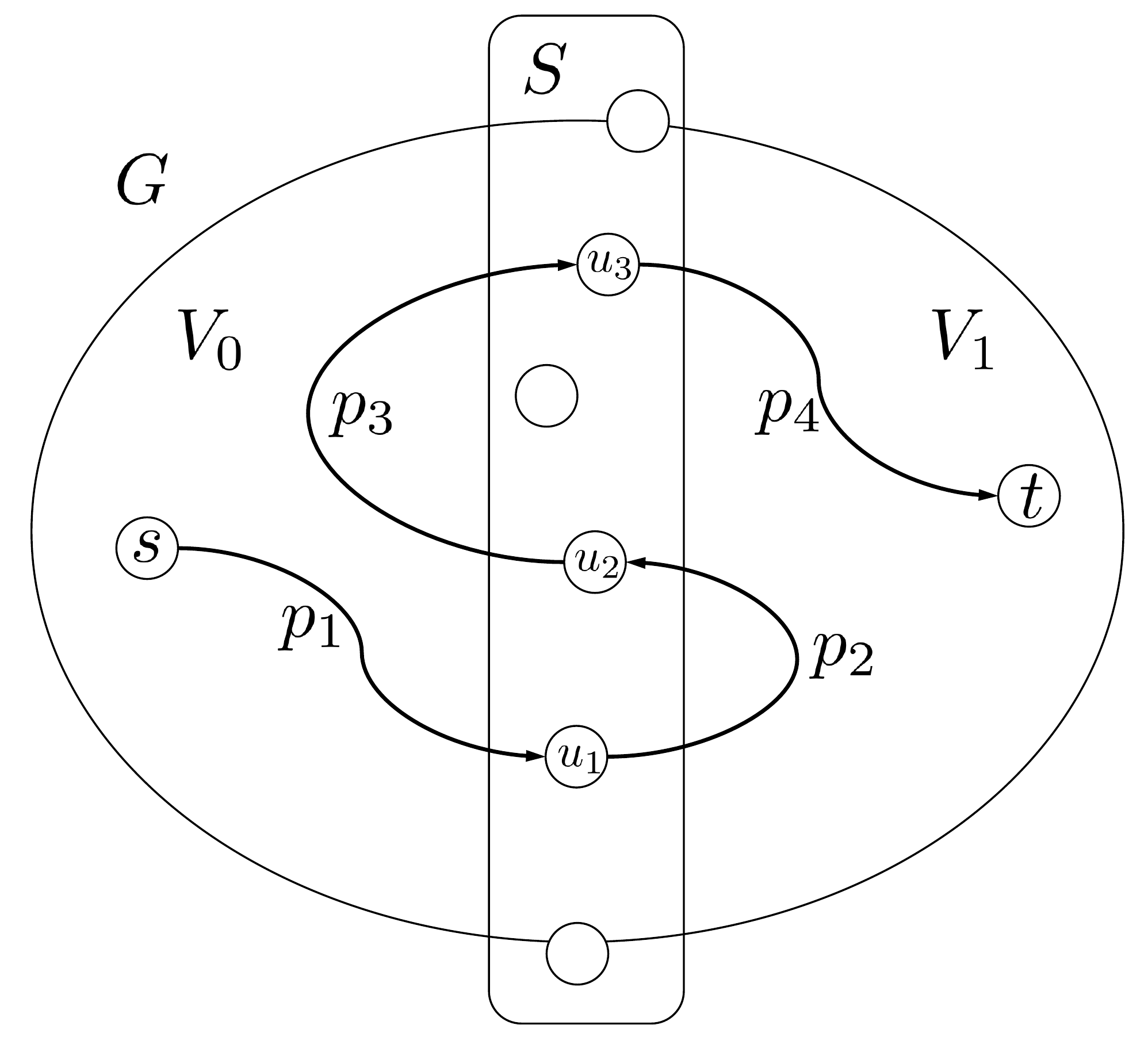}
\caption{An example of a separator $S$ and separated paths.}
\label{fig:alg_sqrt}
\end{center}
\end{figure}

Now,
we explain where to modify.
Let $\gadplG = (\gadplV, \gadplE, \gadplK, \gadplL)$ be an input plane gadget graph of {\sf PlanarReach}
and $N$ be the number of vertices of an input grid graph of Algorithm~\ref{alg:trans_gadget}.
Consider a gadget graph $\gadplG' = (\gadplV', \gadplE', \gadplK', \gadplL')$ which is a subgraph of $\gadplG$.
While we execute {\sf PlanarReach},
for every $v \in S$,
we have to keep a boolean value whether $v$ is reachable from $s$ with using $O(|S|)$ space.
For $\gadplG'$,
instead of the boolean value,
we keep the maximum level that a token starting from $s$ could have at $v$.
When $v$ is equal to $t(\gadplK'(e))$ for some edge $e$,
we should keep a specific level that a token can have at $v$ when the token used the edge $e$ last.
Such a vertex is made by {\sf MakePlanar},
and we should keep at most two specific levels for a vertex.
Thus we use $\widetilde{O}(|S|)$ space for preserving them,
and we can still obtain an $\widetilde{O}(N^{1/3})$ space algorithm.

For $\gadplG'$,
we use Algorithm~\ref{alg:reach_gadget} like Bellman-Ford algorithm instead of BFS.
Algorithm~\ref{alg:reach_gadget} takes as input $\gadplG'$, a start vertex $s$, an initial level $\ell_s$ and an edge restriction $r \in \gadplE \cup \{\bot\}$.
For any $v \in \gadplV'$,
the algorithm computes the maximum level that a token starting from $s$ with a level $\ell_s$ can have at $v$.
When $v$ is equal to $t(\gadplK'(e))$ for some edge $e$,
the algorithm calculates the maximum level that a token can have at $v$ when the token used the edge $e$ last.
In Algorithm~\ref{alg:reach_gadget},
$A[v_e]$ means that
the maximum level
that a token can have at $v$ with using the edge $e$ last,
and
$A[v_{\bot}]$ means that
the maximum level
that a token can have at $v$ with using an edge $e$ last such that $\gadplK'(e) = \bot$.
At the end of $t$-th while-loop,
for any $v\in \gadplV'$,
$A[v_*]$ has the maximum level
which we can have at $v$ within $t$ steps
by starting from $s$ with level $\ell_s$.
At line 4,
we use two mappings $k$ and $\gadK^{-1}$, and they are defined as follows:
\[
	k(e) = \begin{cases}
    \bot & if\  \gadplK'(e) = \bot \\
    e & otherwise
  \end{cases},\ 
  \gadK^{-1}(e) = \begin{cases}
  e' & \exists e',\  \gadplK'(e') = e\\
  \bot & otherwise
  \end{cases}
\]
Since the value $A[\cdot]$ changes no more than two times with the same label,
the while-loop will terminate in $|\gadplL'|$ steps
where $|\gadplL'| = |\bigcup_{e\in \gadplE'} \gadplL'(e)|$. 
Thus the computation time for Algorithm \ref{alg:reach_gadget} is $O(|\gadplL'|^2)$.
An edge has at most $O(N^{1/3})$ labels,
thus the algorithm runs in polynomial time of $N$.

\begin{algorithm}
\caption{}\label{alg:reach_gadget}
\begin{algorithmic}[1]
\REQUIRE A gadget graph $\gadplG'=(\gadplV', \gadplE', \gadplK', \gadplL')$, start vertex $s$, initial level $\ell_s$, edge restriction $r \in \gadplE \cup \{\bot\}$.
\STATE initialize $A[v_\bot] = A[v_e] = -1$ for every $v\in \gadplV'$ and $e\in \gadplE'$ such that $h(e) = v$ except for $s$ and let $A[s_r] = \ell_s$
\WHILE {$A$ was changed in the previous loop}
\FORALL {$e \in \gadE$}
\STATE $A[h(e)_{k(e)}] \leftarrow \max(A[h(e)_{k(e)}], \max\{b\ |\ a\rightarrow b \in \gadplL'(e),\ A[t(e)_{\gadK^{-1}(e)}] \ge a\})$
\ENDFOR
\ENDWHILE
\STATE output $A$
\end{algorithmic}
\end{algorithm}

\section{Conclusion}
We presented an $\widetilde{O}(n^{1/3})$ space algorithm for the grid graph reachability problem.
The most natural question is whether we can apply our algorithm to the planar graph reachability problem.
Although the directed planar reachability is reduced to the directed reachability on grid graphs \cite{allender2009planar},
the reduction blows up the size of the graph by a large polynomial factor and hence it is not useful.
Moreover,
it is known that there exist planar graphs that require quadratic grid area for embedding \cite{valiant1981universality}.
However we do not have to stick to grid graphs.
We can apply our algorithm to graphs which can be divided into small blocks efficiently.
For instance we can use our algorithm for king's graphs \cite{chang2013algorithmic}.
More directly,
for using our algorithm, it is enough to design an algorithm that divides a planar graph into small blocks efficiently.

\end{document}